\documentclass{article}
\usepackage[a4paper, left=2cm,right=2cm,top=2.5cm,bottom=2cm]{geometry}
\usepackage[T1]{fontenc}
\usepackage[utf8]{inputenc}
\usepackage{lmodern}
\usepackage[english]{babel}
\usepackage{natbib}
\usepackage{graphicx}
\usepackage{amsmath,amsthm,amssymb}
\usepackage{braket}
\usepackage{tikz}
\usepackage{qcircuit}
\usepackage{algorithm}
\usepackage{algpseudocode}
\usepackage{bbm}
\usepackage{authblk}
\usepackage{bm}        
\usepackage{booktabs}
\usepackage{mathtools}
\usepackage{xcolor}
\usepackage{multirow}
\usepackage{hyperref}

\usepackage{array}
\usepackage{ragged2e}

\newcommand{\dd}{\mathrm{d}}

\newcommand{\Rbb}{\mathbb{R}}
\newcommand{\Tcal}{\mathcal{T}}
\newcommand{\Ocal}{\mathcal{O}}
\newcommand{\Dcal}{\mathcal{D}}

\newtheorem{remark}{Remark}

\newtheorem{theorem}{Theorem}
\newtheorem{lemma}{Lemma}
\newtheorem{proposition}{Proposition}
\newtheorem{corollary}{Corollary}
\newtheorem{result}{Result}

\newcommand{\code}[1]{\ifmmode\text{\ttfamily #1}\else\texttt{#1}\fi}
\DeclarePairedDelimiterX\ketbra[2]{\delimsize\vert}{\delimsize\vert}{#1\rangle\!\langle#2}

\newcommand{\hdashline}{\noalign{\hbox to\dimexpr0.91\textwidth+6\tabcolsep\relax{\leaders\hbox to 4pt{\hss\rule{2pt}{0.4pt}\hss}\hfill}}}

\title{Gate-Level Quantum Simulation of Nonunitary Linear Dynamics with Hybrid Oscillator--Qubit Architecture}

\author[1,2]{Elin Ranjan Das}
\author[2]{Muqing Zheng\thanks{muqing.zheng@pnnl.gov}}
\author[2]{Rishab Dutta}
\author[2,3]{Ang Li}
\author[2]{Timothy Stavenger}
\author[1,4,5]{Yuan Liu\thanks{q\_yuanliu@ncsu.edu}}

\affil[1]{Department of Electrical and Computer Engineering, North Carolina State University, Raleigh, North Carolina 27695, USA}
\affil[2]{Pacific Northwest National Laboratory, Richland, WA, USA, 99354}
\affil[3]{Department of Electrical and Computer Engineering, University of Washington, Seattle, WA, USA, 98195}
\affil[4]{Department of Computer Science, North Carolina State University, Raleigh, North Carolina 27695, USA}
\affil[5]{Department of Physics and Astronomy, North Carolina State University, Raleigh, North Carolina 27695, USA}

\begin{document}

\maketitle

\begin{abstract}
We translate the one-mode unitary-dilation framework for nonunitary linear dynamics into a gate-level hybrid oscillator--qubit architecture. An ancillary oscillator encodes the integral kernel through state preparation and postselection. A qubit register represents and simulates the discretized system operator. The construction applies to time-independent dynamics \(\dot u=-(L+iH)u\), including discretized partial differential equations, and removes the \(\mathcal{O}(\log M_a)\) ancilla-qubit overhead of a discrete-variable (DV) \(M_a\)-term quadrature register. We bound the squeezed-Fock coefficient-projection error of the ideal kernel state. It decays superalgebraically with cutoff \(N\) for Schwartz-class kernels and at a stretched-exponential rate under stronger joint decay and smoothness assumptions. The finite squeezed-Fock kernel state generically has stellar rank \(N-1\), making \(N\) a discrete measure of the oracle's non-Gaussian resource. For hybrid oscillator--qubit evolution, a \(p\)th-order product formula requires \(\mathcal{O}(t^{1+1/p}N_{\mathrm{Fock}}^{(p+1)/(2p)}\epsilon_t^{-1/p})\) Trotter steps in the worst case, up to generator-dependent commutator factors, to reach error \(\epsilon_t\), where \(N_{\mathrm{Fock}}\) is the oscillator dimension. A perturbation bound separates the total scaled-map error from the physical postselection probability. We benchmark Law--Eberly synthesis and assess a variational SNAP+\(\mathcal D\) route at the state-preparation level on discretized heat-equation instances. For full circuit-level maps of one-dimensional heat and non-normal advection--diffusion instances up to \(D=32\), the fixed-scale map error is at most \(1.48\%\) and the conditional infidelity at most \(4.68\times10^{-4}\) over all computational-basis inputs. 
At kernel parameters selected on the one-dimensional family, a \(4\times4\) two-dimensional stress case reaches \(7.40\%\) fixed-scale error under a reference norm shrunk by the stronger two-dimensional damping, with worst-input conditional infidelity \(6.20\times10^{-3}\).
At the prescribed DV sizing, the hybrid CV--DV route has smaller fixed-scale error in all ten instances. The DV route accepts with fewer repetitions in every instance. These results provide a block-by-block finite-size resource account of when a single continuous qumode can replace a discretized ancilla register.
\end{abstract}

\section{Introduction}

Hamiltonian simulation approximates time-evolution operators and underlies many quantum algorithms for differential equations~\cite{Berry2015}. The linear combination of unitaries (LCU) method represents target evolution as a weighted sum of unitary operations with provable complexity bounds~\cite{wiebelcu, Berry_2015, Berry_2014}. Other quantum PDE algorithms formulate PDEs as linear systems~\cite{Montanaro_2016, Childs2021highprecision, liu2021carleman}. Schr\"odingerisation~\cite{Jin2024,jin2025schrodingerizationbasedquantumalgorithms} and linear combination of Hamiltonian simulation (LCHS)~\cite{LCHS1,LCHS2} construct evolution operators with controllable error scaling. Existing gate-level implementations and resource analyses of LCHS use discrete-variable (qubit-based) architectures, which require discretizing the continuous integration variable.

Continuous-variable (CV) quantum systems such as harmonic oscillators realized in photonic modes or superconducting cavities naturally encode infinite-dimensional Hilbert spaces~\cite{Braunstein_2005, Weedbrook_2012}. These systems are well suited for representing bosonic modes and field-theoretic models without the need for coarse discretization~\cite{Marshall_2015}. Building on this, CV-based quantum algorithms for solving partial differential equations exploit the native continuous degrees of freedom to bypass the qubit overhead typically associated with high-dimensional systems~\cite{Jin_2024}. By contrast, qubit-based platforms provide strong nonlinearity, robust control, and efficient implementation of conditional operations. Hybrid oscillator-qubit architectures combine the advantages of both paradigms, offering expanded computational expressiveness and improved resource efficiency~\cite{kemper2025hybridcontinuousdiscretevariablequantumcomputing, liuhybrid2026}.

Implementing LCHS on a hybrid CV--DV architecture requires more than substituting an oscillator for the qubit quadrature register. The continuous LCHS kernel must be encoded in a normalizable, finite-energy oscillator state. This encoding limits the postselected-oracle accuracy and dominates the measured circuit-level error in Section~\ref{sec:clean_experiments}. The dominant encoding error has two components. The coefficient cutoff truncates the squeezed-Fock expansion, and the finite oscillator embedding truncates the representation required for the unbounded operator \(\hat{x}\). In these experiments, synthesis and its non-Gaussian cost, product-formula error, and postselection normalization make smaller contributions.

The central task is to construct the finite-energy encoding, separate analytically controlled errors from finite-model effects measured numerically, and account for circuit and postselection costs. Section~\ref{sec:methodology} carries out this analysis.

We construct a hybrid oscillator--qubit formulation of LCHS for time-independent linear dynamics \(\dot u=-(L+iH)u\) with \(L=L^\dagger\succeq0\) and \(H=H^\dagger\), including spatially discretized PDE systems of the same form. An ancillary oscillator encodes the continuous LCHS variable, and oscillator-state preparation, joint CV--DV evolution, and postselection implement the recovery kernel. The construction starts from the established one-mode dilation of Schr\"odingerisation and qumodisation~\cite{https://doi.org/10.1111/sapm.70047} and the existing continuous LCHS representation with the generalized exact kernel. Our contribution is the finite-resource oscillator realization of this recovery filter and the block-by-block approximation analysis in Section~\ref{sec:methodology}.
We also present two circuit-level methods for preparing the truncated non-Gaussian initial oscillator state: Law--Eberly (LE) synthesis using Jaynes--Cummings (JC) pulses and qubit rotations, and a variational SNAP+$\Dcal$ approach based on displacement and selective number-dependent arbitrary phase (SNAP) gates.
We evaluate the framework on heat equations with three boundary conditions, larger one-dimensional grids, a two-dimensional heat stress case, and non-normal advection--diffusion generators.

The rest of this article is organized as follows. Section~\ref{sec:main-results} summarizes the main results, Section~\ref{sec:background} reviews the LCHS kernel and the quadrature-register cost of its DV implementation, and Section~\ref{sec:methodology} develops the finite-resource oscillator realization together with its truncation, synthesis, simulation, and postselection analyses. Section~\ref{sec:examples} constructs the $L$ and $H$ oracles for representative ODE and PDE examples, Section~\ref{sec:simulation} describes the circuit-level realization, Section~\ref{sec:clean_experiments} presents the numerical experiments and the DV comparison, and Section~\ref{sec:conc} concludes.

Throughout this paper, we use the $\hbar=2$ convention. Appendix~\ref{app:notation} summarizes the notation used repeatedly in the main text.


\section{Summary of Main Results}
\label{sec:main-results}

For the benchmark instances, the hybrid oscillator-qubit formulation of LCHS combines analytical error control with a circuit-level oscillator-state preparation route and uses a substantially more compact ancillary-oracle description than the DV LCHS baseline.
Our work targets regimes where the quadrature-register cost of standard LCHS is the dominant limitation.

\subsection{Finite-Resource Oscillator Realization of LCHS}
\label{ss:lchs}

We now state the oscillator realization on which the remainder of the paper is built. Its joint oscillator--register dilation is the established one-mode construction shared with Schr\"odingerisation \cite{Jin2024} and qumodisation \cite{https://doi.org/10.1111/sapm.70047} (Section~\ref{ss:dilation}), and the result below expresses an admissible exact-LCHS kernel as an oscillator preparation-and-postselection matrix element.

\begin{result}[\textbf{Informal version of Theorem~\ref{thm:CV--DV-lchs}}]
Let $A\in \mathbb{C}^{D\times D}$ admit the decomposition \(A=L+iH\) with \(L=L^\dagger\succeq0\) and \(H=H^\dagger\). 
Let \(g\in L^1(\mathbb R)\) be an admissible exact-LCHS kernel for the pair \((L,H)\), in the sense that it satisfies the operator identity established under the hypotheses of Refs.~\cite{LCHS1,LCHS2},
\begin{align*}
\int_{\mathbb R}g(x)e^{-it(xL+H)}\,\dd x  = e^{-t(L+iH)}, \qquad t\geq0.
\end{align*}
Suppose normalized oscillator wavefunctions \(\phi\) and \(\psi\) satisfy
\(g(x)=\alpha_g\phi^*(x)\psi(x)\)
with known \(\alpha_g\). Then the resulting effective system operator is
\begin{align}
\boxed{
\mathcal K_g(t)
= \alpha_g (\bra{\phi}_{\rm osc}\otimes\mathbb I_q)
e^{-it(\hat{x}\otimes L+\mathbb I_{\rm osc}\otimes H)}
(\ket{\psi}_{\rm osc}\otimes\mathbb I_q)
= e^{-At}}.
\end{align}
Every kernel satisfying the cited exact-LCHS hypotheses and admitting a normalized oscillator factorization has an oscillator preparation-and-postselection realization. The scalar Fourier condition in Theorem~\ref{thm:CV--DV-lchs} supplies only a consistency check for the noncommuting operator identity.
\end{result}

For the rest of this work, we use the generalized exact-LCHS kernel \(g_\beta\) in Eq.~\eqref{eq:kernel_def} as an analytically explicit benchmark, while keeping the exact normalization factor \(\alpha_g\) explicit. This choice is discussed in Section~\ref{ss:kernel} and Table~\ref{tab:kernel-comparison}, with a measured comparison against the Huang--An weight at a shared operating point in Table~\ref{tab:ha-comparison}.
The target function $g(x)$ derived from the LCHS kernel is encoded through the overlap of a prepared initial oscillator state $|\psi\rangle_{\rm osc}$ and a postselected squeezed vacuum state $\langle\phi|_{\rm osc}$. This mapping removes the explicit $\Ocal(\log M_a)$ ancilla-qubit overhead that DV quadrature registers require for $M_a$ discretized integral terms, replacing it with $\mathcal{O}(1)$ ancillary oscillators. The formulation outlines the extension to inhomogeneous systems by utilizing a second oscillator to encode the required time-window weight. The formulation is discussed in Appendix~\ref{ss:inh}. 

In practice, the oscillator state \(\ket{\psi}_{\rm osc}\) is implemented using a normalized finite squeezed-Fock expansion with coefficient cutoff \(N\), as described in Section~\ref{ss:state-prep}. The implemented finite construction is measured against the ideal kernel state through the finite-model error \(\epsilon_{\rm model}\) defined in Section~\ref{ss:postselection}.

\begin{result}[\textbf{Informal version of Theorem~\ref{thm:state-prep}}]
The orthogonal squeezed-Fock projection of the ideal kernel state converges to it in the \(L^2\) norm. For Schwartz-class targets, the convergence is superalgebraic in the truncation dimension \(N\), while stronger joint decay and smoothness assumptions yield stretched-exponential convergence.
\end{result}

The hybrid oscillator-qubit evolution introduces an additional approximation error associated with product-formula simulation. The resulting Trotterization cost depends on both the truncated oscillator operators and the commutator structure of the encoded generators.

\begin{result}[\textbf{Informal version of Theorem~\ref{thm:trotter}}]
For the hybrid oscillator-qubit evolution, a \(p\)th-order product formula requires
\(
n_t=\Ocal\!\left[
t^{1+1/p}
N_{\rm Fock}^{(p+1)/(2p)}\epsilon_t^{-1/p}
\right]
\)
Trotter steps to achieve simulation error \(\epsilon_t\), in the worst case and up to generator-dependent commutator factors.
Thus, \(N\) controls the squeezed-Fock coefficient approximation, whereas \(N_{\rm Fock}\) controls the finite oscillator representation and the cutoff dependence of the product-formula bound.
\end{result}

Because the oracle is implemented through postselection, approximation errors in the hybrid evolution also affect the success probability of the protocol.

\begin{result}[\textbf{Informal version of Theorem~\ref{thm:postselection-success}}]
Let \(\ket{u_0}\) be normalized. At the displayed oscillator dimension \(N_{\rm Fock}\), let \(\ket{\phi_r}\) and \(\ket{\psi_N}\) denote the normalized states produced by the finite squeezing-matrix construction in Eq.~\eqref{eq:finite-target-states}, with their \(N_{\rm Fock}\) dependence suppressed in the notation. Assume \(\braket{\phi_r|\psi_N}\neq0\), and define
\begin{align*}
\alpha_{N,r}:=\frac{1}{\braket{\phi_r|\psi_N}}.
\end{align*}
Let \(\widetilde{\mathcal K}_{N,N_{\rm Fock},n_t}(t)\) denote the
unscaled finite postselected operator obtained with coefficient cutoff
\(N\), oscillator dimension \(N_{\rm Fock}\), and \(n_t\)
product-formula steps. Define
\begin{align*}
p_{\rm succ}^{(N,N_{\rm Fock},n_t)}(t;u_0)
&:=
\left\|
\widetilde{\mathcal K}_{N,N_{\rm Fock},n_t}(t)\ket{u_0}
\right\|_2^2,\\
\epsilon_{\rm tot} &:= \left\| \alpha_{N,r} \widetilde{\mathcal K}_{N,N_{\rm Fock},n_t}(t) -e^{-At} \right\|_2,\\
p_{\rm ref}^{(N,r)}(t;u_0) &:= \frac{\|e^{-At}\ket{u_0}\|_2^2} {|\alpha_{N,r}|^2}.
\end{align*}
Then
\begin{align*}
\left| p_{\rm succ}^{(N,N_{\rm Fock},n_t)}(t;u_0) - p_{\rm ref}^{(N,r)}(t;u_0) \right|
\leq
\frac{ 2\|e^{-At}\ket{u_0}\|_2\epsilon_{\rm tot} +\epsilon_{\rm tot}^2 }{ |\alpha_{N,r}|^2 }.
\end{align*}
Thus, the total scaled-map error $\epsilon_{\rm tot}$  controls the finite physical probability
relative to the comparison probability defined with the same finite
overlap scale.
\end{result}

\subsection{Performance and Resource Benchmarks}

The analytical bounds above are each confronted with a corresponding measured quantity (Sections~\ref{subsec:joint_tradeoff} and~\ref{sec:clean_experiments}). The measured coefficient-projection error is consistent with the stretched-exponential upper-bound form of Corollary~\ref{cor:near-optimal-kernel}, and the measured finite-model errors in the three \(M=4\) one-dimensional boundary-condition benchmarks lie more than an order of magnitude below the map-level bound of Lemma~\ref{lem:finite-r-map} for the implemented finite kernel, which evaluates to \(\|g_{N,r}-g\|_{L^1}=8.3\times10^{-2}\) at the selected parameters. The measured product-formula error exhibits first-order scaling over the evaluated Trotter-count range of Theorem~\ref{thm:trotter} and is at numerical precision for the periodic generator, whose commuting Pauli terms make the certified commutator sums vanish. For the Dirichlet benchmark, the first-order bound evaluated with its exact commutator structure certifies an error about $1.2\times10^{3}$ times the measured one at the operating point (Section~\ref{subsec:joint_tradeoff}). The measured postselection-probability deviations remain within the bound of Theorem~\ref{thm:postselection-success}, reaching \(62\%\) of it in the Dirichlet benchmark.

Full circuit-level breadth experiments on one-dimensional heat and non-normal advection--diffusion instances up to $D=32$ keep the fixed-scale map error $\varepsilon_F$ at or below $1.48\%$, while the $4\times4$ two-dimensional stress case at the same kernel parameters reaches $7.40\%$. Section~\ref{sec:clean_experiments} reports the instance-by-instance results.

A DV LCHS baseline sized by its own prescription at accuracy target $\epsilon=0.1$ puts $192$ to $632$ quadrature terms on $8$ to $10$ ancillary qubits against $32$ squeezed-Fock coefficients on one oscillator. At that sizing the CV--DV route attains the smaller fixed-scale error in every instance, while the DV block accepts after fewer attempts (Section~\ref{ss:dv-compare}).

For the 100-step Trotterized Dirichlet heat benchmark, the hybrid formulation reduces the ancillary-state oracle description size from the DV quadrature size $M_{\rm DV}=320$ to $N=32$ squeezed-Fock coefficients. In the Dirichlet benchmark, the Trotter-block two-qubit cost changes from $5600$ CX gates in the DV circuit, with each CRZ decomposed into two CX gates, to $400$ CX gates, $300$ controlled displacements ($c\Dcal$s), and $100$ displacements ($\Dcal$s) in the hybrid circuit, with no CX-equivalent cost assigned to the oscillator operations. Law--Eberly synthesis of the kernel state shared by all benchmark instances uses $31$ JC pulses together with one outer Gaussian squeezing, and the JC-pulse count coincides with the stellar rank $N-1=31$.


\section{Background and Motivation}
\label{sec:background}

Consider the linear ordinary differential equation
\begin{equation}
    \frac{\dd u(t)}{\dd t}
    = -A(t)u(t)+b(t), \qquad u(0)=u_0 . \label{eq:ode}
\end{equation}
LCHS provides an efficient quantum framework for solving such linear ODEs~\cite{LCHS1,LCHS2}. The solution can be written as
\begin{align}
    u(T) = \Tcal e^{-\int_0^T A(s)\dd s}u_0 + \int_0^T \Tcal e^{-\int_s^T A(s')\dd s'}b(s)\dd s , \label{eq:exp-sol}
\end{align}
where \(\Tcal\) denotes time ordering. Writing
\begin{align*}
A(t)=L(t)+iH(t),
\qquad
L(t)=L(t)^\dagger\succeq0,
\qquad
H(t)=H(t)^\dagger,
\end{align*}
LCHS expresses the dissipative evolution as a continuous linear combination of unitary evolutions:
\begin{align}
    \Tcal e^{-\int_0^t A(s)\dd s}
    &= \int_{\Rbb} g(k) \Tcal e^{-i\int_0^t(kL(s)+H(s))\dd s} \dd k , \label{eq:LCHS-sol-homo} \\
    \int^t_0 \Tcal e^{-\int^t_s A(s')\dd s'}b(s) \dd s 
    &= \int^t_0 \int_\Rbb g(k) \left[ \Tcal e^{-i \int^t_s (kL(s')+H(s'))\dd s'} \right]  b(s) \dd k \dd s.   \label{eq:LCHS-sol-inhomo}
\end{align}
The inhomogeneous contribution is treated analogously by inserting the same representation inside the Duhamel integral.
The weight function is defined as
\(
    g(k)=\frac{f(k)}{1-ik},
\)
and we use the generalized exact-LCHS kernel~\cite{LCHS2}
\begin{align}
    g_{\beta}(k) = \frac{e^{2^\beta}} {2\pi(1-ik)e^{(1+ik)^\beta}}, \qquad \beta\in(0,1), \label{eq:kernel_def}
\end{align}
where \(\beta\) controls the analyticity and decay of the kernel. Here and below, \((1+ik)^\beta\) is defined using the principal branch.

In standard LCHS, the continuous integral is truncated and discretized, producing an LCU circuit with a quadrature register of size \(\lceil\log_2(M_a)\rceil\). For the complementary solution,
\begin{align*}
M_a \in  \Ocal\!\left(  T\max_t\|L(t)\|   \left(\log\frac1\epsilon\right)^{1+1/\beta}  \right),
\end{align*}
so the cost includes both this ancilla register and the corresponding multi-controlled Hamiltonian-simulation blocks~\cite{LCHS2}. Although recent constant-factor analyses improve runtime estimates~\cite{pocrnic2025constant}, the discretized quadrature register remains a significant circuit-level overhead.

Motivated by continuous LCU~\cite{Chakraborty2024implementingany,bell2025co}, we study a hybrid CV--DV version of LCHS. In this approach, \(\Ocal(1)\) ancillary oscillators, together with CV state preparation and CV measurement, replace the discrete quadrature register, while the Hamiltonian simulation remains on the qubit register. This removes the explicit discrete quadrature register and its ancilla-qubit overhead and shifts the remaining approximation errors to CV state preparation and hybrid Hamiltonian simulation.


\section{Methodology}
\label{sec:methodology}
We now present the formal continuous--discrete variable dilation representation theorem.
Consider the homogeneous time-independent specialization of Eq.~\eqref{eq:ode}, $A(t)=A, \,\ b(t) = 0,$ whose solution is given by \(\ket{u(t)}=e^{-At}\ket{u_0}.\) 

\begin{theorem}[\textbf{Kernel-modular continuous--discrete variable LCHS}]
\label{thm:CV--DV-lchs}
Let \(A\in\mathbb C^{D\times D}\) admit the Cartesian decomposition
\begin{align}
A=L+iH,
\qquad
L\equiv\frac{A+A^\dagger}{2}\succeq0,
\qquad
H\equiv\frac{A-A^\dagger}{2i}.
\label{eq:cart}
\end{align}
Let \(g\in L^1(\mathbb R)\) be an admissible exact-LCHS kernel for the matrices \(L\) and \(H\) above, meaning that the hypotheses of the generalized exact-LCHS representation in Refs.~\cite{LCHS1,LCHS2} hold and hence
\begin{align}
\int_{\mathbb R}
g(x)e^{-it(xL+H)} \,\dd x = e^{-t(L+iH)},
\qquad t\geq0.
\label{eq:assumed-exact-lchs-identity}
\end{align}
In the scalar specialization, the kernel obeys the consistency relation
\begin{align*}
\int_{\mathbb R}
g(x)e^{-ixy}\,\dd x = e^{-y}, \qquad y\geq0.
\end{align*}
Suppose normalized oscillator wavefunctions \(\phi\) and \(\psi\) satisfy
\begin{align*}
g(x) = \alpha_g\phi^*(x)\psi(x),
\end{align*}
where \(\alpha_g\) is a known normalization factor. 
Define the unscaled physical postselection block
\begin{align*}
\mathcal B_g(t) &:= (\bra{\phi}_{\rm osc}\otimes\mathbb I_q)
e^{-it(\hat{x}\otimes L+\mathbb I_{\rm osc}\otimes H)}
(\ket{\psi}_{\rm osc}\otimes\mathbb I_q).
\end{align*}
The corresponding scaled operator is
\begin{align}
\mathcal K_g(t) &:= \alpha_g\mathcal B_g(t) = e^{-At}, \qquad t\geq0.
\label{eq:operator}
\end{align}
Consequently, for every initial state \(\ket{u_0}\in\mathbb C^D\),
\begin{align*}
\ket{u(t)} = \mathcal K_g(t)\ket{u_0} = e^{-At}\ket{u_0}.
\end{align*}
\end{theorem}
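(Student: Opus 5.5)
The plan is to diagonalize the oscillator factor of the joint generator using the position representation of \(\hat x\). Because \(\hat x\otimes L\) and \(\mathbb I_{\rm osc}\otimes H\) act on the oscillator only through \(\hat x\), the generator \(G:=\hat x\otimes L+\mathbb I_{\rm osc}\otimes H\) is decomposable with respect to the spectral resolution of \(\hat x\). On the improper eigenvectors \(\ket{x}\), normalized by \(\braket{x|x'}=\delta(x-x')\), it acts as
\[
G\,(\ket{x}\otimes\ket{v})=\ket{x}\otimes(xL+H)\ket{v}.
\]
Rigorously, I will regard the joint Hilbert space as \(L^2(\mathbb R)\otimes\mathbb C^D\cong L^2(\mathbb R;\mathbb C^D)\). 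There \(G\) is the self-adjoint multiplication operator by the Hermitian-matrix-valued function \(x\mapsto xL+H\). By the functional calculus, \(e^{-itG}\) is then multiplication by the matrix-valued function \(x\mapsto e^{-it(xL+H)}\). This is the first step, and it is where the convention of Appendix~\ref{app:notation}, with \(\hbar=2\) fixing the normalization of \(\hat x\), has to be respected so that the \(x\) appearing in \(g(x)\) is the same variable as the spectral variable of \(\hat x\).

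Next I evaluate the postselection block. Fix \(\ket{v}\in\mathbb C^D\) and note that \((\ket{\psi}\otimes\mathbb I_q)\ket v\) is the function \(x\mapsto\psi(x)v\). Applying \(e^{-itG}\) gives \(x\mapsto\psi(x)e^{-it(xL+H)}v\). Contracting with \(\bra{\phi}\otimes\mathbb I_q\) is the partial inner product, which yields
\[
\mathcal B_g(t)\ket v=\int_{\mathbb R}\phi^*(x)\psi(x)\,e^{-it(xL+H)}v\,\dd x .
\]
This integral is a well-defined Bochner integral. Indeed, \(\|e^{-it(xL+H)}\|_2=1\) because the matrix is unitary, and \(\phi^*\psi\in L^1\) by Cauchy--Schwarz, since \(\|\phi^*\psi\|_{L^1}\le\|\phi\|_{L^2}\|\psi\|_{L^2}=1\). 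Multiplying by \(\alpha_g\) and using the factorization \(g=\alpha_g\phi^*\psi\) gives
\[
\mathcal K_g(t)=\int_{\mathbb R} g(x)\,e^{-it(xL+H)}\,\dd x .
\]
By the assumed admissibility identity \eqref{eq:assumed-exact-lchs-identity}, the right-hand side equals \(e^{-t(L+iH)}=e^{-At}\) for \(t\ge0\). Applying this equality to \(\ket{u_0}\) yields the stated solution \(\ket{u(t)}=e^{-At}\ket{u_0}\).

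The scalar consistency relation follows by specializing to \(D=1\) with \(L=1\) and \(H=0\); the Cartesian hypotheses hold trivially in this case. The identity then reads \(\int g(x)e^{-itx}\dd x=e^{-t}\), and renaming \(t\) as \(y\ge0\) gives the stated Fourier condition. I will remark that this relation is only a necessary check: when \(L\) and \(H\) do not commute, the operator identity is not implied by the scalar one, which is why it is taken as a hypothesis from Refs.~\cite{LCHS1,LCHS2}.

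The main obstacle is making the formal manipulation with the improper states \(\ket{x}\) rigorous, namely justifying that the partial matrix element of \(e^{-itG}\) is the \(x\)-integral. I will handle this through the isomorphism with \(L^2(\mathbb R;\mathbb C^D)\). To check that \(e^{-itG}\) is multiplication by \(e^{-it(xL+H)}\), I will verify that this multiplication family is a strongly continuous unitary group whose generator agrees with \(G\) on the common core of compactly supported smooth functions. This avoids any unbounded-operator subtleties with the sum \(\hat x\otimes L+\mathbb I\otimes H\).
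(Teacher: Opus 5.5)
Your proposal is correct and follows the same route as the paper's proof in Appendix~\ref{proof:CV--DV-lchs}. You diagonalize $\hat x\otimes L+\mathbb I_{\rm osc}\otimes H$ in the position representation, use functional calculus to get multiplication by $e^{-it(xL+H)}$, contract with $\bra{\phi}$ to obtain $\int\phi^*\psi\,e^{-it(xL+H)}\,\dd x$, and then invoke $g=\alpha_g\phi^*\psi$ together with the assumed identity. Your $L^2(\mathbb R;\mathbb C^D)$ multiplication-operator picture and Bochner-integrability check make rigorous what the paper does formally with the improper states $\ket{x}$.

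Drop the aside that the scalar relation cannot imply the operator identity when $[L,H]\neq0$: for this time-independent identity it does imply it, given $g\in L^1$. The reason is that $z\mapsto e^{-it(zL+H)}$ is entire and contractive on the closed lower half-plane when $L\succeq0$ and $t\geq0$. A regularization plus Paley--Wiener argument then extends $\int_{\mathbb R} g(x)F(x)\,\dd x=F(-i)$ from the exponentials $e^{-ixy}$, $y\geq0$, to every such bounded analytic $F$.
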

\begin{proof}
    See Appendix~\ref{proof:CV--DV-lchs}.
\end{proof}
A formal treatment of a constant source term $b(t)=b$ through a second ancillary oscillator is outlined in Appendix~\ref{ss:inh}.

The ideal normalization factor \(\alpha_g\) remains explicit throughout the exact construction. The finite construction defined in Section~\ref{ss:state-prep} instead uses the overlap scale \(\alpha_{N,r}\). Recall that, under the $\hbar=2$ convention used throughout this paper, $[\hat x,\hat p]=2i$ and $\hat x=\hat a+\hat a^\dagger$. We take the position representation of the postselection state $\ket{\phi}_{\rm osc}$ from Theorem~\ref{thm:CV--DV-lchs} to be a squeezed vacuum with position-space wavefunction
\begin{align*}
\phi_r(x) &= \frac{1}{(2\pi\sigma^2)^{1/4}} \exp\!\left(-\frac{x^2}{4\sigma^2}\right), \quad \sigma=e^r .
\end{align*}
Formally imposing $\phi_r^*(x)\psi_r(x)=g(x)$ gives us the position representation of the initial oscillator state $\ket{\psi}_{\rm osc},$
\begin{align*}
\psi_r^{\rm form}(x) = (2\pi\sigma^2)^{1/4} g(x) \exp\!\left(\frac{x^2}{4\sigma^2}\right).
\end{align*}
For the generalized exact-LCHS kernel in Eq.~\eqref{eq:kernel_def}, this finite-$r$ function is generally not square-integrable. Thus, we use \(\psi_r^{\rm form}\) only as a formal coefficient-generation target. The prefactor \((2\pi\sigma^2)^{1/4}\) is common to all generated coefficients and cancels when the finite coefficient vector is normalized. After removing this common factor, the coefficient-generating shape \(g(x)\exp[x^2/(4\sigma^2)]\) converges pointwise to \(g(x)\) as \(r\to\infty\). Separately, we define the normalizable ideal kernel state by \(\psi_\infty(x):=\mathcal N g(x)\), where \(\mathcal N\) is its normalization constant.

The full algorithmic circuit is shown in Fig.~\ref{fig:main-circuit}. The oscillator state preparation oracle $U_\psi$ realizes $\ket{\psi}_{\rm osc}$, the middle block implements the joint oscillator-qubit evolution, and the inverse squeezing plus vacuum postselection realizes $\bra{\phi_r}_{\rm osc}$. The one extra ancillary qubit required by the LE protocol is not shown in the figure. For a successful oscillator postselection, the unnormalized system branch is $\ket{v_{N,N_{\rm Fock},n_t}(t)}=\widetilde{\mathcal K}_{N,N_{\rm Fock},n_t}(t)\ket{u_0}$ and the normalized conditional state is $\ket{u_{N,N_{\rm Fock},n_t}(t)} = \frac{ \ket{v_{N,N_{\rm Fock},n_t}(t)}}{\|\ket{v_{N,N_{\rm Fock},n_t}(t)}\|_2}$.

\begin{figure}[h]
    \centering
    \begin{align*}
    \Qcircuit @C=1.15em @R=1.15em {
        \lstick{\ket{0}_{\rm osc}}
            & \gate{U_{\psi}}
            & \ustick{\ket{\psi}_{\rm osc}} \qw
            & \multigate{1}{e^{-it(\hat{x}\otimes L+\mathbb I_{\rm osc}\otimes H)}}
            & \gate{S^\dagger(r)}
            & \measure{n}
            & \cw
            & \rstick{\text{post-select } n=0} \\
        \lstick{\ket{u_0}_q}
            & {/}\qw
            & \qw
            & \ghost{e^{-it(\hat{x}\otimes L+\mathbb I_{\rm osc}\otimes H)}}
            & \qw
            & \qw
            & \qw
            & \rstick{\ket{u_{N,N_{\rm Fock},n_t}(t)}_q}
    }
    \end{align*}
    \caption{Algorithmic circuit for CV--DV LCHS in the homogeneous time-independent setting. The state-preparation block $U_\psi$ loads the finite oscillator state, the middle block implements the hybrid evolution, and postselection after $S^\dagger(r)$ projects onto the squeezed-vacuum bra $\bra{\phi_r}_{\rm osc}=\bra{0}S^\dagger(r)$.}
    \label{fig:main-circuit}
\end{figure}

\subsection{Relation to Schr\"odingerisation and qumodisation}
\label{ss:dilation}

The term qumodisation and its systematic one-mode dilation formulation were introduced by Hu, Jin, Liu, and Zhang~\cite{https://doi.org/10.1111/sapm.70047}. Independently, Bell et al.~\cite{bell2025co} presented a construction that couples a qumode to a many-body quantum system for spectral filtering.
The enlarged Schr\"odingerisation state~\cite{Jin2024,jin2025schrodingerizationbasedquantumalgorithms} satisfies
\begin{align*}
\frac{{\rm d}}{{\rm d}t}w_h
&= -i(P_\mu\otimes H_1)w_h
+i(\mathbb I\otimes H_2)w_h \nonumber\\
&= -i\left( P_\mu\otimes H_1 - \mathbb I\otimes H_2 \right)w_h .
\end{align*}
For $u'(t)=-(L+iH)u(t)$, the substitutions $H_1=L, \qquad H_2=-H$ give
\begin{align*}
\mathcal H_{\rm Schr} = P_\mu\otimes L + \mathbb I\otimes H.
\end{align*}

Let \(p_\mu\) denote the spectral value of \(P_\mu\). The operator identification \(P_\mu=\hat{x}\) gives \(x=p_\mu\). In the Fourier convention in which \(P_\mu=-D_\mu\) and \(D_\mu\) has spectral value \(\mu\), this is equivalently \(x=-\mu\). Thus,
\begin{align*}
\mathcal H_{\rm Schr} = \hat{x}\otimes L + \mathbb I_{\rm osc}\otimes H = \mathcal H_{\rm CV-DV}.
\end{align*}
The equality of these one-mode generators is inherited from the established Schr\"odingerisation~\cite{Jin2024,Jin_2024,jin2025schrodingerizationmethodlinearnonunitary} and qumodisation dilation framework~\cite{https://doi.org/10.1111/sapm.70047} and is not a new solver identity.

The distinction studied here is the recovery map. The LCHS filter is encoded through normalized oscillator preparation and postselection, \(g(x)=\alpha_g\phi^*(x)\psi(x)\), followed by finite oscillator encoding, state synthesis, hybrid simulation, and postselection resource analysis. The same one-mode generator also admits an exact moment-matching interpretation, recorded in Appendix~\ref{app:moment-matching}.

Recent contour-based matrix decomposition (CBMD)~\cite{Wang_2026} places LCHS- and Schr\"odingerisation-type representations within a broader functional calculus for non-Hermitian matrix functions. We quantify the finite-size resource budget of Section~\ref{sec:methodology} for one continuous LCHS filter. The comparison with a discretized auxiliary register is architecture dependent, and general CBMD query complexity lies outside this analysis. For first-order exponential dynamics, CBMD gives a discrete sum of unitary Hamiltonian-simulation branches closely related to LCHS. The continuous LCHS integral admits the CV--DV encoding when its weight has the normalized factorization in Section~\ref{ss:kernel}. More general CBMD constructions use block encoding, QSVT, and an outer LCU to combine potentially nonunitary branch maps \(f(sH+L)\)~\cite{Wang_2026}. Extending the oscillator realization to those maps requires a new encoding and separate cutoff, state-synthesis, and postselection analyses.

\subsection{Kernel dependence and recent optimal constructions}
\label{ss:kernel}

The oscillator architecture is not restricted to the kernel \(g_\beta\) used in this work. Any admissible continuous LCHS weight \(g\) that admits a normalized factorization
\begin{align*}
g(k) = \alpha_g\phi_g^*(k)\psi_g(k)
\end{align*}
can, in principle, be represented through oscillator preparation and postselection. Matrix-query complexity does not determine the continuous-variable resources required to realize that factorization, among them the oscillator cutoff, photon number, non-Gaussianity, preparation depth, and sampling overhead. We make no optimality claim for \(g_\beta\) in either matrix queries or continuous-variable resources. Table~\ref{tab:kernel-comparison} separates the recovery guarantee proved by each construction from the oscillator cutoff and state-synthesis results that a hybrid implementation would additionally require.

\begin{table*}[t]
\centering 
\caption{Comparison of auxiliary-function constructions relevant to hybrid oscillator--qubit realizations. ``Not established'' denotes that the cited source does not provide the corresponding oscillator result.}
\label{tab:kernel-comparison}
\scriptsize
\setlength{\tabcolsep}{3pt}
\renewcommand{\arraystretch}{1.2}
\begin{tabular}{
>{\RaggedRight\arraybackslash}p{0.12\textwidth}
>{\RaggedRight\arraybackslash}p{0.14\textwidth}
>{\RaggedRight\arraybackslash}p{0.20\textwidth}
>{\RaggedRight\arraybackslash}p{0.12\textwidth}
>{\RaggedRight\arraybackslash}p{0.18\textwidth}
>{\RaggedRight\arraybackslash}p{0.14\textwidth}}
\toprule
Kernel family  & Recovery guarantee  & Tail / regularity  & Normalization  & Fock-cutoff bound  & Oscillator synthesis \\
\midrule
Generalized exact-LCHS kernel \(g_\beta\), \(0<\beta<1\)~\cite{LCHS2}
& Exact continuous LCHS identity for \(L(t)\succeq0\)
& Real-axis decay \(e^{-c|k|^\beta}\); exponential decay excluded within the exact analytic class
& Unit-integral weight; overlap factor \(\alpha_g\) from the factorization used here
& Coefficient cutoff \(N\) (Theorem~\ref{thm:state-prep}, this work); finite \(N_{\rm Fock}\) not established analytically
& Law--Eberly and SNAP+\(\Dcal\) (this work) \\ \midrule
Optimal Schr\"odingerisation initialization~\cite{jin2025schrodingerizationmethodlinearnonunitary}
& Exact warped-phase recovery with controlled finite-domain and discretization errors
& Smooth initializations (error-function, cut-off, polynomial-interpolation, and Fourier constructions)
& No oscillator-overlap factor \(\alpha_g\)
& Not established
& Smooth auxiliary initialization; no oscillator synthesis \\ \midrule
Optimal approximate LCHS~\cite{low2025optimalquantumsimulationlinear}
& Controlled approximation of \(e^{-At}\) with optimal matrix-query scaling
& Entire Gaussian-regularized construction; exponentially convergent quadrature
& LCU/block-encoding normalization; no \(\alpha_g\)
& Not established
& Qubit LCU PREP/SELECT; no oscillator synthesis
\\ \midrule
Fourier-transform LCHS~\cite{huang2025fouriertransformbasedlinearcombination}
& Exact LCHS identity under real-axis Fourier and regularity conditions
& Schwartz-class generating function with super-polynomial Fourier decay; exponential decay excluded
& \(F(0)=1\) gives a unit-integral weight
& Coefficient cutoff \(N\) (Theorem~\ref{thm:state-prep}(i) via the Schwartz-class weight, this work); measured \(\epsilon_{\rm tr}(32)\) and \(\tau_{\psi,64}\) (Table~\ref{tab:ha-comparison}, this work)
& Discrete LCU~\cite{huang2025fouriertransformbasedlinearcombination}; Law--Eberly under the common factorization (Table~\ref{tab:ha-comparison}, this work) \\
\bottomrule
\end{tabular}
\end{table*}

The Fourier-transform LCHS weight of Huang and An~\cite{huang2025fouriertransformbasedlinearcombination} is
\begin{align*}
g_{\rm HA}(x) = \frac{f_{\rm HA}(x)}{1-ix}, \qquad f_{\rm HA}(x) = \frac{1}{2\pi}\int_{-\delta}^{0}\Phi_{\rm HA}(w)\,e^{ixw}\,\dd w ,
\end{align*}
where the generating function \(\Phi_{\rm HA}\) of that construction is supported on \([-\delta,0]\) with the closed form 
\begin{align*}
\Phi_{\rm HA}(w)&=\frac{   e^{-1/\left(y^m(1-y)^m\right)}e^{-w}   }{\delta Z_m}, \\
y&=(w+\delta)/\delta,  \\
Z_m&=\int_0^1 e^{-1/\left(p^m(1-p)^m\right)}\dd p.
\end{align*}
In the notation of Huang and An~\cite{huang2025fouriertransformbasedlinearcombination}, \(f_{\rm HA}\) is the function \(f\) defined by the inverse Fourier transform in their Eq.~(11), while \(g_{\rm HA}=f_{\rm HA}/(1-ix)\) is the full LCHS weight in their Eq.~(13). The displayed expression for \(\Phi_{\rm HA}\) follows by substituting the piecewise function \(F\) and glue function \(\psi\) of their Eqs.~(35)--(36) into \(\Phi=F+F'\) from their Eq.~(10). 
Here \(Z_m=c_g^{-1}\), and our variables \(x\) and \(w\) correspond to their \(k\) and \(x\), respectively.
The endpoint values of \(\Phi_{\rm HA}\) are defined by continuous extension, and the exact normalization \(\int_{\mathbb{R}}g_{\rm HA}(x)\,\dd x=F(0)=1\) follows from the same Fourier convention. The weight is Schwartz class, so Theorem~\ref{thm:state-prep}(i) bounds the coefficient-projection tail of its ideal kernel state, and its finite-\(r\) coefficients are generated by the coefficient-generating pairing of Section~\ref{ss:state-prep} with \(g_{\rm HA}\) in place of \(g\).
Table~\ref{tab:ha-comparison} measures the kernel dependence of the oscillator encoding by comparing \(g_\beta\) with the Huang--An weight at \((m,\delta)=(1,2)\), the shape highlighted in that work for its total-query proxy at \(\epsilon\leq 10^{-8}\), both kernels evaluated at the shared operating point \((r,r',N,N_{\rm Fock})=(1.6,\,0.25,\,32,\,64)\) of the numerical study in Section~\ref{sec:clean_experiments}. The state-level entries use the definitions of Sections~\ref{ss:state-prep}--\ref{ss:non-gauss}, and the circuit-level entries, the fixed-scale error \(\varepsilon_F\) and the benchmark-input postselection probability \(p_{\rm succ}\), are computed with the simulation pipeline of Section~\ref{sec:clean_experiments} on its \(M=4\) heat benchmarks.

\begin{table}[ht!]
\centering
\caption{Measured kernel comparison at the shared operating point \((r,r',N,N_{\rm Fock})=(1.6,\,0.25,\,32,\,64)\), \(T=1\). The Huang--An weight uses \((m,\delta)=(1,2)\) and is evaluated under the common oscillator factorization, so every oscillator-specific entry is a result of this work. Stellar ranks are numerically resolved, with the top coefficient exceeding ten times the refinement tolerance of the coefficient quadrature. The fixed-scale error \(\varepsilon_F\) and the benchmark-input postselection probability \(p_{\rm succ}\) are circuit-level values for the \(M=4\) heat benchmarks with Dirichlet, Neumann, and periodic boundaries (D/N/P) at \(n_t=100\) with Law--Eberly (LE) preparation, and the LE counts exclude the two outer squeezing gates.}
\label{tab:ha-comparison}
{\small
\setlength{\tabcolsep}{5pt}
\begin{tabular}{lcc}
\toprule
 & \(g_{\beta}\), \(\beta=0.5\) & Huang--An, \((m,\delta)=(1,2)\) \\
\midrule
Coefficient-projection tail \(\epsilon_{\rm tr}(32)\) & \(2.27\times10^{-2}\) & \(5.11\times10^{-3}\) \\
Ordinary-Fock tail \(\tau_{\psi,64}\) & \(3.75\times10^{-4}\) & \(2.43\times10^{-6}\) \\
Stellar rank & \(31\) & \(31\) \\
Non-Gaussianity \(\delta_{\rm nG}\) & \(1.33\) & \(0.63\) \\
Mean photon number \(\langle\hat n\rangle\) & \(1.61\) & \(0.49\) \\
Overlap scale \(|\alpha_{N,r}|\) & \(1.298\) & \(2.476\) \\
\(\varepsilon_F\) (D/N/P) & \(0.699\%\,/\,0.565\%\,/\,0.469\%\) & \(0.280\%\,/\,0.247\%\,/\,0.094\%\) \\
Benchmark \(p_{\rm succ}\) (D/N/P) & \(10.47\%\,/\,16.48\%\,/\,15.36\%\) & \(2.90\%\,/\,4.54\%\,/\,4.23\%\) \\
LE synthesis & \(31\) JC + \(31\) R pulses & \(31\) JC + \(31\) R pulses \\
\bottomrule
\end{tabular}}
\end{table}

At the shared point, the two kernels exhibit a clear accuracy--sampling trade-off because the Huang--An kernel has a smaller coefficient-projection tail and lower circuit-level fixed-scale errors, whereas \(g_\beta\) has higher postselection probabilities and lower direct-repetition overhead. The Huang--An state has a coefficient-projection tail smaller by a factor of \(4.5\), approximately half the non-Gaussianity, and approximately one third of the mean photon number. Its circuit-level fixed-scale errors are lower by factors of \(2.3\)--\(5.0\) at equal Law--Eberly synthesis counts. Across the Dirichlet, Neumann, and periodic benchmarks, the measured \(g_\beta\) postselection probabilities are \(3.61\)--\(3.63\) times larger. Under direct repetition, the expected raw-execution count needed for a fixed number of accepted samples is larger by the same factors for the Huang--An route.

The factor \(1/|\alpha_{N,r}|^2\) in the finite-resource reference probability of Section~\ref{ss:postselection} accounts for nearly all of the measured postselection-probability ratio. The values \(|\alpha_{N,r}|=1.298\) for \(g_\beta\) and \(2.476\) for Huang--An predict a \(g_\beta\)-to-Huang--An reference-probability ratio of \((2.476/1.298)^2=3.64\), within \(0.9\%\) of each circuit-level ratio. At this shared point, the postselection state \(\ket{\phi_r}\) has even parity, so the odd-\(n\) terms in \(\ket{\psi_N}\) do not contribute to \(\braket{\phi_r|\psi_N}\). The Huang--An state places more probability mass in this orthogonal sector, and within the even sector its \(n=2\) overlap term partially cancels its \(n=0\) term. Together, these factorization-specific contributions give \(\left|\braket{\phi_r|\psi_N}\right|=0.404\) for Huang--An and \(0.770\) for \(g_\beta\). Lower non-Gaussianity and mean photon number do not imply a larger overlap with the fixed postselection state.

We use \(g_\beta\) for the main-text experiments because it has lower postselection overhead and its coefficient decay obeys the bound in Corollary~\ref{cor:near-optimal-kernel}. Appendix~\ref{app:ha-retune} shows that the trade-off persists when the Huang--An kernel uses its grid-selected \((r,r')\). The Weyl-calculus LCHS formulas~\cite{ni2026quantumeigenvaluetransformation} use discrete constructions outside the continuous factorization analyzed here. Their oscillator comparison requires a separate study.


\subsection{State Preparation and Fock Truncation}
\label{ss:state-prep}

This subsection specifies the oscillator state-preparation block $U_\psi$ in Fig.~\ref{fig:main-circuit}.
Throughout, the \textit{kernel state} is the preparation-side oscillator state whose pairing with the squeezed-vacuum postselection state realizes the LCHS kernel, \(g=\alpha_g\phi^*\psi\) ideally and \(g_{N,r}=\alpha_{N,r}\phi_r^*\psi_{N,r}\) in the finite construction of Lemma~\ref{lem:finite-r-map}. In the infinite-squeezing limit this state is the normalized kernel itself, and we define \(\psi_\infty(x)=\mathcal N g(x)\) as the ideal kernel state, associated with the common-prefactor-free limiting shape of the formal construction. Since infinite squeezing is not physically realizable, the implemented oracle instead prepares the finite kernel state, a normalized finite squeezed-Fock state
\begin{align*}
    \ket{\psi_{N,r,r',\beta}} = S(r')\sum_{n=0}^{N-1} C_{n}\ket{n} = \sum_{n=0}^{N-1} C_n \ket{\phi_{n,r'}},
\end{align*}
where \(S(r')\) is the single-mode squeezing operator, \(r\) is the postselection squeezing parameter, \(r'\) is the preparation-basis squeezing parameter with \(r'<r\), and \(N\) is the number of squeezed-Fock coefficients. We use the independent cutoff \(N_{\rm Fock}\) for the dimension of the ordinary number-basis representation used to implement oscillator states and operators.
The coefficients $\{C_{n}\}$ are generated by pairing the formal finite-$r$ target wavefunction $\psi_r^{\rm form}$ with the squeezed Fock basis functions $\{\phi_{n,r'}(x)\}$. In position representation, the squeezed Fock basis functions are given by 
\begin{align*}
    \phi_{n,r'}(x)=\frac{1}{\sqrt{\sqrt{2}\sigma'}}h_n\left(\frac{x}{\sqrt{2}\sigma'}\right),\qquad \sigma'=e^{r'},
\end{align*}
where 
\begin{align*}
    h_n(y) = \pi^{-1/4}(2^n n!)^{-1/2} H_n(y)e^{-y^2/2}
\end{align*}
is the $n^{\rm th}$ normalized Hermite function and $H_n$ denotes the physicists' Hermite polynomial. 
For finite \(r\), \(\psi_r^{\rm form}\) is generally not square-integrable. Accordingly, \(\widetilde C_n\) denotes the following coefficient-generating integral:
\begin{align*}
\widetilde C_n=\braket{\phi_{n,r'}|\psi^{\rm form}_r}=\int_{\mathbb{R}} \phi^*_{n,r'}(x) \psi_r^{\rm form}(x)\dd x
\end{align*}
The physical coefficients are then obtained by normalization,
\begin{align*}
C_n = \frac{\widetilde C_n} {\left(\sum_{m=0}^{N-1}|\widetilde C_m|^2\right)^{1/2}}.
\end{align*}
Using the explicit forms of $\phi_{n,r'}(x)$ and $\psi_r^{\rm form}(x)$, this expression reduces to
\begin{align*}
    \widetilde C_{n}
    &= \sqrt{\frac{\sigma}{\sigma'}}
    \frac{1}{\sqrt{2^{n}{n}!}}
    \int_{\mathbb R}
    H_{n}\!\left(\frac{x}{\sqrt{2}\sigma'}\right)
    g(x) e^{-\gamma x^2} \,\dd x
\end{align*}
where \(r'<r\), so \(\sigma'<\sigma\) and
\begin{align*}
\gamma &= \frac14\left(\frac{1}{\sigma'^2}-\frac{1}{\sigma^2}\right)
= \frac14\left(e^{-2r'}-e^{-2r}\right) >0, \qquad \sigma=e^r,\quad \sigma'=e^{r'} .
\end{align*}
For the specific kernel from~\eqref{eq:kernel_def}, the coefficients take the explicit form
\begin{align}
    \widetilde C_n &= \sqrt{\frac{\sigma}{\sigma'}}
    \frac{1}{\sqrt{2^{n}{n}!}}
    \int_{\mathbb R}
    \frac{ e^{2^\beta}
    H_{n}\!\left(\frac{x}{\sqrt{2}\sigma'}\right)
    e^{-\gamma x^2}  }{ 2\pi(1-ix)e^{(1+ix)^\beta} }
    \,\dd x.
    \label{eq:coefficients}
\end{align}

In general, this integral does not admit a closed-form expression for arbitrary $\beta$.
Two examples of closed analytic form for the limit values of $\beta=\{0,1\}$ are given in Appendix~\ref{ss:analytic}.
Theorem~\ref{thm:state-prep} controls the orthogonal coefficient projection \(\Pi^F_{N,r'}\psi_\infty\) of the ideal state. Section~\ref{sec:clean_experiments} includes the discrepancy from the finite-\(r\) pairing, the finite-\(N_{\rm Fock}\) embedding, and the truncated joint dynamics in the measured finite-model error \(\epsilon_{\rm model}\). The numerical results evaluate these finite-resource effects. The theorem supplies the asymptotic convergence statement for the ideal projection.

The finite-\(r\) construction also admits a direct map-level bound that does not reference the ideal projection.
\begin{lemma}[Map-level error of the implemented finite-\(r\) kernel]
\label{lem:finite-r-map}
Let \(g_{N,r}(x):=\alpha_{N,r}\phi_r^*(x)\psi_{N,r}(x)\) denote the implemented finite kernel and let \(\mathcal K_{N,r}(t):=\int_{\Rbb} g_{N,r}(x)e^{-it(xL+H)}\dd x\) be the corresponding ideal-oscillator postselected map. Then, for all \(t\geq0\) and all admissible \((L,H)\),
\begin{align*}
    \left\|\mathcal K_{N,r}(t)-e^{-At}\right\|_2 \leq \left\|g_{N,r}-g\right\|_{L^1(\Rbb)}.
\end{align*}
\end{lemma}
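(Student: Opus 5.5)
The plan is to compare $\mathcal K_{N,r}(t)$ directly with the exact LCHS representation of Theorem~\ref{thm:CV--DV-lchs}. By the admissibility hypothesis, Eq.~\eqref{eq:assumed-exact-lchs-identity} gives
\begin{align*}
e^{-At}=\int_{\Rbb} g(x)e^{-it(xL+H)}\,\dd x .
\end{align*}
Subtracting this from the definition of $\mathcal K_{N,r}(t)$ yields
\begin{align*}
\mathcal K_{N,r}(t)-e^{-At}=\int_{\Rbb}\bigl(g_{N,r}(x)-g(x)\bigr)e^{-it(xL+H)}\,\dd x .
\end{align*}
This requires both integrals to exist as Bochner integrals in $\mathbb C^{D\times D}$. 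I would verify this first.

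First I would check that $g_{N,r}\in L^1(\Rbb)$. The function $\phi_r$ is a Gaussian. The function $\psi_{N,r}$ is a finite combination of squeezed Hermite functions, each a polynomial times $e^{-x^2/(4\sigma'^2)}$. Their product is therefore a polynomial times a Gaussian, which is integrable. By hypothesis $g\in L^1$.

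The integrand $x\mapsto e^{-it(xL+H)}$ is continuous in $x$, hence measurable. For each $x$ it is unitary, since $xL+H$ is Hermitian for real $x$ and $t\ge0$. Together with the integrability above, this makes the difference well defined.

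Second, I would apply the triangle inequality for Bochner integrals together with unitarity:
\begin{align*}
\bigl\|\mathcal K_{N,r}(t)-e^{-At}\bigr\|_2\le\int_{\Rbb}|g_{N,r}(x)-g(x)|\,\bigl\|e^{-it(xL+H)}\bigr\|_2\,\dd x=\|g_{N,r}-g\|_{L^1(\Rbb)}.
\end{align*}
This bound holds uniformly in $t\ge0$ and for all admissible $(L,H)$. The reason is that only unitarity was used, together with the exact identity for $g$.

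There is no substantial obstacle. The only point needing care is the legitimacy of splitting and subtracting the integrals, that is, confirming that $g_{N,r}$ is integrable. Note that $\alpha_{N,r}$ is finite because $\braket{\phi_r|\psi_N}\neq0$ is assumed. Note also that the finite kernel enters only through its integrability, and no exact-LCHS property of $g_{N,r}$ itself is required.
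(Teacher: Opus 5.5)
Your proposal is correct and follows the paper's proof: you subtract the exact LCHS identity for \(g\) from the definition of \(\mathcal K_{N,r}(t)\), then bound the resulting integral by the triangle inequality, using that each \(e^{-it(xL+H)}\) is unitary. Your additional check that \(g_{N,r}\) is a polynomial times a Gaussian, and hence lies in \(L^1(\mathbb R)\), makes explicit the well-definedness of the integrals, which the paper leaves implicit.
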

\begin{proof}
By the exact identity \(e^{-At}=\int_{\Rbb} g(x)e^{-it(xL+H)}\dd x\), the difference equals \(\int_{\Rbb}\left[g_{N,r}(x)-g(x)\right]e^{-it(xL+H)}\dd x\), and the bound follows from the triangle inequality because each \(e^{-it(xL+H)}\) is unitary.
\end{proof}
At the kernel parameters selected in Section~\ref{sec:clean_experiments}, numerical quadrature over the recorded coefficients gives \(\|g_{N,r}-g\|_{L^1}=8.3\times10^{-2}\), a \(7.6\%\) relative \(L^1\) deviation from \(\|g\|_{L^1}=1.10\). The lemma turns this deviation into a map-error estimate of \(8.3\times10^{-2}\), up to the quadrature accuracy of the \(L^1\) evaluation. The measured finite-model errors are reported there for comparison. The finite-\(N_{\rm Fock}\) embedding, synthesis, and product-formula contributions remain the separately reported components, and the inequality does not by itself establish an asymptotic convergence rate in \(N\).


\subsection{Error Bound Estimation for State Preparation}
\label{ss:error-bound-sp}

We now quantify the approximation error incurred by truncating the squeezed-Fock expansion of the ideal kernel state $\psi_\infty(x)=\mathcal N g(x)$.
In prior work~\cite{Jin_2024}, the sub-optimal Lorentzian kernel $f(k)=\frac{1}{\pi(1+ik)}$ was realized using a single squeezed vacuum state $\frac{1}{\sqrt{s}\pi^{1/4}} e^{-\frac{k^2}{2s^2}}$, with a reported maximum kernel-state fidelity of $98.6\%$ at squeezing factor $s=0.925$. 
For the generalized exact-LCHS kernel used here, we instead employ a superposition of squeezed Fock states.
As shown below, the generalized exact-LCHS kernel of~\cite{LCHS2} yields a squeezed-Fock coefficient-projection error that \textit{decays superalgebraically with the cutoff $N$}.

Part (ii) of Theorem~\ref{thm:state-prep} sharpens this to the rate \(\exp[-cN^{1/(2\mu)}]\) for targets satisfying its joint decay and smoothness condition, the symmetric Gelfand--Shilov class \(S_\mu^\mu\). For \(g_\beta\), the rescaled target lies in \(S_{1/\beta}^{1/\beta}\) (Lemma~\ref{lem:kernel-GS}), which yields the stretched-exponential rate in Corollary~\ref{cor:near-optimal-kernel}. Since \(0<\beta<1\), the present kernel family has \(\mu=1/\beta>1\) and exponent \(\beta/2<1/2\), and this exponent approaches the root-exponential value \(1/2\) only as \(\beta\to1^{-}\). More generally, the endpoint \(\mu=\tfrac12\) in part~(ii) yields exponential decay \(\exp[-cN]\), whereas part~(iii) yields root-exponential decay \(\exp[-c\sqrt{N-1}]\) under its Gaussian-weighted strip hypotheses.

The truncated state $\psi_{N,r'}$ corresponds to the orthogonal projection of $\psi_\infty$ onto the first $N$ squeezed Fock basis functions. 
With the $\hbar=2$ convention used and under the change of variables
\(x=\sqrt{2}\sigma'y\), this projection is equivalent to a truncated Hermite
expansion of the rescaled function
\begin{align*}
F_{r'}(y) = \sqrt{\sqrt{2}\sigma'}\,\psi_\infty(\sqrt{2}\sigma'y).
\end{align*}
The approximation error reduces to bounding the decay of Hermite coefficients of $F_{r'}$. The following theorem and corollary make this precise.

\begin{theorem}[Error bound for the finite squeezed-Fock truncation]
\label{thm:state-prep}
    Let $\psi_\infty(x)=\mathcal{N}g(x)$ denote the ideal normalized kernel state, where the normalization constant $\mathcal N$ is chosen so that $\|\psi_\infty\|_2=1$. For a fixed squeezing parameter $r'$, let $\sigma'=e^{r'}$, and let $\mathbb H_{N,r'}$ be the $N$-dimensional subspace spanned by the squeezed Fock basis functions $\{\phi_{n,r'}(x)\}_{n=0}^{N-1}$. Let $\Pi^F_{N,r'}:L^2(\mathbb R)\rightarrow\mathbb H_{N,r'}$ denote the orthogonal projection onto this subspace, and set
    \begin{align*}
        \psi_{N,r'}=\Pi^F_{N,r'}\psi_\infty .
    \end{align*}
    Let the rescaled target $F_{r'}(y) = \sqrt{\sqrt{2}\,\sigma'}\, \psi_\infty(\sqrt{2}\,\sigma' y)$. Then, the following bounds hold.
    \begin{enumerate}
        \item[(i)] Let $\mathcal S$ be the Schwartz space of rapidly decaying $C^{\infty}$ functions on $\mathbb{R}$. If $F_{r'}\in\mathcal S(\mathbb R)$, then for every fixed integer
        $s\geq1$ and $N\geq s$,
        \begin{align*}
            \|\psi_\infty-\psi_{N,r'}\|_2 &\le
            \frac{\|\mathcal{A}^sF_{r'}\|_2}
            {\sqrt{N(N-1)\cdots(N-s+1)}},
            &&\mathcal{A}=\frac{1}{\sqrt2}\left(y+\frac{\dd}{\dd y}\right).
        \end{align*}
        Consequently, for fixed $r'$ and every fixed $s$,
        \begin{align*}
            \|\psi_\infty-\psi_{N,r'}\|_2 = \mathcal O(N^{-s/2}).
        \end{align*}
        \item[(ii)] Let \(\mu\geq \tfrac12\), and suppose that
\(F_{r'}\) belongs to the symmetric Gelfand--Shilov class
\(S_\mu^\mu(\mathbb R)\)~\cite{vanEijndhoven1987GS,lozanovperisic2007hermite}. Equivalently, suppose that there exist
constants \(A_{r'},K_{r'}>0\) such that
\begin{align*}
    \left\|  y^p\frac{\dd^q}{\dd y^q}F_{r'}  \right\|_2
    \leq  K_{r'}A_{r'}^{p+q} (p!)^\mu(q!)^\mu,
    \qquad  p,q\in\mathbb N_0.
\end{align*}
Then there exist constants \(C_{\mu,r'},c_{\mu,r'}>0\),
independent of \(N\), such that
\begin{align*}
    \left\| \psi_\infty-\psi_{N,r'} \right\|_2
    \leq C_{\mu,r'} \exp\!\left[-c_{\mu,r'}N^{1/(2\mu)} \right].
\end{align*}
Consequently, for fixed \(r'\) and \(\mu\), it is sufficient to choose
\begin{align*}
    N = \mathcal O\!\left( \log^{2\mu}\frac{1}{\epsilon_s} \right)
\end{align*}
to obtain a state-projection error at most \(\epsilon_s\).
    \item[(iii)] If, in addition, $F_{r'}$ is analytic in the strip
    $|\operatorname{Im}(y)| < \frac{\rho}{\sqrt{2}\,\sigma'}$ and there exist constants
    $\mathcal K>0$ and $\sigma_0\in\mathbb R$ such that $|e^{y^2/2}F_{r'}(y)| \leq \mathcal K |y|^{\sigma_0}$ as $|y|\rightarrow\infty$ within this strip, and
    \begin{align*}
        \widehat V_{r'} := \int_{\partial\mathcal S_{\rho/(\sqrt{2}\sigma')}}  |e^{y^2/2}F_{r'}(y)|\,|\dd y| < \infty,
    \end{align*}
    then there is a constant $C_{r'}$, independent of $N$, such that
    \begin{align*}
    \|\psi_\infty-\psi_{N,r'}\|_2
    \leq  C_{r'} \exp\!\left[ -\frac{\rho}{\sqrt{2}\sigma'} \sqrt{2(N-1)}  \right]
    = C_{r'} \exp\!\left[ -\frac{\rho}{\sigma'} \sqrt{N-1} \right].
    \end{align*}
    For a target state-projection error \(\epsilon_s>0\), and for fixed \(r'\) and \(\rho\), it is sufficient to choose
    \begin{align*}
        N=\mathcal O\!\left(e^{2r'}\log^2\frac{1}{\epsilon_s}\right).
    \end{align*}
\end{enumerate}
\end{theorem}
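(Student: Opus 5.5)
The proof reduces everything to Hermite expansions of the rescaled target. Under the change of variables \(x=\sqrt2\sigma' y\), the map \(\psi\mapsto \sqrt{\sqrt2\sigma'}\,\psi(\sqrt2\sigma' y)\) is unitary on \(L^2(\mathbb R)\). It sends \(\phi_{n,r'}\) to the Hermite function \(h_n\). Hence \(\|\psi_\infty-\psi_{N,r'}\|_2^2=\sum_{n\geq N}|a_n|^2\), where \(a_n=\langle h_n,F_{r'}\rangle\). All three parts then become statements about the tail of the Hermite coefficients of \(F_{r'}\), and I would treat them in increasing order of difficulty.

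For (i) I use the ladder structure. With \(\mathcal A=\frac{1}{\sqrt2}(y+\frac{\dd}{\dd y})\) the annihilation operator, \(\mathcal A h_n=\sqrt n\,h_{n-1}\). Since \(F_{r'}\in\mathcal S\) and \(\mathcal S\) is invariant under \(\mathcal A\) and its adjoint, the coefficients satisfy \(\langle h_{n-s},\mathcal A^sF_{r'}\rangle=\sqrt{n(n-1)\cdots(n-s+1)}\,a_n\). Therefore
\[
\sum_{n\ge N}|a_n|^2 \le \frac{1}{N(N-1)\cdots(N-s+1)}\sum_{n\ge N} n(n-1)\cdots(n-s+1)|a_n|^2 \le \frac{\|\mathcal A^sF_{r'}\|_2^2}{N(N-1)\cdots(N-s+1)}.
\]
The first inequality holds because the falling factorial is increasing in \(n\); the second is Parseval's identity. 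Taking square roots gives the stated bound, and the \(\mathcal O(N^{-s/2})\) rate follows for every fixed \(s\).

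For (ii) I would invoke the Hermite characterization of Gelfand--Shilov spaces from~\cite{vanEijndhoven1987GS,lozanovperisic2007hermite}. For \(\mu\ge\frac12\), \(F\in S^\mu_\mu\) if and only if \(|a_n|\le C e^{-c n^{1/(2\mu)}}\) for some \(C,c>0\). If one wants to argue directly rather than cite this, the route is as follows. Bound \(\|\mathcal A^sF\|_2\) by expanding \((y+\partial_y)^s\) into \(2^s\) words and normal-ordering them. Each term is controlled by the assumed estimate, giving \(\|\mathcal A^sF\|\le K(2A')^s (s!)^\mu\) up to geometric factors. Inserting this into (i) gives \(\sum_{n\ge N}|a_n|^2\lesssim K^2 (B^s (s!)^{\mu})^2/ (N-s)^s\). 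Optimizing with \(s\approx c N^{1/(2\mu)}\) balances \((s!)^{2\mu}\sim s^{2\mu s}\) against \(N^s\) and yields \(\exp[-cN^{1/(2\mu)}]\). The sample-complexity statement follows by inverting this bound: \(N^{1/(2\mu)}\ge c^{-1}\log(C/\epsilon_s)\). The main obstacle is the combinatorial normal-ordering bound for \(\mathcal A^s\), which I would cite rather than reprove.

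For (iii) I would use the classical contour-integral estimate for Hermite coefficients of functions analytic in a strip. Write \(a_n=\int h_n F\,\dd y=\int H_n(y)e^{-y^2}\,[e^{y^2/2}F(y)]\,\dd y\) times the normalization. Then replace \(h_n\) by its second-kind companion via the Cauchy-type representation
\[
\tfrac12 H_n(y)e^{-y^2}\cdot(\text{norm}) = \frac{1}{2\pi i}\oint \frac{\cdots}{z-y}.
\]
The standard result of Hille and Boyd type states that analyticity of \(e^{y^2/2}F\) in \(|\operatorname{Im}y|<\tau\), with at most algebraic growth there, gives \(|a_n|\lesssim \widehat V\, n^{\kappa} e^{-\tau\sqrt{2n+1}}\). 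The growth hypothesis allows the contour to be shifted to \(\partial\mathcal S_\tau\), and the finite contour integral \(\widehat V_{r'}\) controls the resulting constant. Here \(\tau=\rho/(\sqrt2\sigma')\), since the strip in \(x\) of width \(\rho\) rescales accordingly. Summing the square tail, \(\sum_{n\ge N}e^{-2\tau\sqrt{2n}}\) is comparable to an integral that behaves like \(\sqrt N\,e^{-2\tau\sqrt{2N}}\). The polynomial prefactors are absorbed by slightly shifting \(N\) to \(N-1\) and enlarging \(C_{r'}\). This gives \(C_{r'}\exp[-\tau\sqrt{2(N-1)}]=C_{r'}\exp[-(\rho/\sigma')\sqrt{N-1}]\). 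Solving \(\exp[-(\rho/\sigma')\sqrt{N-1}]\le\epsilon_s/C_{r'}\) gives \(N=\mathcal O(e^{2r'}\log^2(1/\epsilon_s))\). The delicate step is justifying the contour shift and the uniform asymptotics of the second-kind Hermite functions in the strip; I would import these from the Hille/Boyd literature rather than reprove them.
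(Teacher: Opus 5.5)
Parts (i) and (ii) are sound. In (i) you reprove, via $\langle h_{n-s},\mathcal A^sF_{r'}\rangle=\sqrt{n(n-1)\cdots(n-s+1)}\,a_n$ and Parseval, the Hermite projection estimate that the paper simply imports from Lubich. In (ii) you use the same Gelfand--Shilov Hermite-coefficient characterization as the paper. You use it in pointwise form rather than the paper's weighted-$\ell^2$ form. That only adds a polynomial factor to the tail sum, which you may absorb by shrinking the unspecified constant $c_{\mu,r'}$.

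The gap is in (iii), at the step where you say the polynomial prefactors are absorbed by shifting $N$ to $N-1$ and enlarging $C_{r'}$. An index shift cannot do this. The ratio $e^{-\tau\sqrt{2(N-1)}}/e^{-\tau\sqrt{2N}}=e^{\tau(\sqrt{2N}-\sqrt{2N-2})}=1+\mathcal O(N^{-1/2})$ is bounded, so it absorbs no growing power of $N$.

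With $|a_n|\lesssim n^{\kappa}e^{-\tau\sqrt{2n+1}}$, your own integral comparison gives a tail of order $N^{\kappa+1/4}e^{-\tau\sqrt{2N}}$. This is already $N^{1/4}e^{-\tau\sqrt{2N}}$ at $\kappa=0$. For $\kappa>-1/4$ it is not bounded by $C_{r'}\exp[-\tau\sqrt{2(N-1)}]$ for any $N$-independent $C_{r'}$. Unlike in (ii), you cannot hide the power by lowering the rate, because (iii) fixes the rate constant at exactly $\tau=\rho/(\sqrt2\sigma')$.

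To close the gap, one option is the sharp coefficient estimate with prefactor $n^{-1/4}$, holding uniformly along $\partial\mathcal S_\tau$ and controlled by $\widehat V_{r'}$. Then the substitution $u=\sqrt{2n+1}$ gives $\sum_{n\ge N}n^{-1/2}e^{-2\tau\sqrt{2n+1}}\lesssim(1+\tau^{-1})e^{-2\tau\sqrt{2N+1}}$ with no power left over, and $\sqrt{2N+1}\ge\sqrt{2(N-1)}$ finishes. The other option is to import a projection-error bound directly, which is what the paper does. Its proof of (iii) applies Theorem~3.8(ii) of~\cite{wang2025} to $F_{r'}$ and obtains $\|F_{r'}-\Pi^H_NF_{r'}\|_2\le C_{r'}\exp[-\tau\sqrt{2(N-1)}]$ in one step, with no coefficient summation.

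As written, your argument proves only $C_\delta\exp[-(\tau-\delta)\sqrt{2N}]$ for each $\delta>0$. That still yields $N=\mathcal O(e^{2r'}\log^2(1/\epsilon_s))$, but not the displayed inequality.
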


\begin{proof}
    See Appendix~\ref{proof:state-prep}.
\end{proof}

\begin{corollary}[Normalized finite projection]
\label{cor:normalized-projection}
Let
\begin{align*}
    \epsilon_{\rm tr}(N)  :=  \|\psi_\infty-\psi_{N,r'}\|_2,  \qquad \psi_{N,r'}  = \Pi^F_{N,r'}\psi_\infty .
\end{align*}
If \(\epsilon_{\rm tr}(N)<1\), define the normalized projection
\begin{align*}
    \widehat{\psi}_{N,r'} := \frac{\psi_{N,r'}}{\|\psi_{N,r'}\|_2}.
\end{align*}
Then
\begin{align*}
    \|\psi_\infty-\widehat{\psi}_{N,r'}\|_2^2 =  2\left(1-\sqrt{1-\epsilon_{\rm tr}(N)^2}\right)  \leq  2\epsilon_{\rm tr}(N)^2.
\end{align*}
Consequently,
\begin{align*}
    \|\psi_\infty-\widehat{\psi}_{N,r'}\|_2
    \leq  \sqrt{2}\,\epsilon_{\rm tr}(N),
\end{align*}
so the convergence rates in Theorem~\ref{thm:state-prep} also hold for the normalized finite projection, up to the constant factor \(\sqrt2\).
\end{corollary}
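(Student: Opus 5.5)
The proof uses only the orthogonal-projection structure of \(\psi_{N,r'}=\Pi^F_{N,r'}\psi_\infty\) and the normalization \(\|\psi_\infty\|_2=1\). It does not need the particular form of the kernel.

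\textbf{Pythagorean splitting.} Write \(\psi_\infty=\psi_{N,r'}+e_N\) with \(e_N:=(\mathbb I-\Pi^F_{N,r'})\psi_\infty\). Since \(e_N\perp\mathbb H_{N,r'}\), we have \(\|e_N\|_2=\epsilon_{\rm tr}(N)\) and
\begin{align*}
\|\psi_{N,r'}\|_2^2=1-\epsilon_{\rm tr}(N)^2 .
\end{align*}
The hypothesis \(\epsilon_{\rm tr}(N)<1\) therefore guarantees \(\psi_{N,r'}\neq0\), so \(\widehat\psi_{N,r'}\) is well defined.

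\textbf{Exact identity.} Expand the squared distance as
\begin{align*}
\|\psi_\infty-\widehat\psi_{N,r'}\|_2^2=2-2\,\mathrm{Re}\braket{\psi_\infty|\widehat\psi_{N,r'}} ,
\end{align*}
using that both vectors are unit vectors. The inner product is computed with the idempotence and self-adjointness of \(\Pi^F_{N,r'}\):
\begin{align*}
\braket{\psi_\infty|\psi_{N,r'}}=\braket{\psi_\infty|\Pi^F_{N,r'}\psi_\infty}=\|\Pi^F_{N,r'}\psi_\infty\|_2^2 ,
\end{align*}
which is real and nonnegative. Dividing by \(\|\psi_{N,r'}\|_2\) gives \(\braket{\psi_\infty|\widehat\psi_{N,r'}}=\|\psi_{N,r'}\|_2=\sqrt{1-\epsilon_{\rm tr}(N)^2}\). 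Substituting this yields the stated equality \(2\bigl(1-\sqrt{1-\epsilon_{\rm tr}(N)^2}\bigr)\).

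\textbf{Inequality.} For \(u=\epsilon_{\rm tr}(N)^2\in[0,1)\), the scalar bound \(1-\sqrt{1-u}\le u\) holds. To see this, note it is equivalent to \(\sqrt{1-u}\ge 1-u\), which is true because \(0\le 1-u\le 1\). This gives \(\|\psi_\infty-\widehat\psi_{N,r'}\|_2^2\le 2\epsilon_{\rm tr}(N)^2\). Taking square roots gives the \(\sqrt2\,\epsilon_{\rm tr}(N)\) bound.

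\textbf{Transfer of rates.} Combining the \(\sqrt2\,\epsilon_{\rm tr}(N)\) bound with parts (i)--(iii) of Theorem~\ref{thm:state-prep} transfers each rate to the normalized projection at the cost of a factor \(\sqrt2\). Each of those rates tends to zero, so the hypothesis \(\epsilon_{\rm tr}(N)<1\) holds for all sufficiently large \(N\).

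None of these steps is a genuine obstacle. The only point needing care is checking that the overlap is real and nonnegative, so that no phase enters the real part; this is exactly what the orthogonal-projection identity supplies.
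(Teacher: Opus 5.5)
Your proof is correct and follows essentially the same route as the paper: orthogonality of $\Pi^F_{N,r'}$ gives $\|\psi_{N,r'}\|_2^2=1-\epsilon_{\rm tr}(N)^2$ and $\langle\psi_\infty,\widehat\psi_{N,r'}\rangle=\sqrt{1-\epsilon_{\rm tr}(N)^2}$, and the scalar bound $1-\sqrt{1-u}\le u$ gives the inequality. You also make explicit a step the paper leaves implicit: the overlap is real and nonnegative, so no phase enters.
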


\begin{proof}
Orthogonality of the projection gives $\|\psi_{N,r'}\|_2^2 =  1-\epsilon_{\rm tr}(N)^2$ and $\langle\psi_\infty,\widehat{\psi}_{N,r'}\rangle =  \sqrt{1-\epsilon_{\rm tr}(N)^2}$. The stated identity follows directly, and \(1-\sqrt{1-\epsilon_{\rm tr}(N)^2}\leq\epsilon_{\rm tr}(N)^2\).
\end{proof}

\begin{corollary}[Stretched-exponential convergence for the LCHS kernel]
\label{cor:near-optimal-kernel}
Let \(g_\beta\) be the LCHS kernel in Eq.~\eqref{eq:kernel_def}, and let \(\psi_\infty=\mathcal N g_\beta\) be the corresponding normalized ideal kernel state. Then, for every fixed squeezing parameter \(r'\), there exist constants \(C_{\beta,r'},c_{\beta,r'}>0\), independent of \(N\), such that
\begin{align}
    \left\| \psi_\infty-\Pi^F_{N,r'}\psi_\infty \right\|_2
    \leq C_{\beta,r'}  \exp\!\left[-c_{\beta,r'}N^{\beta/2}\right].
    \label{eq:kernel-GS-projection-bound}
\end{align}
Consequently, for fixed \(0<\beta<1\) and fixed \(r'\), as \(\epsilon_s\downarrow0\), it is sufficient to choose
\begin{align*}
    N = \mathcal O\!\left( \log^{2/\beta}\frac{1}{\epsilon_s} \right).
\end{align*}
\end{corollary}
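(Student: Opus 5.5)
The plan is to reduce the corollary to Theorem~\ref{thm:state-prep}(ii) with \(\mu=1/\beta\). The paper already names the needed ingredient, Lemma~\ref{lem:kernel-GS}: the rescaled target \(F_{r'}(y)=\sqrt{\sqrt2\sigma'}\,\mathcal N g_\beta(\sqrt2\sigma' y)\) lies in \(S^{1/\beta}_{1/\beta}(\mathbb R)\). Granting this, part (ii) gives
\(\|\psi_\infty-\Pi^F_{N,r'}\psi_\infty\|_2\le C\exp[-cN^{1/(2\mu)}]=C\exp[-cN^{\beta/2}]\),
with constants depending only on \(\beta\) and \(r'\). The sufficient cutoff \(N=\mathcal O(\log^{2/\beta}(1/\epsilon_s))\) follows by solving \(C e^{-cN^{\beta/2}}\le\epsilon_s\), which gives \(N\ge (c^{-1}\log(C/\epsilon_s))^{2/\beta}\).

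If the lemma has to be established rather than cited, I would verify the two Gelfand--Shilov conditions separately. Rescaling by the fixed factor \(\sqrt2\sigma'\) and multiplying by \(\mathcal N\) only change \(A_{r'},K_{r'}\), so it suffices to treat \(g_\beta\). I would use the equivalent characterization of \(S^\mu_\mu\) through the separate decay and derivative bounds, namely \(|g(x)|\le Ce^{-a|x|^{1/\mu}}\) and \(\|g^{(q)}\|\le CB^q (q!)^\mu\) (Gelfand--Shilov/Chung--Chung--Kim), instead of the joint \(p,q\) bound.

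For the \textbf{decay}, write \((1+ix)^\beta=|1+ix|^\beta e^{i\beta\arg(1+ix)}\). Since \(|\arg(1+ix)|<\pi/2\), the real part is at least \(|1+ix|^\beta\cos(\beta\pi/2)\). Hence \(|g_\beta(x)|\lesssim e^{-\cos(\beta\pi/2)|x|^\beta}\), which is decay of order \(1/\mu=\beta\).

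For the \textbf{smoothness}, \(g_\beta\) extends analytically to \(\{\operatorname{Im}x<1\}\) minus the branch point \(x=i\). I would apply Cauchy estimates on discs of radius proportional to \(\langle x\rangle\), which avoid the branch cut. This gives \(|g^{(q)}(x)|\le C q!\,(c\langle x\rangle)^{-q}\sup_{\text{disc}}|g|\). Since \((1+iz)^\beta\) has positive real part of order \(|x|^\beta\) on such a disc, these bounds combine with the decay. They yield derivative bounds of Gevrey order 1, which is better than order \(1/\beta\). Weighted \(L^2\) bounds of the form \(\|y^p\partial^q F\|_2\) then follow by trading powers of \(y\) against the stretched-exponential factor, using \(\sup_y y^p e^{-ay^\beta}\sim (p/(a\beta))^{p/\beta}\lesssim A^p (p!)^{1/\beta}\).

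The main obstacle is the \textbf{joint estimate} \(\|y^p\partial^qF\|\le KA^{p+q}(p!)^\mu(q!)^\mu\). The uniform Cauchy-disc bound must keep the exponential decay inside the disc without losing more than a geometric factor in \(q\). I would take the disc radius \(\min(1/2,\langle x\rangle/4)\), which yields \((q!)\cdot 2^q\) times the decay envelope. Since \(\mu>1\) here, the extra regularity is absorbed. The equivalence theorem for \(\mu\ge1/2\) then upgrades the separate bounds to membership in \(S^\mu_\mu\), and Theorem~\ref{thm:state-prep}(ii) closes the argument.
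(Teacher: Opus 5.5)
Your proposal is correct and follows the paper's proof. Lemma~\ref{lem:kernel-GS} places $F_{r'}$ in $S_{1/\beta}^{1/\beta}(\mathbb R)$, Theorem~\ref{thm:state-prep}(ii) with $\mu=1/\beta$ gives the rate $N^{\beta/2}$, and your sketch of the lemma is also the paper's argument: strip decay via $\cos(\beta\pi/2)$, bounded-radius Cauchy estimates that keep the decay envelope, integration of $|k|^{2p}e^{-2a|k|^\beta}$ to produce $(p!)^{1/\beta}$, and absorption of $q!$ into $(q!)^{1/\beta}$. That joint bound makes the appeal to the Chung--Chung--Kim equivalence unnecessary.

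One small slip: $g_\beta$ is not holomorphic on $\{\operatorname{Im}z<1\}$, because $1/(1-iz)$ has a pole at $z=-i$, so the relevant domain is the strip $|\operatorname{Im}z|<1$. Your final discs of radius at most $1/2$ centered on the real axis stay inside that strip, so nothing breaks.
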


\begin{proof}
    See Appendix~\ref{proof:corollary}.
\end{proof}

\paragraph{Two independent oscillator cutoffs.}
The coefficient cutoff \(N\) and the oscillator Hilbert-space cutoff \(N_{\rm Fock}\) represent different approximations. Let \(\ket{n;r'}=S(r')\ket{n}\) and define
\begin{align*}
    \Pi^F_{N,r'}
    &= \sum_{n=0}^{N-1}\ket{n;r'}\!\bra{n;r'},
    & \Pi_{N_{\rm Fock}}
    &= \sum_{m=0}^{N_{\rm Fock}-1}\ket{m}\!\bra{m}.
\end{align*}
The projector \(\Pi^F_{N,r'}\) truncates the squeezed-Fock coefficient expansion, whereas \(\Pi_{N_{\rm Fock}}\) truncates the ordinary number basis used by the finite-dimensional simulator. For nonzero squeezing, these projectors generally do not commute, and a state containing only \(N\) squeezed-Fock components can have support on arbitrarily high ordinary Fock levels.

For the normalized coefficient-truncated state \(\ket{\psi_N}\) and postselection state \(\ket{\phi_r}\), define the ordinary-Fock tails
\begin{align*}
    \tau_{\psi,N_{\rm Fock}}
    &= \left\|  (\mathbb I-\Pi_{N_{\rm Fock}})\ket{\psi_N}  \right\|_2,
    & \tau_{\phi,N_{\rm Fock}}
    &= \left\| (\mathbb I-\Pi_{N_{\rm Fock}})\ket{\phi_r} \right\|_2.
\end{align*}
Whenever the projected norms are nonzero, the error introduced by normalized projection into the finite oscillator space satisfies
\begin{align}
    \epsilon_{\rm Fock}
    := \left\|\ket{\psi_N}  - \frac{\Pi_{N_{\rm Fock}}\ket{\psi_N}}{\|\Pi_{N_{\rm Fock}}\ket{\psi_N}\|_2}\right\|_2
    +  \left\|  \ket{\phi_r} -  \frac{\Pi_{N_{\rm Fock}}\ket{\phi_r}}{\|\Pi_{N_{\rm Fock}}\ket{\phi_r}\|_2}  \right\|_2
    \leq  \sqrt{2} \left( \tau_{\psi,N_{\rm Fock}}  + \tau_{\phi,N_{\rm Fock}} \right). \label{eq:fock-embedding-bound}
\end{align}

Theorem~\ref{thm:state-prep} controls the ideal coefficient-projection error as \(N\) increases. The preceding \(\epsilon_{\rm Fock}\) estimate concerns normalized projection of infinite-dimensional oscillator states. The numerical model uses a different, explicitly finite construction: it truncates the ladder operator first and then exponentiates the finite squeezing generator. For \(1\leq N\leq N_{\rm Fock}\), define
\begin{align*}
    \hat a_{N_{\rm Fock}} &:= \sum_{n=1}^{N_{\rm Fock}-1}\sqrt{n}\ket{n-1}\!\bra{n},
    & \mathsf S_{N_{\rm Fock}}(s)
    &:=\exp\!\left[ \frac{s}{2}\left\{(\hat a_{N_{\rm Fock}}^\dagger)^2-\hat a_{N_{\rm Fock}}^2\right\}\right],\\
    \ket{\chi_N^{(N_{\rm Fock})}} &:=\frac{\sum_{n=0}^{N-1}C_n\ket n}{\left\|\sum_{n=0}^{N-1}C_n\ket n\right\|_2},
    & \hat x_{N_{\rm Fock}} &:= \hat a_{N_{\rm Fock}}+\hat a_{N_{\rm Fock}}^\dagger.
\end{align*}
The finite prepared and postselection states used in the saved numerical computation are
\begin{align}
    \ket{\psi_N^{(N_{\rm Fock})}}
    &:= \frac{\mathsf S_{N_{\rm Fock}}(r')\ket{\chi_N^{(N_{\rm Fock})}}}
    {\|\mathsf S_{N_{\rm Fock}}(r')\ket{\chi_N^{(N_{\rm Fock})}}\|_2},
    &\ket{\phi_r^{(N_{\rm Fock})}}
    &:=\frac{\mathsf S_{N_{\rm Fock}}(r)\ket 0}
    {\|\mathsf S_{N_{\rm Fock}}(r)\ket 0\|_2}.
    \label{eq:finite-target-states}
\end{align}
Their finite overlap scale is
\begin{align*}
    \alpha_{N,r}^{(N_{\rm Fock})}
    &:=\frac{1}{\langle\phi_r^{(N_{\rm Fock})}|\psi_N^{(N_{\rm Fock})}\rangle},
\end{align*}
whenever the overlap is nonzero.
Throughout the finite-dimensional analysis and numerical reporting, we use the \(\hbar=2\) convention \(\hat x_{N_{\rm Fock}}=\hat a_{N_{\rm Fock}}+\hat a_{N_{\rm Fock}}^\dagger\).
To keep the finite-map formulas readable, below we write \(\ket{\psi_N}\), \(\ket{\phi_r}\), and \(\alpha_{N,r}\) for these finite numerical objects and suppress the displayed \(N_{\rm Fock}\) and \(r'\) dependence, while all finite-resource maps keep both cutoff indices. We define the finite target, prepared-state, and product-formula maps, together with their error decomposition, in Section~\ref{ss:postselection}.

\subsection{Non-Gaussianity and Stellar Rank of the Prepared States}
\label{ss:non-gauss}

Let 
\begin{align}
    \ket{\chi_N} =  \sum_{n=0}^{N-1}C_n\ket{n} \label{eq:fock}
\end{align}
denote the normalized finite superposition of unsqueezed Fock states. The associated stellar, or Bargmann, function is
\begin{align*}
    F_{\chi_N}(z) = \sum_{n=0}^{N-1}\frac{C_n}{\sqrt{n!}}z^n .
\end{align*}
The stellar rank \(r_\star(\chi_N)\) is defined as the number of zeros of \(F_{\chi_N}(z)\), counted with multiplicity~\cite{Chabaud_2020}. Let \(n_{\max}=\max \{n: C_n \neq 0\}.\) Then \(F_{\chi_N}(z)\) is a polynomial of degree \(n_{\max},\) and
\begin{align*}
    r_\star(\chi_N) = n_{\max}\leq N-1.
\end{align*}
In particular, if the highest Fock coefficient \(C_{N-1}\) is nonzero, then
\begin{align*}
r_\star(\chi_N) = N-1.
\end{align*}
Otherwise, if \( C_{N-1} = 0 \), the stellar rank satisfies
\begin{align*}
r_\star(\chi_N) < N-1.
\end{align*}
Since \(S(r')\) is a Gaussian unitary and Gaussian unitaries preserve stellar rank, 
\begin{align*}
    r_\star\!\left(\psi_{N,r,r',\beta}\right) = r_\star(\chi_N) = \max\{n:C_n\neq 0\}.
\end{align*}
The stellar rank provides a discrete witness of non-Gaussianity: pure Gaussian states have stellar rank zero, while finite Fock superpositions with positive stellar rank are non-Gaussian. Thus, the truncation cutoff \(N\) sets the maximum stellar-rank complexity available to the implementable oracle state. In the formal limit \(N\to\infty\), the ideal kernel state generally has infinite stellar rank, unless it reduces to a Gaussian state or to a Gaussian unitary applied to a finite superposition of Fock states.

The non-Gaussianity can also be quantified using the quantum relative-entropy (QRE) non-Gaussianity~\cite{PhysRevA.78.060303, Genoni_2010},
    \(\delta_{\rm nG}(\rho)=H(\tau_\rho)-H(\rho),\)
where \(\tau_\rho\) is the Gaussian state with the same first and second moments as \(\rho\), and $H(\cdot)$ is the von Neumann entropy of $(\cdot)$. Since the prepared state is pure and \(S(r')\) is a Gaussian unitary, this quantity can be computed from the unsqueezed core state \(\ket{\chi_N}\):
\begin{align*}
\delta_{\rm nG} \!\left( S(r')\ket{\chi_N} \right) = \delta_{\rm nG}(\ket{\chi_N}).
\end{align*}

Define \(\alpha=\braket{a}, \bar n=\braket{a^\dagger a}, m_2=\braket{a^2},\) where all expectation values are taken with respect to \(\ket{\chi_N}\). Using \(a\ket{n}=\sqrt{n}\ket{n-1}\) and \(a^2\ket{n+2}=\sqrt{(n+1)(n+2)}\ket{n}\), we obtain
\begin{align*}
    \alpha &= \braket{\chi_N|a|\chi_N}
    = \sum_{n=0}^{N-2}C_n^*C_{n+1}\sqrt{n+1},\,\ \bar n
    = \braket{\chi_N|a^\dagger a|\chi_N}
    = \sum_{n=0}^{N-1}n|C_n|^2, \\
    m_2 &= \braket{\chi_N|a^2|\chi_N}
    = \sum_{n=0}^{N-3}C_n^*C_{n+2}\sqrt{(n+1)(n+2)}.
\end{align*}
The centered second moments are
\(
N_c = \braket{\Delta a^\dagger \Delta a}
= \bar n-|\alpha|^2,\qquad
M_c = \braket{(\Delta a)^2}
= m_2-\alpha^2,
\)
with \(\Delta a=a-\braket{a}\).
The determinant of the single-mode covariance matrix is
\begin{align*}
    \det V = \left(N_c+\frac{1}{2}\right)^2-|M_c|^2.
\end{align*}
The corresponding symplectic eigenvalue is
\begin{align*}
    \nu = \sqrt{\det V} = \sqrt{\left(N_c+\frac12\right)^2 - |M_c|^2}.
\end{align*}
Therefore, for the pure state \(\ket{\chi_N}\)~\cite{Weedbrook_2012},
\begin{align*}
    \delta_{\rm nG} = \left(\nu+\frac12\right)\ln\left(\nu+\frac12\right)
    - \left(\nu-\frac12\right)\ln\left(\nu-\frac12\right).
\end{align*}
This gives the exact relative-entropy non-Gaussianity of the prepared state in terms of the expansion coefficients \(C_n\). All second moments here are mode-operator moments, and this normalization fixes the vacuum symplectic eigenvalue at \(\nu=\tfrac12\), so \(\delta_{\rm nG}\) is independent of the quadrature convention used elsewhere in the paper.

Finally, consider the hybrid interaction in Fig.~\ref{fig:main-circuit}. In the eigenbasis of \(L\),
\begin{align*}
e^{-it(\hat{x}\otimes L)} = \sum_{\lambda\in{\rm spec}(L)}
e^{-it\lambda\hat{x}}\otimes\ketbra{\lambda}{\lambda}.
\end{align*}
For a fixed \(L\)-eigenvalue \(\lambda\), the oscillator undergoes the Gaussian momentum displacement \(e^{-it\lambda\hat{x}}\). Within that branch, the interaction is Gaussian and preserves oscillator stellar rank.

Branchwise Gaussianity is insufficient to classify the complete hybrid operation. For a superposition of \(L\)-eigenstates, the interaction coherently correlates distinct oscillator displacements with the system components, and postselection induces
\begin{align*}
    \mathcal K_{\phi,\psi} = \sum_{\lambda} \bra{\phi}e^{-it\lambda\hat{x}}\ket{\psi} \ketbra{\lambda}{\lambda}.
\end{align*}
This map is generally a nontrivial spectral function of \(L\), so branchwise Gaussianity leaves the complete entangled evolution and postselected system map unclassified. We quantify continuous-variable non-Gaussianity of the prepared oscillator state. Discrete-variable magic lies outside this analysis.

\subsection{Trotter-Suzuki Approximation for Hamiltonian Evolution}
\label{sec:trotter}
\begin{figure}[h]
    \centering
    \begin{align*}
    \begin{aligned}
        &
        \Qcircuit @C=0.7em @R=1.0em {
            \lstick{\ket{\psi}_{\rm osc}}
                & \multigate{4}{e^{-i\Delta t A_1}}
                & \multigate{4}{e^{-i\Delta t A_2}}
                & \qw
                & \cdots
                & 
                & \multigate{4}{e^{-i\Delta t A_{N_L}}}
                & \push{\rule{0em}{1em}}{\qw}
                & \qw
                & \qw
                & \qw
                & \cdots
                & 
                & \qw
                & \qw
            \\
            \lstick{\ket{q_1}}
                & \ghost{e^{-i\Delta t A_1}}
                & \ghost{e^{-i\Delta t A_2}}
                & \qw
                & \cdots
                & 
                & \ghost{e^{-i\Delta t A_{N_L}}}
                & \push{\rule{0em}{1em}}{\qw}
                & \multigate{3}{e^{-i\Delta t B_1}}
                & \multigate{3}{e^{-i\Delta t B_2}}
                & \qw
                & \cdots
                & 
                & \multigate{3}{e^{-i\Delta t B_{N_H}}}
                & \qw
            \\
            \lstick{\ket{q_2}}
                & \ghost{e^{-i\Delta t A_1}}
                & \ghost{e^{-i\Delta t A_2}}
                & \qw
                & \cdots
                & 
                & \ghost{e^{-i\Delta t A_{N_L}}}
                & \push{\rule{0em}{1em}}{\qw}
                & \ghost{e^{-i\Delta t B_1}}
                & \ghost{e^{-i\Delta t B_2}}
                & \qw
                & \cdots
                & 
                & \ghost{e^{-i\Delta t B_{N_H}}}
                & \qw
            \\
            \lstick{{\vdots}}
                & \ghost{e^{-i\Delta t A_1}}
                & \ghost{e^{-i\Delta t A_2}}
                & \qw
                & \cdots
                & 
                & \ghost{e^{-i\Delta t A_{N_L}}}
                & \push{\rule{0em}{1em}}{\qw}
                & \ghost{e^{-i\Delta t B_1}}
                & \ghost{e^{-i\Delta t B_2}}
                & \qw
                & \cdots
                & 
                & \ghost{e^{-i\Delta t B_{N_H}}}
                & \qw
            \\
            \lstick{\ket{q_m}}
                & \ghost{e^{-i\Delta t A_1}}
                & \ghost{e^{-i\Delta t A_2}}
                & \qw
                & \cdots
                & 
                & \ghost{e^{-i\Delta t A_{N_L}}}
                & \push{\rule{0em}{1em}}{\qw}
                & \ghost{e^{-i\Delta t B_1}}
                & \ghost{e^{-i\Delta t B_2}}
                & \qw
                & \cdots
                & 
                & \ghost{e^{-i\Delta t B_{N_H}}}
                & \qw
        }\\
        &\underbrace{
        \hspace{6.4cm}
        }_{\text{Pauli decomposition of } \hat x \otimes L}
        \quad
        \underbrace{
        \hspace{6.4cm}
        }_{\text{Pauli decomposition of }H}
        \\[-1em]
        \end{aligned}
        \end{align*}
    \caption{Schematic of a single Suzuki--Trotter layer \(S_p(\Delta t)\), with \(\Delta t=t/n_t\), used to approximate the hybrid evolution. The full product-formula approximation is obtained by repeating this layer \(n_t\) times. Define $L=\sum_{i=1}^{N_L}\alpha_iP_i, H=\sum_{j=1}^{N_H}\beta_jQ_j$, and let $A_i:=\alpha_i\hat{x}_{N_{\rm Fock}}\otimes P_i,B_j:=\beta_j\mathbb I_{\rm osc}\otimes Q_j, \Delta t:=t/n_t$.}
    \label{fig:trotter_hybrid}
\end{figure}

Figure~\ref{fig:trotter_hybrid} shows the Suzuki--Trotter implementation of the joint evolution block in Fig.~\ref{fig:main-circuit}. The hybrid Hamiltonian $H_{{\rm hyb},N_{\rm Fock}}=\hat{x}_{N_{\rm Fock}}\otimes L+\mathbb I_{\rm osc}\otimes H$ requires Suzuki--Trotter synthesis into a sequence of implementable operations. The number of Trotter steps $n_t$ required for target precision $\epsilon_t$ depends on the non-commutativity of the inner Pauli decomposition of $L$, whose exponentials carry the norm of the truncated oscillator quadrature $\hat{x}_{N_{\rm Fock}}$, the outer CV--DV split, and the inner Pauli decomposition of $H$.

Let
\begin{align*}
\Pi_{N_{\rm Fock}}
= \sum_{\ell=0}^{N_{\rm Fock}-1} \ket{\ell}\!\bra{\ell}, \qquad \hat{x}_{N_{\rm Fock}}
= \Pi_{N_{\rm Fock}} \hat{x} \Pi_{N_{\rm Fock}}.
\end{align*}
The finite hybrid Hamiltonian is
\begin{align*}
H_{{\rm hyb},N_{\rm Fock}} = \hat{x}_{N_{\rm Fock}}\otimes L + \mathbb I_{\rm osc}\otimes H.
\end{align*}
Under the convention \(\hat{x}=\hat a+\hat a^\dagger\),
\begin{align}
    \|\hat{x}_{N_{\rm Fock}}\| &\leq 2\sqrt{N_{\rm Fock}-1} = \mathcal O(\sqrt{N_{\rm Fock}}). \label{eq:x-norm-asymptotic}
\end{align}
The following theorem bounds the product-formula error for this finite-oscillator representation. The error of the finite oscillator representation itself is accounted for separately, within $\epsilon_{\rm model}$, in Section~\ref{ss:postselection}.

\begin{theorem}[Resource Complexity for Hybrid Hamiltonian Evolution]
\label{thm:trotter}
    Consider the finite-dimensional hybrid Hamiltonian 
    $H_{{\rm hyb},N_{\rm Fock}}=\hat{x}_{N_{\rm Fock}}\otimes L+\mathbb I_{\rm osc}\otimes H$ 
    with Pauli decompositions 
    \begin{align*} 
    L=\sum_{i=1}^{N_L} \alpha_i P_i, \quad H=\sum_{j=1}^{N_H} \beta_j Q_j.
    \end{align*}
    For a $p^{\rm th}$-order Suzuki--Trotter formula $S_p(t/n_t)$, define the mixed Pauli-level commutator sums $\Gamma_{p,a}^{(L,H)}$ for words containing exactly $a$ $L$-sector summands and $(p+1-a)$ $H$-sector summands, and define $\Gamma_p^{(L)}$ and $\Gamma_p^{(H)}$ for the pure inner Pauli splits of $L$ and $H$:
    \begin{align*}
        \Gamma_{p,a}^{(L,H)} 
        &\equiv   \sum_{\substack{R_1,\cdots,R_{p+1} \in  \{\alpha_iP_i\}_{i=1}^{N_L}\cup\{\beta_jQ_j\}_{j=1}^{N_H} \\ 
        \#_L(R_1,\cdots,R_{p+1})=a  \\
        \#_H(R_1,\cdots, R_{p+1})=p+1-a}} 
        \|[R_{p+1},[R_p,\cdots,[R_2,R_1]\cdots]]\|\\
        \Gamma_p^{(L)} 
        &\equiv  \sum_{i_1,\cdots,i_{p+1}} |\alpha_{i_1}\cdots\alpha_{i_{p+1}}| 
        \|[P_{i_{p+1}},[P_{i_p},\cdots,[P_{i_2},P_{i_1}]\cdots]]\|\\
        \Gamma_p^{(H)} 
        &\equiv  \sum_{j_1,\cdots,j_{p+1}} |\beta_{j_1}\cdots\beta_{j_{p+1}}| 
        \|[Q_{j_{p+1}},[Q_{j_p},\cdots,[Q_{j_2},Q_{j_1}]\cdots]]\|\\
    \end{align*}
    Then it is sufficient to choose the number of Trotter steps as
    \begin{align*}
        n_t=\Ocal \left[t^{1+1/p} \left( \frac{\Lambda_p(N_{\rm Fock})} {\epsilon_t} \right)^{1/p} \right],        
    \end{align*}
    where
        \begin{align}
            \Lambda_p(N_{\rm Fock}) :=
            \left[ \sum_{a=1}^{p} 2^a(N_{\rm Fock}-1)^{a/2} \Gamma_{p,a}^{(L,H)} + 2^{p+1}(N_{\rm Fock}-1)^{(p+1)/2} \Gamma_p^{(L)} + \Gamma_p^{(H)} \right], \label{eq:Lambda_p_definition}
        \end{align}
    to ensure
    \begin{align*}
        \left\| e^{-iH_{{\rm hyb},N_{\rm Fock}}t} -
        \left[ S_p\left(\frac{t}{n_t}\right) \right]^{n_t} \right\| \leq \epsilon_t.
    \end{align*}
    The corresponding gate counts scale as
    \begin{align*}
        N_{\rm CX}=\mathcal{O}\left(2n_tm_p\left[\sum_{i=1}^{N_L}\max\!\bigl(w(P_i)-1,0\bigr)+\sum_{j=1}^{N_H}\max\!\bigl(w(Q_j)-1,0\bigr)\right]\right),
    \end{align*}
    and
    \begin{align*}
        N_{\rm hyb}=\mathcal{O}(n_tm_pN_L),
    \end{align*}
    where $m_p$ is a formula- and convention-dependent Suzuki traversal multiplier. For standard recursive formulas of even order $p=2k$, this multiplier grows as $\mathcal O(5^{k-1})$. The total number of exponentials additionally depends on the number of implemented Pauli summands, which is accounted for by the explicit sums over the $N_L$ and $N_H$ terms above. Here, $N_L$ and $N_H$ are the numbers of Pauli terms in $L$ and $H$, and $w(\cdot)$ denotes Pauli weight.
\end{theorem}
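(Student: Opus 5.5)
The plan is to invoke the commutator-scaling Trotter error bound of Childs et al.: for $H=\sum_\gamma F_\gamma$ and a $p$th-order formula, one step obeys
\begin{align*}
\|e^{-iH\Delta t}-S_p(\Delta t)\|\leq \mathcal O\!\left(\tilde\alpha_{\rm comm}\,\Delta t^{p+1}\right),\qquad \tilde\alpha_{\rm comm}=\sum_{\gamma_1,\dots,\gamma_{p+1}}\|[F_{\gamma_{p+1}},[\cdots,[F_{\gamma_2},F_{\gamma_1}]\cdots]]\|.
\end{align*}
The fragments are the $N_L+N_H$ terms $A_i=\alpha_i\hat x_{N_{\rm Fock}}\otimes P_i$ and $B_j=\beta_j\mathbb I_{\rm osc}\otimes Q_j$ of Fig.~\ref{fig:trotter_hybrid}, so the whole task is to bound $\tilde\alpha_{\rm comm}$ by $\Lambda_p(N_{\rm Fock})$. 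Telescoping over $n_t$ steps, using that all factors are unitary, gives total error $\mathcal O(n_t\Lambda_p (t/n_t)^{p+1})$. Setting this $\le\epsilon_t$ and solving yields $n_t=\mathcal O(t^{1+1/p}(\Lambda_p/\epsilon_t)^{1/p})$.

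The key algebraic step is that every $A_i$ shares the same oscillator factor $\hat x_{N_{\rm Fock}}$, and every $B_j$ is trivial on the oscillator. Consequently every nested commutator factorizes. For a word with $a$ letters from the $L$-sector and $p+1-a$ from the $H$-sector, I will prove by induction on the nesting depth that
\begin{align*}
[R_{p+1},[\cdots,[R_2,R_1]\cdots]]=\hat x_{N_{\rm Fock}}^{\,a}\otimes[\tilde R_{p+1},[\cdots,[\tilde R_2,\tilde R_1]\cdots]],
\end{align*}
where $\tilde R$ is the corresponding weighted Pauli $\alpha_iP_i$ or $\beta_jQ_j$. At each step the outer letter multiplies the oscillator factor by $\hat x_{N_{\rm Fock}}$ or by $\mathbb I$, and these commute with $\hat x^{\,k}_{N_{\rm Fock}}$. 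The commutator of two tensor products with commuting first factors is the product of the first factors tensored with the commutator of the second.

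Taking norms and using Eq.~\eqref{eq:x-norm-asymptotic}, the oscillator factor contributes at most $\|\hat x_{N_{\rm Fock}}\|^a\le 2^a(N_{\rm Fock}-1)^{a/2}$. I then sort the words by $a$. The case $a=p+1$ gives the term $2^{p+1}(N_{\rm Fock}-1)^{(p+1)/2}\Gamma_p^{(L)}$. The case $a=0$ gives $\Gamma_p^{(H)}$. Each case $1\le a\le p$ gives the mixed term $2^a(N_{\rm Fock}-1)^{a/2}\Gamma^{(L,H)}_{p,a}$. Summing these exactly reproduces $\Lambda_p$ in Eq.~\eqref{eq:Lambda_p_definition}. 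The worst-case form in Result 3 then follows from $a\le p+1$, which gives $\Lambda_p=\mathcal O(N_{\rm Fock}^{(p+1)/2})$ times commutator factors, so $\Lambda_p^{1/p}=\mathcal O(N_{\rm Fock}^{(p+1)/(2p)})$.

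For the gate counts, one layer of $S_p$ contains $m_p$ sweeps over all summands, where $m_p$ is the Suzuki multiplier; the recursive construction uses $5^{k-1}$ stages for $p=2k$. A weight-$w$ Pauli exponential uses the standard CX ladder of $2\max(w-1,0)$ CX gates plus basis changes. For the hybrid terms $e^{-i\Delta t\alpha_i\hat x\otimes P_i}$, the same ladder maps $P_i$ to a single $Z$, leaving one qubit-controlled displacement. This gives $N_{\rm CX}$ and $N_{\rm hyb}=\mathcal O(n_tm_pN_L)$.

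I expect the main obstacle to be the bookkeeping that reconciles the Childs et al.\ prefactor, which is $\mathcal O(1)$ and depends on $p$, with the definition of $\Gamma_{p,a}^{(L,H)}$. There the weighted Paulis appear without the $\hat x$ factor, while the ordering of letters and any nonzero $\hat x$-independent constants must be absorbed into the $\mathcal O$. Checking the factorization identity for mixed orderings is the step I would verify first.
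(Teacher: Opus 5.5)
Your proposal is correct and follows the paper's proof essentially step for step: you apply the Childs et al.\ commutator-scaling bound to the fragments $A_i=\alpha_i\hat x_{N_{\rm Fock}}\otimes P_i$ and $B_j=\beta_j\mathbb I_{\rm osc}\otimes Q_j$, sort the nested-commutator words by the number $a$ of $L$-sector letters, bound the oscillator contribution by $\|\hat x_{N_{\rm Fock}}\|^a\le 2^a(N_{\rm Fock}-1)^{a/2}$, solve for $n_t$, and count CX ladders and controlled displacements per Suzuki layer. Your exact factorization $[R_{p+1},[\cdots,[R_2,R_1]\cdots]]=\hat x_{N_{\rm Fock}}^{\,a}\otimes[\tilde R_{p+1},[\cdots,[\tilde R_2,\tilde R_1]\cdots]]$ holds because all oscillator factors are commuting powers of $\hat x_{N_{\rm Fock}}$, and it is the justification for the step the paper attributes only to submultiplicativity of the operator norm.
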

\begin{proof}
    See Appendix~\ref{proof:trotter}.
\end{proof}

\begin{remark}[Pure qubit Hamiltonian]
For \(L=0\), all mixed and pure-\(L\) commutator contributions vanish. Thus, the bound reduces to the standard commutator-based Suzuki--Trotter scaling for a qubit Hamiltonian,
\begin{align*}
n_t = \Ocal\!\left[ t^{1+1/p} \left(\frac{\Gamma_p^{(H)}}{\epsilon_t}\right)^{1/p}\right].
\end{align*}
\end{remark}

\begin{remark}[Pure oscillator--qubit interaction]
A mixed nested commutator containing exactly \(a\) \(L\)-sector summands contributes an explicit cutoff factor \(\mathcal O(N_{\rm Fock}^{a/2})\), whereas the pure-\(L\) contribution scales as \(\mathcal O(N_{\rm Fock}^{(p+1)/2})\). Therefore, when \(\Gamma_p^{(L)}\neq 0\), the pure-\(L\) sector determines the worst-case oscillator-cutoff scaling.
For \(H=0\), the mixed and pure-\(H\) commutator contributions vanish, while the pure-\(L\) inner-splitting contribution remains. Hence,
\begin{align*}
n_t = \Ocal\!\left[ t^{1+1/p} \left(\frac{N_{\rm Fock}^{(p+1)/2}\Gamma_p^{(L)}} {\epsilon_t} \right)^{1/p} \right],
\end{align*}
where the constant factor \(2^{(p+1)/p}\) has been absorbed into the asymptotic notation.

This case applies to spatially semi-discretized linear parabolic PDEs, whose real symmetric positive-semidefinite coefficient matrices have \(H=0\) in the Cartesian decomposition, as in the heat-equation examples of Section~\ref{sec:examples}. For the Dirichlet heat-equation benchmark, the additional structure of the finite-difference Laplacian yields the sharper first-order estimate in Eq.~\eqref{eq:heat-trotter-error}.
\end{remark}

\subsection{Joint Cutoff--Depth Trade-off under a Global Error Budget}
\label{subsec:joint_tradeoff}

The coefficient cutoff \(N\) and the product-formula depth \(n_t\) become coupled once a global accuracy target is fixed. Increasing \(N\) reduces the coefficient-projection error \(\epsilon_{\rm tr}(N)=\|\psi_\infty-\Pi^F_{N,r'}\psi_\infty\|_2\) controlled by Theorem~\ref{thm:state-prep}. For a fixed kernel and \(r'\), this coefficient-projection error depends on \(N\) but not on \(N_{\rm Fock}\). Representing an \(N\)-component squeezed-Fock state on the finite simulator requires \(N_{\rm Fock}\geq N\), and in practice we use a fixed embedding ratio \(N_{\rm Fock}=\lceil\kappa N\rceil\) with \(\kappa>1\). The numerical study uses \(\kappa=2\) for \((N,N_{\rm Fock})=(32,64)\) and \(\kappa=4/3\) in the scaling sweeps. Through the embedding constraint, a larger coefficient cutoff increases \(\|\hat{x}_{N_{\rm Fock}}\|\leq2\sqrt{N_{\rm Fock}-1}\) and the commutator factor \(\Lambda_p(N_{\rm Fock})\) in Theorem~\ref{thm:trotter}, so the certified product-formula depth grows with \(N\).

To quantify it, we allocate a global budget \(\epsilon\) between the two certified error components,
\begin{align}
    \epsilon_{\rm tr}(N)+\epsilon_t\leq\epsilon, \label{eq:joint-budget-split}
\end{align}
where \(\epsilon_{\rm tr}(N)\) is the squeezed-Fock truncation error of the ideal kernel state at coefficient cutoff \(N\), that is, the coefficient-projection error defined in Corollary~\ref{cor:normalized-projection} and controlled by Theorem~\ref{thm:state-prep}, and \(\epsilon_t\) is the Trotter error tolerance of the product-formula simulation of the finite joint evolution in Theorem~\ref{thm:trotter}.

The finite-\(r\) coefficient-generating pairing, the finite-\(N_{\rm Fock}\) embedding and truncated joint dynamics, state synthesis, and the postselection scale are accounted for separately through \(\epsilon_{\rm model}\) and \(\epsilon_{\rm synth}\) in Section~\ref{ss:postselection} and are not bounded by Eq.~\eqref{eq:joint-budget-split}. 

For any admissible cutoff with \(\epsilon_{\rm tr}(N)<\epsilon\), if the \(p\)th-order product-formula error of Theorem~\ref{thm:trotter} is written as \(\epsilon_t\leq c_pt^{p+1}\Lambda_p(N_{\rm Fock})/n_t^p\), the smallest certified step count at the embedding ratio \(\kappa\) is
\begin{align}
    n_t(N;\epsilon)
    := \left\lceil\left[ \frac{c_pt^{p+1}\Lambda_p(\lceil\kappa N\rceil)}{\epsilon-\epsilon_{\rm tr}(N)}\right]^{1/p}\right\rceil.
    \label{eq:nt-joint}
\end{align}
In the compiled circuit, the preparation cost grows linearly in \(N\) (the Law--Eberly synthesis uses \(N-1\) pulse pairs), while the evolution cost grows linearly in \(n_t\). The product
\begin{align}
    \mathcal R(N;\epsilon)
    := N\,n_t(N;\epsilon), \qquad N_\star(\epsilon) \in \underset{N:\,\epsilon_{\rm tr}(N)<\epsilon}{\operatorname{argmin}} \;\mathcal R(N;\epsilon), \label{eq:joint-proxy}
\end{align}
is the joint cutoff--depth proxy suggested by this coupling. It upper-bounds the additive circuit-cost structure up to a factor.

For fixed \(\kappa\), product-formula order \(p\), theorem prefactors, and error budget \(\epsilon\), Eq.~\eqref{eq:joint-proxy} defines a conditional analytical resource proxy whose minimizer \(N_\star\) depends on those choices and on the cost proxy itself, a dependence suppressed in the notation.
When the highest-order pure-\(L\) commutator sum is nonzero, Eq.~\eqref{eq:Lambda_p_definition} gives \(\Lambda_p(\lceil\kappa N\rceil)=\mathcal O(N^{(p+1)/2})\), with \(\kappa\) absorbed into the constant. Hence, for any cutoff satisfying \(\epsilon_{\rm tr}(N)\leq\xi\epsilon\) with fixed \(0<\xi<1\),
\begin{align}
    n_t(N;\epsilon)=\mathcal O\!\left[t^{1+1/p}\epsilon^{-1/p}N^{(p+1)/(2p)}\right],
    \qquad
    \mathcal R(N;\epsilon)=\mathcal O\!\left[t^{1+1/p}\epsilon^{-1/p}N^{(3p+1)/(2p)}\right]. \label{eq:joint-asymptotic}
\end{align}
Corollary~\ref{cor:near-optimal-kernel} implies that admissible cutoffs exist for sufficiently large \(N\), since \(\epsilon_{\rm tr}(N)\) decays faster than any inverse power of \(N\). Moreover, \(n_t(N;\epsilon)\) is a positive integer, so \(\mathcal R(N;\epsilon)=Nn_t(N;\epsilon)\geq N\) and the proxy diverges as \(N\to\infty\). Its minimum over the admissible integer cutoffs is attained at a finite \(N_\star(\epsilon)\).

Fig.~\ref{fig:joint-cutoff-depth} evaluates this proxy for the selected kernel parameter set \((r,r',\beta)=(1.6,0.25,0.5)\). The truncation error \(\epsilon_{\rm tr}(N)\) is computed numerically from the squeezed-Fock coefficients of \(\psi_\infty\) up to \(n=223\), with the residual mass beyond the computed window recovered from Parseval's identity. 

\begin{figure}
    \centering
    \includegraphics[width=\linewidth]{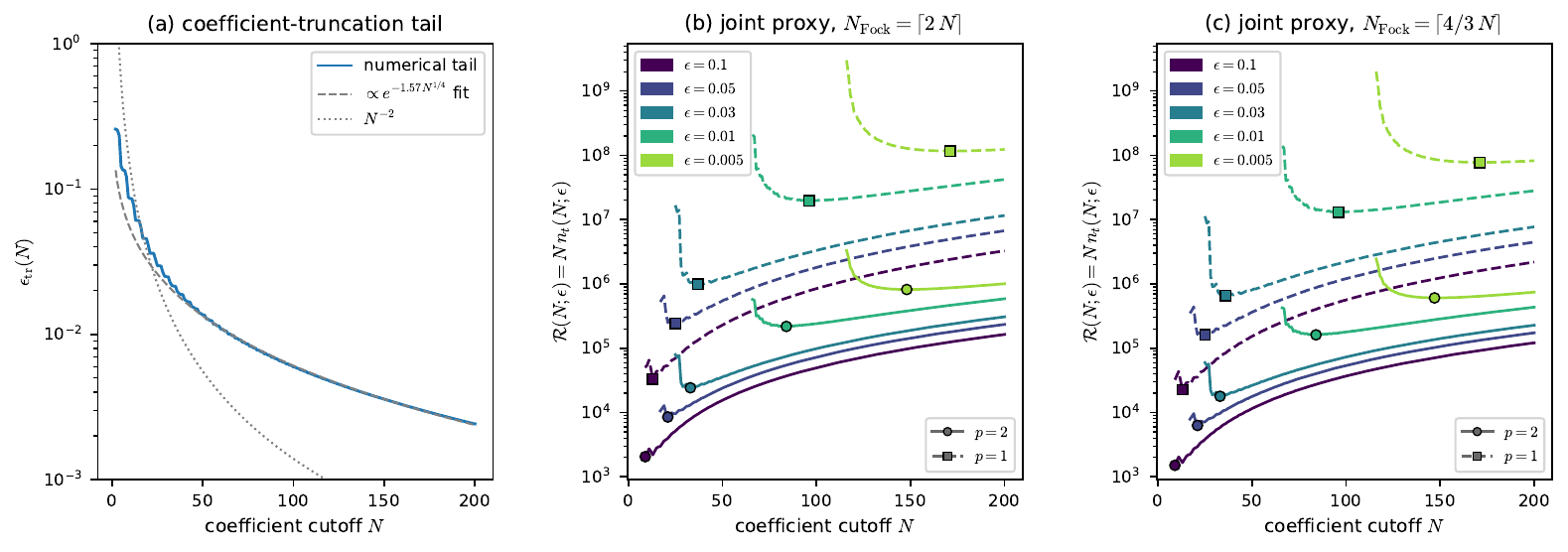}
    \caption{Illustrative normalized proxy for the joint cutoff--depth trade-off under a global error budget \(\epsilon\). (a) Numerically evaluated coefficient-truncation error \(\epsilon_{\rm tr}(N)=\|\psi_\infty-\Pi^F_{N,r'}\psi_\infty\|_2\) for the selected kernel parameters, together with a stretched-exponential fit and the power law $N^{-2}$ for comparison. (b,c) Resource proxy \(\mathcal R(N;\epsilon)=N\,n_t(N;\epsilon)\) from Eqs.~\eqref{eq:nt-joint} and~\eqref{eq:joint-proxy} with normalized theorem-dependent prefactors and embedding ratios \(N_{\rm Fock}=\lceil2N\rceil\) and \(\lceil4N/3\rceil\), for the second-order formula \(p=2\) (solid, circular markers) and the first-order formula \(p=1\) used in the numerical experiments (dashed, square markers). Markers denote the minimizing cutoff \(N_\star(\epsilon)\) of each curve.}
\label{fig:joint-cutoff-depth}
\end{figure}

In Fig.~\ref{fig:joint-cutoff-depth}(a), the measured tail is well described by the finite-range fit \(\epsilon_{\rm tr}(N)\propto e^{-1.57N^{1/4}}\) over \(40\leq N\leq200\). This behavior is consistent with the stretched-exponential upper-bound form \(\exp[-c_{\beta,r'}N^{\beta/2}]\) guaranteed by Corollary~\ref{cor:near-optimal-kernel} for \(\beta=1/2\). The fitted coefficient \(1.57\) is a descriptive parameter over the evaluated cutoff range and is not identified with the existence constant \(c_{\beta,r'}\) in the corollary.

Fig.~\ref{fig:joint-cutoff-depth}(b) and (c) show \(\mathcal R(N;\epsilon)\) with normalized theorem-dependent prefactors (\(c_pt^{p+1}=1\) and all nested-commutator sums set to one) at \(\kappa=2\) and \(\kappa=4/3\), for \(p=2\) and for the first-order formula \(p=1\) used in the numerical experiments. The minimizing cutoff increases from \(N_\star=9\) at \(\epsilon=10^{-1}\) to \(N_\star=148\) at \(\epsilon=5\times10^{-3}\) for \(p=2\) and from \(N_\star=13\) to \(N_\star=171\) for \(p=1\). Across both embedding ratios, \(N_\star\) changes little. First-order depths exceed second-order depths by one to two orders of magnitude. Both displayed orders show the same qualitative trade-off. For the first-order formula used in the numerical experiments, the same construction can be evaluated with the exact commutator structure of the Dirichlet benchmark, whose compiled inner splitting (Section~\ref{1d-heat}) contains a single noncommuting block pair with commutator norm one. At the operating point \((N,N_{\rm Fock},n_t)=(32,64,100)\), the certified first-order error bound is \(1.11\) with the exact \(\|\hat x_{64}\|=14.89\), about \(1.2\times10^{3}\) times the measured \(\epsilon_t=8.90\times10^{-4}\) of Section~\ref{sec:clean_experiments}, and certifying the measured accuracy would require \(n_t\approx1.2\times10^{5}\) steps. The measured product-formula error supports the operating cutoff. The kernel state concentrates at low Fock levels, so using the full truncated-quadrature norm produces the much larger worst-case certified bound.

\subsection{Postselection probability and sampling overhead}
\label{ss:postselection}

The finite-resource construction uses the exact finite-dimensional hybrid evolution
\begin{align*}
    U_{N_{\rm Fock}}(t) = e^{-iH_{{\rm hyb},N_{\rm Fock}}t}, \qquad H_{{\rm hyb},N_{\rm Fock}} = \hat{x}_{N_{\rm Fock}}\otimes L + \mathbb I_{\rm osc}\otimes H.
\end{align*}
Let \(U_{n_t}(t)\) denote its \(n_t\)-step product-formula approximation, and let \(\ket{\widetilde\psi_N}\) denote the state produced by the selected synthesis method as an approximation to the finite target state \(\ket{\psi_N}\). Define
\begin{align}
    \mathcal B^{\rm tar}_{N,N_{\rm Fock}}(t) &:= (\bra{\phi_r}_{\rm osc}\otimes\mathbb I_q) U_{N_{\rm Fock}}(t) (\ket{\psi_N}_{\rm osc}\otimes\mathbb I_q),\\
    \mathcal B^{\rm prep}_{N,N_{\rm Fock}}(t) &:= (\bra{\phi_r}_{\rm osc}\otimes\mathbb I_q) U_{N_{\rm Fock}}(t) (\ket{\widetilde\psi_N}_{\rm osc}\otimes\mathbb I_q),\\ \widetilde{\mathcal K}_{N,N_{\rm Fock},n_t}(t) &:= (\bra{\phi_r}_{\rm osc}\otimes\mathbb I_q) U_{n_t}(t) (\ket{\widetilde\psi_N}_{\rm osc}\otimes\mathbb I_q). \label{eq:evol-op}
\end{align}

For a normalized input state \(\ket{u_0}\), define the unnormalized finite postselected branch and its physical success probability by
\begin{align*}
    \ket{v_{N,N_{\rm Fock},n_t}(t)} &= \widetilde{\mathcal K}_{N,N_{\rm Fock},n_t}(t)\ket{u_0},\\
    p_{\rm succ}^{(N,N_{\rm Fock},n_t)}(t;u_0) &= \left\| \widetilde{\mathcal K}_{N,N_{\rm Fock},n_t}(t)\ket{u_0} \right\|_2^2.
\end{align*}
Conditioned on successful postselection, the normalized physical output state is
\begin{align*}
    \ket{u_{N,N_{\rm Fock},n_t}(t)} = \frac{\widetilde{\mathcal K}_{N,N_{\rm Fock},n_t}(t)\ket{u_0}}{\sqrt{p_{\rm succ}^{(N,N_{\rm Fock},n_t)}(t;u_0)}}.
\end{align*}

Recall that
\begin{align*}
    \alpha_{N,r} = \frac{1}{\langle\phi_r|\psi_N\rangle},
\end{align*}
where the overlap is evaluated using the normalized finite target oscillator states and is assumed to be nonzero. Define the total scaled-map error
\begin{align}
    \epsilon_{\rm tot} := \left\|\alpha_{N,r}\widetilde{\mathcal K}_{N,N_{\rm Fock},n_t}(t)-e^{-At}\right\|_2. \label{eq:total-scaled-map-error}
\end{align}
Because the physical probability is computed from the unscaled postselected branch, the corresponding finite-resource reference probability is
\begin{align*}
    p_{\rm ref}^{(N,r)}(t;u_0):= \frac{\left\|e^{-At}\ket{u_0}\right\|_2^2}{|\alpha_{N,r}|^2}.
\end{align*}

\begin{theorem}[Perturbation of the postselection probability]
\label{thm:postselection-success}
For every normalized input state \(\ket{u_0}\),
\begin{align}
    \left| p_{\rm succ}^{(N,N_{\rm Fock},n_t)}(t;u_0)- p_{\rm ref}^{(N,r)}(t;u_0)\right|
    \leq
    \frac{2\left\|e^{-At}\ket{u_0}\right\|_2\epsilon_{\rm tot}+\epsilon_{\rm tot}^2}{|\alpha_{N,r}|^2}.\label{eq:psucc-perturbation}
\end{align}
Since \(L\succeq0\) implies \(\|e^{-At}\|\leq1\), this further gives
\begin{align*}
    \left| p_{\rm succ}^{(N,N_{\rm Fock},n_t)}(t;u_0) - p_{\rm ref}^{(N,r)}(t;u_0) \right|
    \leq \frac{2\epsilon_{\rm tot} + \epsilon_{\rm tot}^2}{|\alpha_{N,r}|^2}.
\end{align*}
\end{theorem}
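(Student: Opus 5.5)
The plan is a direct perturbation argument at the level of vectors, using only the definition of \(\epsilon_{\rm tot}\) in Eq.~\eqref{eq:total-scaled-map-error}. Write \(w:=e^{-At}\ket{u_0}\) and \(v:=\widetilde{\mathcal K}_{N,N_{\rm Fock},n_t}(t)\ket{u_0}\). Since \(\alpha_{N,r}\neq0\) by assumption, we have \(p_{\rm succ}=\|v\|_2^2=\|\alpha_{N,r}v\|_2^2/|\alpha_{N,r}|^2\) and \(p_{\rm ref}=\|w\|_2^2/|\alpha_{N,r}|^2\). The difference we must control is therefore \(\bigl|\|\alpha_{N,r}v\|_2^2-\|w\|_2^2\bigr|/|\alpha_{N,r}|^2\), and the factor \(1/|\alpha_{N,r}|^2\) in the claim appears automatically.

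Set \(e:=\alpha_{N,r}v-w\). Because \(\ket{u_0}\) is normalized, the operator-norm definition of \(\epsilon_{\rm tot}\) gives \(\|e\|_2\le\epsilon_{\rm tot}\). Expanding \(\|w+e\|_2^2=\|w\|_2^2+2\,\mathrm{Re}\langle w,e\rangle+\|e\|_2^2\) and applying Cauchy--Schwarz yields
\begin{align*}
\bigl|\|\alpha_{N,r}v\|_2^2-\|w\|_2^2\bigr|\le 2\|w\|_2\,\epsilon_{\rm tot}+\epsilon_{\rm tot}^2 .
\end{align*}
An equivalent route is to factor the difference as \(\bigl(\|w+e\|_2-\|w\|_2\bigr)\bigl(\|w+e\|_2+\|w\|_2\bigr)\) and use the reverse triangle inequality. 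Dividing by \(|\alpha_{N,r}|^2\) gives Eq.~\eqref{eq:psucc-perturbation}.

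For the second inequality it suffices to show \(\|e^{-At}\|\le1\). With \(A=L+iH\) and \(L\succeq0\), we have \(\frac{\dd}{\dd t}\|e^{-At}x\|_2^2=-2\langle e^{-At}x,Le^{-At}x\rangle\le0\) for every \(x\), so the norm is nonincreasing. Alternatively, Theorem~\ref{thm:CV--DV-lchs} applies, but the direct energy estimate avoids any kernel hypotheses. Substituting \(\|w\|_2\le1\) completes the proof.

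No step is a genuine obstacle. The only point needing care is that the probabilities are defined from the unscaled branch while \(\epsilon_{\rm tot}\) is defined for the scaled map, so the rescaling by \(|\alpha_{N,r}|^2\) must be applied consistently. Nonvanishing of the finite overlap is assumed.
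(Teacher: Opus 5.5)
Your proof is correct and matches the paper's argument: the paper also writes the difference as \(\bigl(\|e^{-At}\ket{u_0}+E(t)\ket{u_0}\|_2^2-\|e^{-At}\ket{u_0}\|_2^2\bigr)/|\alpha_{N,r}|^2\) with \(E(t)=\alpha_{N,r}\widetilde{\mathcal K}_{N,N_{\rm Fock},n_t}(t)-e^{-At}\), applies Cauchy--Schwarz, and justifies \(\|e^{-At}\|\le1\) by the same energy identity \(\frac{\dd}{\dd t}\|e^{-At}x\|_2^2=-2\langle e^{-At}x,Le^{-At}x\rangle\) that you use. One small correction to your aside: Theorem~\ref{thm:CV--DV-lchs} does not give \(\|e^{-At}\|\le1\), because \(e^{-At}=\alpha_g\mathcal B_g(t)\) with \(\|\mathcal B_g(t)\|\le1\) only yields \(\|e^{-At}\|\le|\alpha_g|\) (or \(\le\|g\|_{L^1}\)), and both of these bounds are at least one, so the energy estimate is the correct justification.
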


\begin{proof}
See Appendix~\ref{proof:postselection-success}.
\end{proof}

Let
\begin{align*}
    \epsilon_{\rm model} &:= \left\|\alpha_{N,r}\mathcal B^{\rm tar}_{N,N_{\rm Fock}}(t)-e^{-At}\right\|_2,\\
    \epsilon_{\rm synth} &:=\left\| \mathcal B^{\rm prep}_{N,N_{\rm Fock}}(t) -\mathcal B^{\rm tar}_{N,N_{\rm Fock}}(t) \right\|_2,\\
    \epsilon_t(n_t) &:= \left\|\widetilde{\mathcal K}_{N,N_{\rm Fock},n_t}(t)-\mathcal B^{\rm prep}_{N,N_{\rm Fock}}(t) \right\|_2.
\end{align*}
The two-cutoff error decomposition gives
\begin{align*}
    \epsilon_{\rm tot} \leq \epsilon_{\rm model} + |\alpha_{N,r}|\left[ \epsilon_{\rm synth} + \epsilon_t(n_t) \right].
\end{align*}
Here, \(\epsilon_{\rm model}\) contains the coefficient approximation, finite-squeezing regularization, finite-\(N_{\rm Fock}\) embedding, and dynamical-truncation effects. \(\epsilon_{\rm synth}\) and \(\epsilon_t\) denote the unscaled map errors introduced by oscillator-state synthesis and product-formula simulation, respectively.
The numerical section reports the three components for the benchmark and breadth experiments and for the circuit-level scaling rows, so the dominant error source can be identified in each case.

The same total scaled-map error controls the normalized conditional output. If
\(
    \epsilon_{\rm tot} < \left\| e^{-At}\ket{u_0}\right\|_2,
\)
then
\begin{align*}
    \min_{\theta\in\mathbb R} \left\|e^{i\theta} \ket{u_{N,N_{\rm Fock},n_t}(t)}
    - \frac{e^{-At}\ket{u_0}}{\left\|e^{-At}\ket{u_0}\right\|_2}\right\|_2
    \leq \frac{2\epsilon_{\rm tot}}{\left\|e^{-At}\ket{u_0}\right\|_2}.
\end{align*}
The phase accounts for the generally complex normalization factor \(\alpha_{N,r}\).

For comparison, the exact normalized factorization in Theorem~\ref{thm:CV--DV-lchs} has physical probability
\begin{align*}
    p_{\rm succ}^{(\infty)}(t;u_0) = \left\| \mathcal B_g(t)\ket{u_0} \right\|_2^2 = \frac{\left\|e^{-At}\ket{u_0}\right\|_2^2}{|\alpha_g|^2}.
\end{align*}
Writing \(\ket{u(t)}=e^{-At}\ket{u_0}\), one obtains
\begin{align*}
    \frac{\dd}{\dd t}\|\ket{u(t)}\|_2^2 = -2\bra{u(t)}L\ket{u(t)}.
\end{align*}
For a normalized initial state, the Rayleigh--Ritz inequality gives
\begin{align*}
    e^{-2t\|L\|} \leq \left\| e^{-At}\ket{u_0} \right\|_2^2 \leq e^{-2t\lambda_{\min}(L)}.
\end{align*}
Consequently,
\begin{align}
    \frac{e^{-2t\|L\|}}{|\alpha_g|^2} \leq p_{\rm succ}^{(\infty)}(t;u_0) \leq \frac{e^{-2t\lambda_{\min}(L)}}{|\alpha_g|^2} \leq 1. \label{eq:psucc-time-bounds}
\end{align}
The final inequality uses \(|\alpha_g|\geq1\), which follows from \(1=\int_{\Rbb} g(x)\dd x=\alpha_g\braket{\phi|\psi}\) and \(|\braket{\phi|\psi}|\leq1\) for normalized states, whereas the finite-resource reference probability satisfies
\begin{align*}
    \frac{e^{-2t\|L\|}}{|\alpha_{N,r}|^2} \leq p_{\rm ref}^{(N,r)}(t;u_0) \leq \frac{e^{-2t\lambda_{\min}(L)}}{|\alpha_{N,r}|^2}.
\end{align*}

Define
\begin{align*}
    \delta_p := \frac{ 2\left\|e^{-At}\ket{u_0}\right\|_2\epsilon_{\rm tot} + \epsilon_{\rm tot}^2}{|\alpha_{N,r}|^2}.
\end{align*}
Theorem~\ref{thm:postselection-success} implies
\begin{align*}
    p_{\rm succ}^{(N,N_{\rm Fock},n_t)}(t;u_0) \geq p_{\rm ref}^{(N,r)}(t;u_0)-\delta_p.
\end{align*}
The expected number of raw circuit executions needed to obtain \(N_{\rm acc}\) accepted samples is
\begin{align*}
    \mathbb E[N_{\rm raw}] = \frac{N_{\rm acc}}{p_{\rm succ}^{(N,N_{\rm Fock},n_t)}(t;u_0)}.
\end{align*}
Whenever \(p_{\rm ref}^{(N,r)}(t;u_0)>\delta_p\), it follows that
\begin{align*}
    \mathbb E[N_{\rm raw}] \leq \frac{N_{\rm acc}}{p_{\rm ref}^{(N,r)}(t;u_0)-\delta_p}.
\end{align*}

Amplitude amplification would reduce this direct-repetition overhead from $\mathcal O(1/p_{\rm succ})$ to $\mathcal O(1/\sqrt{p_{\rm succ}})$ per accepted sample. The present circuits omit the required coherent reflections about the prepared oscillator state and the postselection outcome, and their cost lies outside the present resource accounting. The reported output-state fidelities are conditioned on successful postselection. The additional raw-sampling overhead is determined by \(p_{\rm succ}^{(N,N_{\rm Fock},n_t)}\) and its inverse and must be reported separately from state-preparation and gate costs.

The remaining implementation-dependent component is the problem-specific basis-gate synthesis of the hybrid evolutions generated by \(L\) and \(H\), which follows the Pauli-decomposition designs used by DV solvers~\cite{PhysRevResearch.6.033246, alipanah2025quantum} and the hybrid circuit synthesis of~\cite{bell2025co}.


\section{Simulating linear ODE and PDE examples}
\label{sec:examples}

\subsection{Example: Damped Harmonic Oscillator}

Consider a damped harmonic oscillator with position $x(t)$, damping rate $\zeta \geq 0$, angular frequency $\omega > 0$, and underdamped angular frequency $\omega_d:=\sqrt{\omega^2 - \zeta^2/4}>0$, governed by $\ddot{x} + \zeta \dot{x} + \omega^2 x = 0$. Introducing $v:= \frac{1}{\omega_d} (\dot{x}+\frac{\zeta}{2}x)$ turns this second-order equation into the first-order linear system $\frac{\dd}{\dd t}(x,v)^\top = -A\,(x,v)^\top$ with
\begin{align*}
A = \begin{bmatrix} \zeta/2 & -\omega_d \\ \omega_d & \zeta/2 \end{bmatrix}, \qquad L = \frac{A + A^\dagger}{2} = \frac{\zeta}{2} I \succeq 0, \qquad H = \frac{A - A^\dagger}{2i} = -\omega_d Y,
\end{align*}
where $I$ is the identity matrix and $Y$ is the Pauli-$Y$ matrix. Since \(L\propto I\), the two terms commute and $e^{-it(kL+H)} = e^{-i\zeta tk/2}\,e^{i\omega_d tY}$. The dissipative contribution is thus encoded in each LCHS branch as the $k$-dependent phase $e^{-i\zeta tk/2}$, the Hamiltonian contribution generates the qubit rotation $e^{i\omega_d tY}$, and the physical damping factor is recovered only after the CV-weighted integral and postselection.

\subsection{Example: Heat Equation}
\subsubsection{One-dimensional Dirichlet Case}
\label{1d-heat}

Consider the one-dimensional heat equation
\begin{align*}
    \frac{\partial u(x,t)}{\partial t} = \alpha \frac{\partial^2 u(x,t)}{\partial x^2},
    \qquad 0 < x < \mathcal{L},
\end{align*}
with initial condition $u(x,0)=u_0(x)$ and homogeneous Dirichlet boundary conditions $u(0,t)=u(\mathcal{L},t)=0$.
For the binary encoding used here, let $m$ be the number of qubits in the DV register and take $M=2^m$ interior grid points, let $h=\mathcal{L}/(M+1)$, and write $u_j(t)=u(x_j,t)$ for the interior values. The second-order central difference approximation gives the ODE
\begin{align*}
    \frac{\dd}{\dd t}u(t) = -A_m^{(D)} u(t), \quad A_m^{(D)} := \frac{\alpha}{h^2}T_m^{(D)},
\end{align*}
where $u(t)=(u_1(t),\dots,u_M(t))^\top$ and the Dirichlet Laplacian is
\begin{align*}
    T_m^{(D)} = 2I^{\otimes m} -  \sum_{x=0}^{M-2}\left(\ketbra{x}{x+1}+\ketbra{x+1}{x}\right).
\end{align*}

We order the qubits from least to most significant as $q_0,\dots,q_{m-1}$. For $m=2$,
\begin{align*}
    T_2^{(D)} :=
    \begin{bmatrix}
        2 & -1 & 0 & 0 \\
        -1 & 2 & -1 & 0 \\
        0 & -1 & 2 & -1 \\
        0 & 0 & -1 & 2
    \end{bmatrix}  = 2 I_1 I_0 - I_1 X_0 - \frac{1}{2}\left(X_1X_0 + Y_1Y_0\right).
\end{align*}

Throughout this and the following compilation subsections, \(\hat x\) denotes the implemented truncated quadrature \(\hat x_{N_{\rm Fock}}\) of Section~\ref{sec:trotter}, with the subscript suppressed for readability.
Thus, one hybrid CV--DV product-formula step takes the form
\begin{align*}
    \mathcal U_m^{(D)}(\Delta t) &:=
    \exp\!\left(-i\theta\,\hat x \otimes T_m^{(D)}\right),
    &&\theta := \frac{\alpha\,\Delta t}{h^2},
\end{align*}
where $\Delta t=t/n_t$ for $n_t$ product-formula steps over total runtime $t$.

Let \( \sigma_{\pm}^{(j)}  \)
denote the ladder operator \(\sigma_{\pm}=(X\pm iY)/2\) acting on qubit \(j\) and as the identity on all other qubits. The parenthesized superscript \((j)\) is a qubit label, whereas the subscript \(\pm\) identifies the raising or lowering operator. We define
\begin{align}
    R_c^{(m)} &:= \sigma_-^{(c)} \prod_{j=0}^{c-1}\sigma_+^{(j)}
    + \sigma_+^{(c)} \prod_{j=0}^{c-1}\sigma_-^{(j)},
    \quad c=0,\ldots,m-1, \label{eq:Rc-definition}
\end{align}
where the empty product for \(c=0\) is understood as the identity. Thus,
\begin{align*}
    T_m^{(D)} = 2I^{\otimes m} - \sum_{c=0}^{m-1}R_c^{(m)}.
\end{align*}
The first few terms are $R_0^{(m)} = X_0 \text{ and } R_1^{(m)} = \frac{1}{2}\left(X_1X_0 + Y_1Y_0\right)$, so that $T_2^{(D)} = 2 I_0 I_1 -  R_0^{(2)} - R_1^{(2)}$. Using a first-order product formula, one Trotter step is compiled as
\begin{align*}
    \mathcal U_m^{(D)}(\Delta t) \approx e^{-i2\theta\,\hat x \otimes I^{\otimes m}} \prod_{c=0}^{m-1} e^{i\theta\,\hat x \otimes R_c^{(m)}}.
\end{align*}
Since $A_m^{(D)}$ is real symmetric, the only product-formula error comes from the inner compilation of $T_m^{(D)}$ into the blocks $\{2I^{\otimes m}, R_0^{(m)},\dots,R_{m-1}^{(m)}\}$. The identity block commutes with every $R_c^{(m)}$, so only distinct pairs $R_c^{(m)}$ and $R_{c'}^{(m)}$ contribute:
\begin{align*}
    [\hat{x}\otimes R_c^{(m)},\;\hat{x}\otimes R_{c'}^{(m)}]=\hat{x}^2\otimes[R_c^{(m)}, R_{c'}^{(m)}].
\end{align*}
Because $\|R_c^{(m)}\|=1$, each commutator obeys $\|[R_c^{(m)}, R_{c'}^{(m)}]\|\leq 2$, and there are $\binom{m}{2}=\mathcal{O}(\log^2 M)$ such pairs. Accumulating over $n_t$ product-formula steps with $\Delta t=t/n_t$ therefore gives
\begin{equation}
    \epsilon_{t,{\rm heat}}=\mathcal{O}\!\left(\frac{t^2}{n_t}\,\frac{\alpha^2}{h^4}\,N_{\rm Fock}\,\log^2 M\right).
    \label{eq:heat-trotter-error}
\end{equation}
The step count $n_t$ is then chosen so that $\epsilon_{t,{\rm heat}}$ stays below the desired tolerance.
Each $R_c^{(m)}$ contains $2^c$ Pauli strings with coefficients $\pm 2^{-c}$,
\begin{align*}
    R_c^{(m)} = 2^{-c} \sum_{\ell=1}^{2^c} s_{c,\ell}P_{c,\ell}^{(m)}
    \qquad s_{c,\ell}\in\{\pm 1\},
\end{align*}
where each $P_{c,\ell}^{(m)}$ has support on $q_0,\dots,q_c$ and an even number of $Y$ operators. Hence any two $P_{c,\ell}^{(m)}$ differ on an even number of qubits and commute, so
\begin{align*}
    e^{i\theta\,\hat x\otimes R_c^{(m)}} = \prod_{\ell=1}^{2^c}
    e^{i\theta 2^{-c}s_{c,\ell}\,\hat x\otimes P_{c,\ell}^{(m)}}.
\end{align*}
Each factor $e^{i\theta 2^{-c}s_{c,\ell}\,\hat x\otimes P_{c,\ell}^{(m)}}$ is implemented by basis changes, a parity CX ladder on the support of $P_{c,\ell}^{(m)}$, one hybrid interaction, and uncomputation. A representative hybrid Pauli block is shown in Fig.~\ref{fig:heat_h2_block}.

\begin{figure}[h]
    \centering
    \begin{align*}
    \Qcircuit @C=0.9em @R=0.8em {
        \lstick{\ket{\psi}_{\rm osc}} & \qw & \qw & \qw & \qw & \multigate{1}{e^{-i(\theta/4)\hat{x}\otimes Z_0}} & \qw & \qw & \qw & \qw \\
        \lstick{q_0} & \gate{S^\dagger} & \gate{H} & \qw & \targ & \ghost{e^{-i(\theta/4)\hat{x}\otimes Z_0}} & \targ & \qw & \gate{H} & \gate{S} \\
        \lstick{q_1} & \gate{S^\dagger} & \gate{H} & \targ & \ctrl{-1} & \qw & \ctrl{-1} & \targ & \gate{H} & \gate{S} \\
        \lstick{q_2} & \gate{H} & \qw & \ctrl{-1} & \qw & \qw & \qw & \ctrl{-1} & \gate{H} & \qw
    }
    \end{align*}
    \caption{Compilation of the hybrid Pauli factor $e^{-i(\theta/4)\hat{x}\otimes X_2Y_1Y_0}$, the member of $R_2^{(m)}$ that enters with sign $s_{2,\ell}=-1$. Basis changes map $X_2Y_1Y_0$ to $Z_2Z_1Z_0$, a CX ladder collects the parity on $q_0$, the oscillator couples through $e^{-i(\theta/4)\hat{x}\otimes Z_0}$, and the circuit is then uncomputed. The block $e^{i\theta \hat{x}\otimes R_2^{(m)}}$ is obtained by concatenating the four commuting Pauli factors in $R_2^{(m)}$.}
    \label{fig:heat_h2_block}
\end{figure}

\subsubsection{Other One-dimensional Boundary Conditions}
\label{1d-heat-bc}

Both remaining boundary conditions are obtained from the Dirichlet block by adding correction terms. For periodic boundary conditions, we have $T_m^{(P)} = T_m^{(D)} - T_m^{{\rm corr}, (P)}$, where $T_m^{{\rm corr}, (P)} := \ketbra{0}{M-1} + \ketbra{M-1}{0}$.
The Pauli decomposition of the correction term is
\begin{align*}
    T_m^{{\rm corr}, (P)} 
    &= \ketbra{0^m}{1^m} + \ketbra{1^m}{0^m} 
    = \prod_{j=0}^{m-1}\sigma^{(j)}_+ + \prod_{j=0}^{m-1}\sigma^{(j)}_- \\
    &= 2^{1-m}
    \sum_{\substack{S\subseteq\{0,\dots,m-1\}\\ |S|\ {\rm even}}}
    (-1)^{|S|/2} \left(\prod_{j\notin S}X_j\right)\left(\prod_{j\in S}Y_j\right).
\end{align*}
Thus $T_m^{{\rm corr}, (P)}$ contributes $2^{m-1}$ commuting Pauli strings, each of weight $m$.

For homogeneous Neumann boundary conditions, we have $T_m^{(N)} = T_m^{(D)} - T_m^{{\rm corr}, (N)}$, where $T_m^{{\rm corr}, (N)} := \ketbra{0}{0} + \ketbra{M-1}{M-1}$.
As in the periodic case,
\begin{align*}
    T_m^{{\rm corr}, (N)} 
    = \ketbra{0^m}{0^m} + \ketbra{1^m}{1^m} 
    = 2^{1-m} \sum_{\substack{S\subseteq\{0,\dots,m-1\}\\ |S|\ even}} \left(\prod_{j\in S}Z_j\right)\left(\prod_{j\notin S}I_j\right).
\end{align*}
Hence $T_m^{{\rm corr}, (N)}$ contributes $2^{m-1}$ commuting $Z$-type strings, one of which is the identity. In both cases one reuses $\mathcal U_m^{(bc)}(\Delta t):= \exp\!\left(-i\theta\,\hat x \otimes T_m^{(bc)}\right)$.

\subsubsection{\texorpdfstring{$d$}{d}-dimensional Extension}

Let $\Omega=\prod_{i=1}^d(0,L_i)$ and consider the $d$-dimensional heat equation
\begin{align*}
    \frac{\partial u(\mathbf{x},t)}{\partial t}
    &= \alpha \sum_{i=1}^d \frac{\partial^2 u(\mathbf{x},t)}{\partial x_i^2},
    && \mathbf{x}\in\Omega,
\end{align*}
with separable homogeneous boundary conditions $bc_i\in\{D,P,N\}$ along each axis. Let the $i^{\rm th}$ coordinate be discretized with $M_i=2^{m_i}$ grid points and spacing $h_i$, and denote by $I_{m_i}=I^{\otimes m_i}$ the identity on the corresponding coordinate subregister. The semidiscrete operator is the Kronecker sum
\begin{align*}
    A_d^{(\mathbf{bc})}
    = \alpha \sum_{i=1}^d \frac{1}{h_i^2}
    \left(I_{m_1}\otimes \cdots \otimes I_{m_{i-1}} \otimes T_{m_i}^{(bc_i)} \otimes I_{m_{i+1}} \otimes \cdots \otimes I_{m_d} \right),
\end{align*}
where $\mathbf{bc}=(bc_1,\dots,bc_d)$. The coordinate blocks commute because they act on disjoint qubit subregisters. Hence
\begin{align*}
    \mathcal U_d^{(\mathbf{bc})}(\Delta t)
    &:= \exp\!\left(-i\Delta t\,\hat x \otimes A_d^{(\mathbf{bc})}\right) \nonumber \\
    &= \prod_{i=1}^d
    \exp\!\left[
        -i\theta_i\,\hat x \otimes
        \left(
            I_{m_1}\otimes \cdots \otimes I_{m_{i-1}}
            \otimes T_{m_i}^{(bc_i)}
            \otimes I_{m_{i+1}} \otimes \cdots \otimes I_{m_d}
        \right)
    \right],
\end{align*}
with $\theta_i=\alpha\Delta t/h_i^2$. Each factor is compiled by the corresponding one-dimensional block.

We summarize exact gate counts of Trotter blocks in Table~\ref{tab:heat-gate-scaling}, with details provided in Appendix~\ref{ss:heat-count}.
\begin{table}[h]
    \centering
    \caption{Gate counts for the $d$-dimensional heat-equation compilation after $n_t$ first-order product-formula steps, assuming the same boundary condition on each axis. Here $M_i=2^{m_i}$ is the grid size along axis $i$. Details are provided in Appendix~\ref{ss:heat-count}.}
    \label{tab:heat-gate-scaling}{\small
    \setlength{\tabcolsep}{8pt}
    \begin{tabular}{@{}lccc@{}}
        \toprule
        Boundary & Hybrid gates & CX gates & 1-qubit gates \\
        \midrule
        Dirichlet &
        $n_t\sum\limits_{i=1}^d M_i$ &
        $n_t\sum\limits_{i=1}^d \left[2(m_i-2)M_i+4\right]$ &
        $n_t\sum\limits_{i=1}^d \left[3(m_i-1)M_i+2\right]$
        \\[0.6em]
        \midrule
        Periodic &
        $n_t\sum\limits_{i=1}^d \frac{3}{4}M_i$ &
        $n_t\sum\limits_{i=1}^d
        \left[\left(\frac{3}{2}m_i-\frac{7}{2}\right)M_i+4\right]$ &
        $n_t\sum\limits_{i=1}^d \left[
        \begin{cases}
            6, & m_i=2,\\
            2+\left(\frac{9}{4}m_i-\frac{13}{4}\right)M_i,
            & m_i\geq 3
        \end{cases}\right]$
        \\[0.6em]
        \midrule
        Neumann &
        $n_t\sum\limits_{i=1}^d \left(\frac{3}{2}M_i-1\right)$ &
        $n_t\sum\limits_{i=1}^d
        \left[\left(\frac{5}{2}m_i-5\right)M_i+6\right]$ &
        $n_t\sum\limits_{i=1}^d \left[3(m_i-1)M_i+2\right]$
        \\
        \bottomrule
    \end{tabular}}
\end{table}

\section{Circuit-Level Realization}
\label{sec:simulation}

In this section we describe the circuit-level realization used in our numerical study of the time-independent homogeneous CV--DV LCHS map introduced in Section~\ref{ss:lchs}.
Writing \(A=L+iH\) with \(L\succeq0\), the implemented unscaled postselected map at runtime \(T\) is \(\widetilde{\mathcal K}_{N,N_{\rm Fock},n_t}(T)\), defined in Eq.~\eqref{eq:evol-op}, where \(U_{n_t}(T)\) is the product-formula approximation discussed in Section~\ref{sec:trotter}.

\subsection{CV State Preparation}
\label{ss:cv-prep}

The truncated squeezed-Fock expansion coefficients $\{C_n\}_{n=0}^{N-1}$ are obtained by evaluating the integral in Eq.~\eqref{eq:coefficients} with adaptive numerical quadrature over a finite symmetric interval chosen so that the Gaussian-damped tails are below the numerical tolerance. Under the $\hbar=2$ convention adopted throughout this paper, the Gaussian damping factor in that integral is $\gamma=\frac14(e^{-2r'}-e^{-2r})$ with squeezing widths $\sigma=e^r$ and $\sigma'=e^{r'}$.

These coefficients define the normalized unsqueezed finite seed state $\ket{\chi_N}$ in Eq.~\eqref{eq:fock}. Both synthesis routes below target this seed state. Applying the outer squeezing operation $\mathsf S_{N_{\rm Fock}}(r')$ and normalizing then produces the final finite kernel state $\ket{\psi_N^{(N_{\rm Fock})}}$ defined in Eq.~\eqref{eq:finite-target-states}.

\paragraph{Law--Eberly protocol.}
This scheme uses the deterministic oscillator-state synthesis procedure of Law and Eberly~\cite{law1996arbitrary}, later implemented in a superconducting resonator by Hofheinz \emph{et al.}~\cite{hofheinz2009synthesizing}.  The target of this step is the state $\ket{\chi_N}$ prepared jointly with a two-level system in its ground state. The LE construction alternates two qubit--oscillator primitives. The first is a Jaynes--Cummings exchange pulse on the adjacent manifold $\{\ket{e,n-1},\ket{g,n}\}$,
\begin{align}
    S_n(\alpha,\phi) = \exp\!\left[ -i\frac{\alpha}{\sqrt n} \left( e^{i\phi}\sigma_- a^\dagger +e^{-i\phi}\sigma_+ a \right) \right], \qquad n\geq 1,
\label{eq:le-jc-pulse}
\end{align}
where the $1/\sqrt n$ factor compensates the oscillator matrix element. In the present Bosonic Qiskit (version 15.1) circuit implementation, this pulse is emitted as a $\code{cv\_jc}$ gate. The second is the qubit rotation,
\begin{align}
    R_n(\theta,\varphi) = \mathbb I_{\rm osc}\otimes \exp\!\left[ -\frac{i\theta}{2} \left(\cos\varphi\,X+\sin\varphi\,Y\right) \right],
\label{eq:le-qubit-rotation}
\end{align}
which is a standard Qiskit $\code{r}$ gate on the LE ancillary qubit.

The synthesis solves the time-reversed problem as suggested in~\cite{law1996arbitrary}. Starting from $\ket{g}\ket{\chi_N}$, alternating JC pulses and qubit rotations eliminate the amplitudes from the highest occupied Fock level downward until the state reaches $\ket{g,0}$ up to an irrelevant global phase, and the preparation circuit is the adjoint of this unpreparation sequence in reverse order, as summarized in Algorithm~\ref{alg:le-finite-fock}.

\begin{algorithm}[h]
\caption{Law--Eberly synthesis used for the CV state preparation oracle}
\label{alg:le-finite-fock}
\begin{algorithmic}[1]
\Require Normalized coefficients $\{C_n\}_{n=0}^{N-1}$
\Ensure Gate list preparing $\ket{\chi_N}$
\State Set the symbolic reverse state $\ket{\Psi_{\rm rev}}\gets \ket{g}\sum_{n=0}^{N-1}C_n\ket{n}$
\State Set $\mathcal P_{\rm unprep}\gets[\,]$
\For{$n=N-1,N-2,\ldots,1$}
    \State Set $a_n=\langle e,n-1|\Psi_{\rm rev}\rangle$ and $b_n=\langle g,n|\Psi_{\rm rev}\rangle$
    \If{$|b_n|>0$}
        \Comment{Choose $S_n$ so that the updated amplitude on $\ket{g,n}$ is zero.}
        \If{$|a_n|=0$}
            \State Set $\alpha_n=\pi/2$ and $\phi_n=0$
        \Else
            \State Set $\alpha_n=\arctan(|b_n|/|a_n|)$ and $\phi_n=\arg b_n-\arg a_n-\pi/2$
        \EndIf
        \State $\ket{\Psi_{\rm rev}}\gets S_n(\alpha_n,\phi_n)\ket{\Psi_{\rm rev}}$
        \State Append $S_n(\alpha_n,\phi_n)$ to $\mathcal P_{\rm unprep}$
    \EndIf
    \State Set $u_{n-1}=\langle e,n-1|\Psi_{\rm rev}\rangle$ and $v_{n-1}=\langle g,n-1|\Psi_{\rm rev}\rangle$
    \If{$|u_{n-1}|>0$}
        \Comment{Choose $R_{n-1}$ so that the updated amplitude on $\ket{e,n-1}$ is zero.}
        \If{$|v_{n-1}|=0$}
            \State Set $\theta_{n-1}=\pi$ and $\varphi_{n-1}=0$
        \Else
            \State Set $\theta_{n-1}=2\arctan(|u_{n-1}|/|v_{n-1}|)$ and $\varphi_{n-1}=-\arg u_{n-1}+\arg v_{n-1}+\pi/2$
        \EndIf
        \State $\ket{\Psi_{\rm rev}}\gets R_{n-1}(\theta_{n-1},\varphi_{n-1})\ket{\Psi_{\rm rev}}$
        \State Append $R_{n-1}(\theta_{n-1},\varphi_{n-1})$ to $\mathcal P_{\rm unprep}$
    \EndIf
\EndFor
\State Return $\mathcal P_{\rm prep}=(\mathcal P_{\rm unprep})^\dagger$ in reverse order
\end{algorithmic}
\end{algorithm}

The circuit-level realization of the returned preparation sequence is shown in Fig.~\ref{fig:le_state_prep_circuit}. For a target with support on all $N$ levels, this produces $N-1$ JC pulses and $N-1$ qubit rotations. Thus, with \(N=32\) used in Section~\ref{sec:clean_experiments}, the ideal Law--Eberly realization reports \(31\) JC pulses and \(31\) qubit rotations.

\begin{figure}[h]
\centering
\begin{align*}
\Qcircuit @C=1.05em @R=1.0em {
\lstick{\ket{0}_{\text{osc}}}
  & \qw
  & \qw
  & \multigate{1}{S_{1}}
  & \qw
  & \multigate{1}{S_{2}}
  & \qw
  & \cdots
  &
  & \qw
  & \multigate{1}{S_{N-1}}
  & \qw
  & \rstick{\ket{\chi_N}} \\
\lstick{\ket{g}_{\rm LE}}
  & \qw
  & \gate{R_{0}}
  & \ghost{S_{1}}
  & \gate{R_{1}}
  & \ghost{S_{2}}
  & \qw
  & \cdots
  &
  & \gate{R_{N-2}}
  & \ghost{S_{N-1}}
  & \qw
  & \rstick{\ket{g}_{\rm LE}}
}
\end{align*}
\caption{Circuit-level Law--Eberly state preparation of the finite oscillator seed state. Each $R_n$ block is a standard \texttt{r} gate in Qiskit and each $S_n$ block is emitted as a \texttt{cv\_jc} gate.}
\label{fig:le_state_prep_circuit}
\end{figure}

\paragraph{Variational SNAP-plus-displacement (SNAP+$\Dcal$) synthesis.}
Another implementation uses universal oscillator control generated by selective number-dependent arbitrary phase (SNAP) gates together with displacements~\cite{heeres2015cavity, Krastanov2015}. In this picture, we seek a variational approximation of the form 
\begin{align*}
    U_{\rm SNAP+\Dcal} = \prod_{\ell=1}^{N_{\rm layer}} \left[ \Dcal(\alpha_\ell)\, \mathrm{SNAP}(\bm{\theta}_\ell) \right],
\end{align*}
where $N_{\rm layer}$ is a user-defined parameter for the number of $\mathrm{SNAP}+\Dcal$ layers and $\mathrm{SNAP}(\bm{\theta}_\ell) = \sum_{n=0}^{N-1} e^{i\theta_{\ell,n}}\ketbra{n}{n}$.
The variational parameters are optimized to maximize the state-preparation fidelity,
\begin{align*}
    \max_{\{\alpha_\ell,\bm{\theta}_\ell\}}
    \left|\bra{\chi_N}
    U_{\rm SNAP+\Dcal}\ket{0}\right|^2.
\end{align*}
This route is implemented directly at the circuit level in Bosonic Qiskit~\cite{bosqis, bosonic-qiskit-repo} using $\code{cv\_snap}$ and $\code{cv\_d}$ gates.
Previous numerical studies report high-fidelity oscillator-state preparation with relatively short optimized SNAP+$\Dcal$ sequences~\cite{fosel2020efficient, Kudra2022}. A separate study applies the ansatz to non-trivial electronic ground states from quantum chemistry~\cite{Dutta2024EST}.

\subsection{Gate-level Compilation of Hybrid Evolution}
\label{sec:impl-trotter}
Building on Section~\ref{sec:trotter}, the compiled evolution consists of hybrid factors \(e^{-i\Delta t\,\alpha_i\hat{x}\otimes P_i}\) and qubit-only factors \(e^{-i\Delta t\,\beta_jQ_j}\).
The latter are standard DV Pauli rotations, so the only new CV--DV ingredient is the compilation of a single hybrid factor \(e^{-i\Delta t\,\alpha_i\hat{x}\otimes P_i}\).
Let \(V_i\) denote the single-qubit basis changes for \(P_i\), and let \(W_i\) denote a parity-mapping circuit with target qubit \(q_\star\), chosen such that
\begin{align*}
W_iV_iP_iV_i^\dagger W_i^\dagger=Z_{q_\star}.
\end{align*}
The hybrid factor then becomes
\begin{align*}
    e^{-i\Delta t\,\alpha_i\hat{x}\otimes P_i} = V_i^\dagger W_i^\dagger e^{-i\Delta t\,\alpha_i\hat{x}\otimes Z_{q_\star}} W_iV_i.
\end{align*}
The primitive hybrid interaction admits the conditional-displacement form
\begin{align*}
    c\Dcal(-i\Delta t\,\alpha_i)
    &= e^{-i\Delta t\,\alpha_i\hat{x}\otimes Z_{q_\star}}\\
    &= \Dcal(-i\Delta t\,\alpha_i)\otimes\ket{0}\!\bra{0}_{q_\star} + \Dcal(i\Delta t\,\alpha_i)\otimes\ket{1}\!\bra{1}_{q_\star},
\end{align*}
where \(\Dcal(\zeta)=e^{\zeta a^\dagger-\zeta^*a}\) and \(\hat{x}=a+a^\dagger\) under the \(\hbar=2\) convention.

\section{Numerical Experiments}
\label{sec:clean_experiments}

The primary benchmark in this section is the one-dimensional heat equation with Dirichlet, Neumann, and periodic boundary conditions on a two-qubit register. The later subsections broaden the study to larger one-dimensional grids, a two-dimensional heat stress case, and non-normal advection--diffusion generators, together with a scaling study, a photon-loss sensitivity study, and a comparison with DV LCHS.

The implementations in this section are based on Bosonic Qiskit (version 15.1)~\cite{bosqis, bosonic-qiskit-repo}.
Unless stated otherwise, the target is the exact discretized solution $u(T)=e^{-AT}u(0)$ at $T=1$, with oscillator truncation $N_{\rm Fock}=64$ and first-order Trotterization with $n_t=100$ steps. The starting benchmark instance is a four-dimensional semidiscrete system with $M=4$, $\alpha=1$, $h=1$, $T=1$, and initial state $u(0)=\ket{01}$. Since $\alpha/h^2=1$ in all three cases, the semidiscrete generators are
\begin{align*}
    T^{(D)}_2&= 2 I_1I_0 - I_1X_0 - \frac{1}{2}(X_1X_0 + Y_1Y_0), \\
    T^{(N)}_2&= \frac{3}{2} I_1I_0 - I_1X_0 - \frac{1}{2}(X_1X_0 + Y_1Y_0 + Z_1Z_0), \\
    T^{(P)}_2&= 2 I_1I_0 - I_1X_0 - X_1X_0 .
\end{align*}
For normalized states, fidelities are reported as
\begin{align*}
F(\psi,\phi)=|\langle \psi | \phi\rangle|^2
\end{align*}
and the exact target solution vector is $u_{\rm exact}(T)=e^{-AT}u(0)$.
We use
\begin{align*}
    F_{\rm LE}=F(u_{\rm LE},u_{\rm exact}),
\end{align*}
where $u_{\rm LE}$ is the postselected system-register solution vector produced by the CV--DV LCHS circuit after postselection using the circuit-level LE route. The LE route deterministically prepares the target truncated oscillator state in the ideal qubit--oscillator circuit model, while SNAP+$\Dcal$ gives a variational approximation. Thus, $F_{\rm LE}$ reports the corresponding end-to-end PDE fidelities.
The cutoff $N$ denotes the number of squeezed-Fock coefficients in the truncated CV oracle state and is distinct from the ordinary-Fock simulation dimension $N_{\rm Fock}$.

Beyond these conditional normalized fidelities, the primary error metric in this section is the fixed-target-scale map error
\begin{align*}
    \varepsilon_F = \frac{\left\|\alpha_{N,r}\widetilde{\mathcal K}_{N,N_{\rm Fock},n_t}(T)-e^{-AT}\right\|_F}{\left\|e^{-AT}\right\|_F}.
\end{align*}
Here, \(\widetilde{\mathcal K}_{N,N_{\rm Fock},n_t}\) is the unscaled physical postselected map, and \(\alpha_{N,r}=\braket{\phi_r|\psi_N}^{-1}\) is fixed by the normalized target preparation and postselection states before evaluating any system input. The evaluation applies this single scale uniformly across all inputs and columns, so \(\varepsilon_F\) includes amplitude and global-phase errors relative to the fixed physical convention. Conditional fidelities omit these two error components and serve as a secondary diagnostic throughout this section.

For an input \(u_0\), the scaled relative error and the conditional fidelity of the postselected circuit-level map are
\begin{align*}
\varepsilon_{\rm rel}(u_0) &= \frac{\left\|\alpha_{N,r}\widetilde{\mathcal K}_{N,N_{\rm Fock},n_t}(T)u_0-e^{-AT}u_0\right\|_2}{\left\|e^{-AT}u_0\right\|_2}, \\
F_{\rm cond}(u_0) &= F\!\left(u_{N,N_{\rm Fock},n_t}(T),\ \frac{e^{-AT}u_0}{\|e^{-AT}u_0\|_2}\right),
\end{align*}
where \(u_{N,N_{\rm Fock},n_t}(T)\) is the normalized conditional output state of Section~\ref{ss:postselection}, so for the circuit-level LE route \(F_{\rm LE}\) is \(F_{\rm cond}\) evaluated at the benchmark input. For the benchmark and breadth finite-map experiments, and for the circuit-level scaling rows where applicable, we record the worst values of \(\varepsilon_{\rm rel}\) and \(1-F_{\rm cond}\) over the computational-basis inputs, the fixed diagnostic input when applicable, and the physical postselection probability \(p_{\rm succ}\) with its reciprocal \(1/p_{\rm succ}\). The prepared oracle state is fixed to the gauge \(\braket{\psi_N|\widetilde\psi_N}>0\), applied once at classical readout.

We also provide the measured components of the two-cutoff error decomposition: \(\epsilon_{\rm model}\) for the finite kernel encoding under ideal state injection, \(\epsilon_{\rm synth}\) for oscillator-state preparation, and \(\epsilon_t\) for product-formula discretization. Selected values are reported in the text and tables to identify the dominant error source.

\subsection{Parameter Selection}

Kernel parameters are selected by a coarse grid search at the exact-truncated-map level, so no circuits enter the selection, with $r$ from $0.8$ to $2.2$, $r'$ from $0.1$ to $1.0$, $\beta$ from $0.2$ to $0.9$, and $N\in\{16,24,32\}$, all at $N_{\rm Fock}=64$ and $T=1$. Fig.~\ref{fig:clean_sensitivity_exact} shows the resulting landscape of the fixed-scale error $\varepsilon_F$. From this grid we select one kernel parameter set for the entire paper,
\begin{align*}
    \theta^\star:=(r,\,r',\,\beta,\,N)=(1.6,\ 0.25,\ 0.5,\ 32),
\end{align*}
used for all three boundary conditions and all subsequent experiments, except that the scaling sweeps keep $(r,r',\beta)$ fixed and vary $(N,N_{\rm Fock})$ explicitly and the SNAP+$\Dcal$ state-preparation study also evaluates $N=16$ at the same $(r,r',\beta)$.
Appendix~\ref{app:ha-retune} additionally performs a common local \((r,r')\) scan for both kernels and evaluates the Huang--An kernel at the grid-selected point from that scan.
The selection criterion combines the worst-boundary map error with the benchmark postselection probability. Five grid points have worst-boundary errors below $0.7\%$. Among them, $\theta^\star$ ranks third in error and gives the largest worst-boundary benchmark success probability, $10.5\%$, compared with at most $8.9\%$ for the other four. Across the full grid, $\theta^\star$ is undominated in these two quantities. 

\begin{figure}[ht!]
\centering
\includegraphics[width=0.93\linewidth]{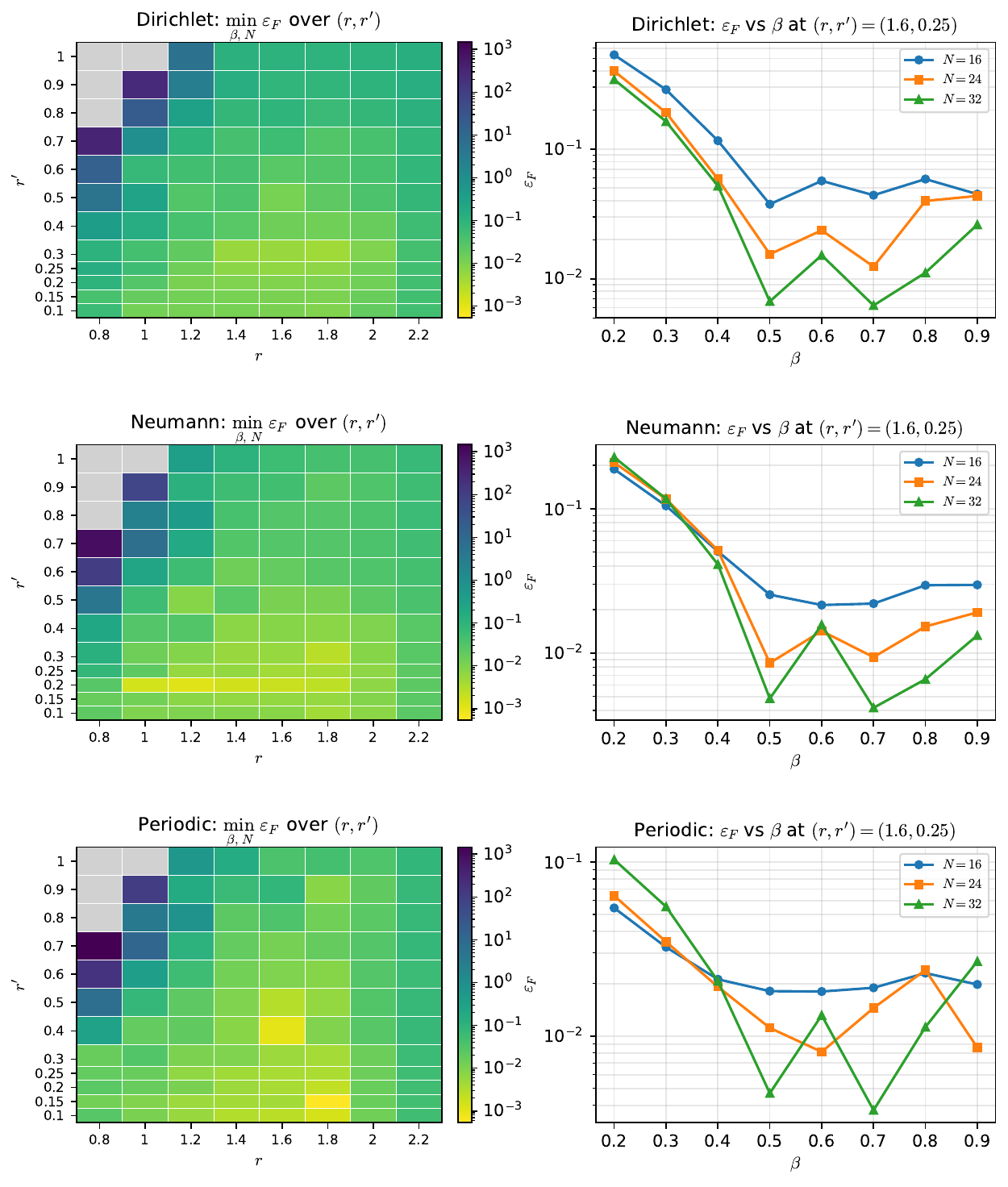}
\caption{Fixed-scale map error $\varepsilon_F$ over the coarse selection grid, evaluated on the exact truncated map at $N_{\rm Fock}=64$, $T=1$ under ideal CV state injection. Left column: profile of $\varepsilon_F$ over $(r,r')$, minimized over $(\beta,N)$. Grey cells are invalid ($r'\ge r$) or have numerically vanishing zeroth-moment overlap. Right column: $\varepsilon_F$ versus $\beta$ at $(r,r')=(1.6,0.25)$ for $N\in\{16,24,32\}$.}
\label{fig:clean_sensitivity_exact}
\end{figure}

At $\theta^\star$ the exact-map errors are $\varepsilon_F=0.67\%$, $0.48\%$, and $0.47\%$ for Dirichlet, Neumann, and periodic, respectively. The landscape shows a nonmonotonic dependence of $\varepsilon_F$ on $N$ at fixed squeezing parameters because the finite overlap scale $\alpha_{N,r}=\braket{\phi_r|\psi_N}^{-1}$ enters the fixed-scale error and the success probability jointly. This dependence requires selecting the cutoffs and squeezing parameters together.

The parameter set $\theta^\star$ uses $r=1.6$, corresponding to about $13.9$ dB of squeezing with mean photon number $\sinh^2r\approx5.6$, and the ordinary-Fock tails beyond the embedding cutoff $N_{\rm Fock}=64$ are small for every state involved. The largest is the tail norm $\tau_{\phi,N_{\rm Fock}}=3.59\times10^{-2}$ of the ideal postselection squeezed vacuum (tail probability $1.29\times10^{-3}$), which enters the $\epsilon_{\rm Fock}$ bound in Eq.~\eqref{eq:fock-embedding-bound}. The preparation squeezed vacuum at $r'=0.25$ has tail probability below $10^{-10}$. For the exactly squeezed finite kernel state, the $\tau_{\psi,N_{\rm Fock}}$ of the same bound equals $\|(\mathbb I-\Pi_{N_{\rm Fock}})\,S(r')\ket{\chi_N}\|_2=3.75\times10^{-4}$ at tail probability weight $1.41\times10^{-7}$. Consistently, the finite construction of Eq.~\eqref{eq:finite-target-states} matches the exactly squeezed finite kernel state to infidelity $1.70\times10^{-7}$. The coefficient-integration window was increased with Hermite order, and the reported quantities were stable under a further enlargement of the window and tighter quadrature tolerances.

For the variational SNAP+$\Dcal$ alternative we report the classical optimization cost at the selected $(r, r', \beta)$ values, for $N=16$ and $N=32$, using a fixed depth of $30$ ansatz layers and L-BFGS-B with $9$ random starts of $26.7$ minutes each on an Apple M3 Max chip. For $N=16$, a start first reaches a state-preparation infidelity below $10^{-6}$ after $14.9$ minutes, and the best observed infidelity over all starts is $2.78\times10^{-7}$. For $N=32$, the best infidelity reached within the $26.7$-minute starts is $3.03\times10^{-4}$, and continued optimization of the best start reaches $8.67\times10^{-7}$ after $3.4$ hours of optimization on that start. The optimization records along with seeds are provided in~\cite{cv-dv-repo}.

The diagnostic input, constructed orthogonal to the slowest-decaying eigenvector of \(L\), probes a fast-decaying direction beyond the computational basis. It indeed gives the largest scaled relative error \(\varepsilon_{\rm rel}\) in all three boundary cases, $1.17\%$, $1.13\%$, and $2.48\%$ for Dirichlet, Neumann, and periodic boundaries, with conditional infidelities \(1-F_{\rm cond}\) of at most $1.31\times10^{-4}$. Only for the periodic boundary does the benchmark input $u(0)=\ket{01}$ have the larger conditional infidelity, as Table~\ref{tab:clean_oracle_baseline} reports.

\begin{table}[ht!]
\centering 
\caption{Circuit-level LE benchmark results at $(r,r',\beta,N)=(1.6,0.25,0.5,32)$, $N_{\rm Fock}=64$, $n_t=100$, $T=1$. The shared kernel state has stellar rank $N-1=31$ and non-Gaussianity $\delta_{\rm nG}=1.33$. LE prepares it exactly in the ideal circuit model ($\epsilon_{\rm synth}\le3\times10^{-16}$). $\varepsilon_F$ is the fixed-scale map error of the postselected map.
$1-F_{\rm LE}$ is the conditional infidelity of the benchmark input $u(0)=\ket{01}$. Here, $p_{\rm succ}$ is the physical postselection probability of the benchmark input, and $1/p_{\rm succ}$ is the expected number of attempts.}
\label{tab:clean_oracle_baseline}
\begin{tabular}{lcccc}
\toprule
Boundary & $\varepsilon_F$ & $1-F_{\rm LE}$ & $p_{\rm succ}$ & $1/p_{\rm succ}$ \\
\midrule
Dirichlet & $0.70\%$ & $1.80\times10^{-5}$ & $10.47\%$ & $9.55$ \\
Neumann   & $0.56\%$ & $1.39\times10^{-5}$ & $16.48\%$ & $6.07$ \\
Periodic  & $0.47\%$ & $2.13\times10^{-5}$ & $15.36\%$ & $6.51$ \\
\bottomrule
\end{tabular}
\end{table}

The full-map spectral-norm decomposition identifies the finite kernel encoding as the dominant measured error source. Across the three boundary conditions, $\epsilon_{\rm model}$ ranges from $3.37\times10^{-3}$ to $3.78\times10^{-3}$ and $\epsilon_{\rm synth}\le3\times10^{-16}$. The product-formula errors are $\epsilon_t=8.90\times10^{-4}$ and $1.76\times10^{-3}$ for Dirichlet and Neumann and $1.48\times10^{-14}$ for periodic. The periodic Pauli terms commute, making the product formula exact up to numerical precision. At $n_t=100$, $\epsilon_{\rm model}$ exceeds $|\alpha_{N,r}|(\epsilon_{\rm synth}+\epsilon_t)$ in all three cases. These values lie more than an order of magnitude below the uniform \(L^1\) estimate of Lemma~\ref{lem:finite-r-map} evaluated in Section~\ref{ss:state-prep}. That estimate bounds the ideal-oscillator map \(\mathcal K_{N,r}\) for every admissible generator and evolution time. Its scope differs from the composite \(\epsilon_{\rm model}\), and it is accordingly loose for the specific benchmark generators. Increasing $n_t$ can reduce $\epsilon_t$ and may lower the total error further. The finite-model contribution remains.

For a variationally prepared normalized state $\widetilde\psi_N$ with preparation fidelity $F(\psi_N,\widetilde\psi_N)$, the phase-aligned statevector error is $\sqrt{2-2\sqrt{F(\psi_N,\widetilde\psi_N)}}$. Contractivity of the postselection block propagates this statevector error into a bound on $\epsilon_{\rm synth}$. The state infidelity $1-F(\psi_N,\widetilde\psi_N)$ and $\epsilon_{\rm synth}$ are distinct quantities. Figure~\ref{fig:clean_scaling} varies $\epsilon_t$ through $n_t$. Changes in $\epsilon_{\rm model}$ follow from the encoding parameters $(N,N_{\rm Fock},r,r',\beta)$.

Postselection probabilities across all evaluated inputs, including the diagnostic input, span $1.72\%$ to $10.47\%$, $10.64\%$ to $22.82\%$, and $0.70\%$ to $15.36\%$ for Dirichlet, Neumann, and periodic boundaries, respectively, reflecting the input dependence of the decayed-solution norm $\|e^{-At}\ket{u_0}\|_2^2$ that controls the success probability. At the benchmark point, the measured deviations \(|p_{\rm succ}^{(N,N_{\rm Fock},n_t)}-p_{\rm ref}^{(N,r)}|\) lie within the perturbation bound of Theorem~\ref{thm:postselection-success} for all three boundary conditions, and the largest deviation, \(1.14\times10^{-3}\) for the Dirichlet boundary, reaches \(62\%\) of its bound of \(1.84\times10^{-3}\).

\subsection{Resource Costs}

The $M=4$ benchmark uses $N_{\rm Fock}=64$, represented by six oscillator-emulation qubits in Bosonic Qiskit, together with two system qubits and one additional ancillary qubit for Law--Eberly synthesis. The breadth cases use three to five system qubits according to their dimension, while keeping $N_{\rm Fock}=64$. The scaling study explicitly varies $N_{\rm Fock}\in\{16,32,64\}$, corresponding to four, five, and six oscillator-emulation qubits.

Law--Eberly synthesis provides the circuit-level state-loading baseline, with the pulse counts of Table~\ref{tab:clean_resource_counts} identical for all three boundary conditions.

The benchmark cost divides into input loading, oscillator kernel-state preparation, hybrid evolution, and oscillator postselection. Table~\ref{tab:clean_resource_counts} displays the two nontrivial compiled blocks, Table~\ref{tab:clean_oracle_baseline} reports the postselection probability and direct-repetition overhead, input loading costs one $X$ gate for the common input $\ket{01}$, and the slow-mode-orthogonal diagnostic uses simulator initialization and is excluded from hardware gate counts.

The periodic coefficient matrix $A^{(P)}$ contains only two non-identity Pauli terms and no Pauli-$Y$ term, whose basis changes would require both $R_Z$ and $H$ gates. This yields the lower Trotter-block counts of Table~\ref{tab:clean_resource_counts}, although $A^{(P)}$ has more non-zero matrix entries than $A^{(D)}$.

\begin{table}[h]
\centering 
\caption{Resource counts for the circuit-level LE route at $\theta^\star$. The Trotter columns refer only to the 100-step hybrid evolution block and aggregate single-qubit DV gates, CXs, unconditional CV displacements ($\Dcal$), and controlled displacements ($c\Dcal$). The two outer squeezing gates are excluded. LE synthesis uses one ancillary qubit.}
\label{tab:clean_resource_counts}
\begin{tabular}{lccc cccc}
\toprule
& \multicolumn{3}{c}{Law--Eberly} & \multicolumn{4}{c}{Trotter Block} \\
\cmidrule(lr){2-4}\cmidrule(lr){5-8}
Boundary & JC pulses & R gates & Ancillary qubit & 1Q gates & CXs & $\Dcal$ & $c\Dcal$ \\
\midrule
Dirichlet & 31 & 31 & 1 & 1400 & 400 & 100 & 300 \\
Neumann   & 31 & 31 & 1 & 1400 & 600 & 100 & 400 \\
Periodic  & 31 & 31 & 1 & 600  & 200 & 100 & 200 \\
\bottomrule
\end{tabular}
\end{table}

\subsection{Breadth and Scaling Experiments}

To probe behavior beyond the $M=4$ benchmark, we evaluate seven additional instances at the same parameter set $\theta^\star$ (Table~\ref{tab:clean_breadth}). All instances are evaluated with full circuit-level maps using Law--Eberly preparation, so every row reports the fixed-scale map error, worst-input conditional infidelity \(1-F_{\rm cond}\), and postselection-probability range from circuit maps over all \(D\) computational-basis inputs.

The one-dimensional heat instances with $M=8$, $M=16$, and $M=32$ and Dirichlet boundaries extend the benchmark family in system dimension at fixed grid spacing \(h=1\), so the physical domain grows with \(M\).
The advection--diffusion (AD) instances discretize \(u_t=\nu u_{xx}-c\,u_x\) at \(M\) interior grid points with homogeneous Dirichlet boundaries and centered finite differences. Writing \(\dot{\bm u}=-A_M\bm u\), the resulting coefficient matrix is
\begin{align}
A_M &= \operatorname{tridiag}\!\left( -\frac{\nu}{h^2}-\frac{c}{2h}, \frac{2\nu}{h^2}, -\frac{\nu}{h^2}+\frac{c}{2h} \right).
\label{eq:advection-diffusion-matrix}
\end{align}
Here \(\operatorname{tridiag}(\ell,d,u)\) denotes a matrix with lower, diagonal, and upper entries \(\ell,d,u\), respectively. For the parameters used in all AD experiments, \(\nu=c=h=1\) and hence
\begin{align*}
A_M=\operatorname{tridiag}\!\left(-\frac32,2,-\frac12\right), \qquad M\in\{8,16,32\}.
\end{align*}
Its Cartesian decomposition is
\begin{align*}
L_M=\frac{A_M+A_M^\dagger}{2} =\operatorname{tridiag}(-1,2,-1)\succ0,
\qquad
H_M=\frac{A_M-A_M^\dagger}{2i} =-\frac{i}{2}\operatorname{tridiag}(-1,0,1).
\end{align*}
Since \([L_M,H_M]\neq0\), \(A_M=L_M+iH_M\) is non-normal. The normalized Henrici departure \(\sqrt{\|A_M\|_F^2-\sum_j|\lambda_j(A_M)|^2}/\|A_M\|_F\) is \(0.376\), \(0.384\), and \(0.388\) at \(M=8\), \(16\), and \(32\), respectively. The corresponding mesh P\'eclet number is \(ch/(2\nu)=0.5\). Across the same instances, $\kappa(V)$ grows by more than five orders of magnitude, from about $48.5$ at $M=8$ to $2.63\times10^7$ at $M=32$. The fixed-scale error stays within $1.15\%$--$1.48\%$, and the worst-input conditional infidelity \(1-F_{\rm cond}\) is at most $4.68\times10^{-4}$ (Table~\ref{tab:clean_breadth}).

In each of the three AD rows, $\epsilon_{\rm model}$ dominates $|\alpha_{N,r}|(\epsilon_{\rm synth}+\epsilon_t)$. From $M=8$ to $M=32$, the $\epsilon_t$ column of Table~\ref{tab:clean_breadth} rises from $1.83\times10^{-3}$ to $2.07\times10^{-3}$ as $\kappa(V)$ grows by more than five orders of magnitude. All three instances discretize the same PDE family. Simultaneous changes in dimension, spectral properties, and non-normality preclude a general conclusion about sensitivity to system size or non-normality.

The final case, the $4\times4$ two-dimensional heat instance with $D=16$ and Dirichlet boundaries on both axes, stresses the encoding at the shared parameter set. Its generator spans approximately twice the spectral range of the one-dimensional family, with $\|L\|_2=7.24$ compared with at most $3.99$ in the one-dimensional family. Its smallest eigenvalue is $0.764$, compared with $0.0341$ at $M=16$. At $T=1$, this stronger damping suppresses the postselection probability below $2\%$ and shrinks the reference norm $\|e^{-AT}\|_F$ from $1.74$ at the one-dimensional $M=16$ instance to $0.53$, and the fixed-scale map error grows to $7.40\%$ (Table~\ref{tab:clean_breadth}), of which a factor of $3.2$ comes from the smaller reference norm and a factor of $1.6$ from the growth of the absolute error $\|\alpha_{N,r}\widetilde{\mathcal K}_{N,N_{\rm Fock},n_t}(T)-e^{-AT}\|_F$ from $0.025$ to $0.040$. Each circuit column agrees with an independently constructed dense product-formula column to relative error at most $1.41\times10^{-13}$, and the measured product-formula error is $8.07\times10^{-4}$, below $10^{-3}$ and the smallest $\epsilon_t$ among the seven breadth instances. So, the growth comes from the finite kernel encoding, whose parameters are held fixed at $\theta^\star$ across the whole table.
Appendix~\ref{app:ha-retune-2d} scans \((r,r')\) directly on the two-dimensional generator to test sensitivity to the shared parameter set. At the exact-truncated-map level, the lowest fine-grid errors are \(1.537\%\) for \(g_\beta\) and \(0.669\%\) for the Huang--An kernel (Table~\ref{tab:ha-retune-2d-front}). The \(7.40\%\) error mainly reflects kernel parameters selected on the one-dimensional family. The lowest observed \(g_\beta\) error stays above the error cut of the one-dimensional rule, so the rule selects no two-dimensional operating point for \(g_\beta\). A fixed-error two-dimensional assessment requires kernel parameters matched to the wider spectral range of this generator.

\begin{table}[ht!]
\centering 
{\small
\caption{Breadth experiments using parameter set $\theta^\star$. All seven rows are evaluated with full circuit-level maps using Law--Eberly preparation, and \(\epsilon_t\) is the measured deviation between the circuit-level and exact finite-dimensional maps. The worst \(1-F_{\rm cond}\) column reports the largest conditional infidelity over the computational-basis inputs. Here, $\kappa(V)$ is the eigenvector condition number of the generator, and the heat generators are Hermitian, so $\kappa(V)=1$. The $p_{\rm succ}$ column is in percent and gives the benchmark-input (computational-basis index $1$) value with the range over all computational-basis inputs in parentheses. AD denotes advection--diffusion.}
\label{tab:clean_breadth}
\begin{tabular}{lcccccc}
\toprule
Instance & $D$ & $\kappa(V)$ & $\varepsilon_F$ & worst $1-F_{\rm cond}$ & $p_{\rm succ}$ [\%] & $\epsilon_t$ \\
\midrule
1D heat, $M=8$, Dirichlet               & 8  & $1$ & $1.04\%$ & $1.23\times10^{-4}$ & $10.86$ ($5.36$--$12.46$) & $1.48\times10^{-3}$ \\
1D heat, $M=16$, Dirichlet              & 16 & $1$ & $1.46\%$ & $3.08\times10^{-4}$ & $10.63$ ($5.26$--$12.48$) & $1.72\times10^{-3}$ \\
1D heat, $M=32$, Dirichlet & 32 & $1$ & $1.46\%$ & $2.58\times10^{-4}$ & $10.63$ ($5.27$--$12.47$) & $1.78\times10^{-3}$ \\
AD, $M=8$, Dirichlet  & 8  & $48.5$ & $1.15\%$ & $4.12\times10^{-4}$ & $12.10$ ($2.73$--$12.46$) & $1.83\times10^{-3}$ \\
AD, $M=16$, Dirichlet & 16 & $3.98\times10^{3}$ & $1.48\%$ & $4.68\times10^{-4}$ & $11.86$ ($2.68$--$12.48$) & $2.02\times10^{-3}$ \\
AD, $M=32$, Dirichlet & 32 & $2.63\times10^{7}$ & $1.47\%$ & $3.51\times10^{-4}$ & $11.86$ ($2.68$--$12.46$) & $2.07\times10^{-3}$ \\
2D heat, $4\times4$, Dirichlet          & 16 & $1$ & $7.40\%$ & $6.20\times10^{-3}$ & $0.94$ ($0.48$--$1.85$) & $8.07\times10^{-4}$ \\
\bottomrule
\end{tabular}}
\end{table}

Fig.~\ref{fig:clean_scaling} reports the scaling study, combining exact-truncated-map evaluations at eight evolution times $T\in\{0.25,0.5,0.75,1,1.25,1.5,1.75,2\}$ for $(N,N_{\rm Fock})\in\{(12,16),(24,32),(48,64)\}$ with circuit-level evaluations at $T=1$ over $n_t\in\{25,35,50,71,100,141,200,283\}$. Panel (a) shows a reversal between the two largest cutoff pairs at $T=0.5$, so increasing $(N,N_{\rm Fock})$ at fixed squeezing parameters can increase $\varepsilon_F$ at a sampled time. In panel (b), the measured product-formula error shows first-order scaling over the evaluated $n_t$ range, at log--log slopes $-1.012$, $-1.053$, and $-1.044$ for the three cutoff pairs. In panel (c), the benchmark-input postselection probability decreases from $16.60\%$--$19.28\%$ across the cutoff pairs at $T=0.5$ to $4.51\%$--$5.57\%$ at $T=2$, with nonmonotonic changes at some adjacent samples. A joint cost optimization must assess these quantities together. The present sweeps leave the resource-optimal point undetermined.

\begin{figure}[ht!]
\centering
\includegraphics[width=0.98\linewidth]{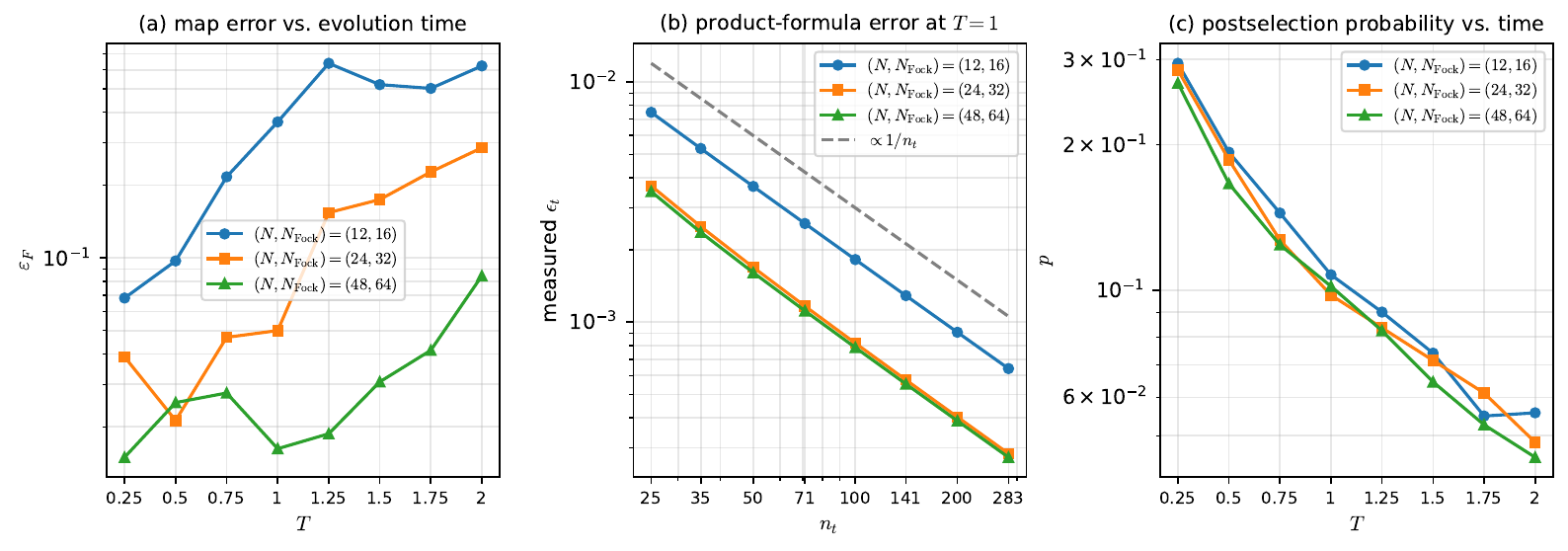}
\caption{Scaling study at the fixed kernel parameters $(r,r',\beta)=(1.6,0.25,0.5)$. (a) Fixed-scale map error $\varepsilon_F$ versus evolution time $T$ for three cutoff pairs (exact-truncated-map evaluations at eight times). (b) Measured product-formula error $\epsilon_t$ versus $n_t$ at $T=1$ (circuit-level, eight Trotter counts). The dashed line indicates the first-order scaling $\propto1/n_t$. (c) Postselection probability of the benchmark input versus $T$.}
\label{fig:clean_scaling}
\end{figure}

\subsection{Oscillator Photon Loss at the Map Level}

We evaluate a bounded sensitivity model for oscillator photon loss. A single lumped pure-loss channel $\Lambda_\eta$, with transmissivity $\eta$ and loss probability $1-\eta$, acts on the prepared finite kernel state before the $H=0$ joint evolution. Its Fock-basis Kraus operators are
\begin{align*}
A_k = \sqrt{\frac{(1-\eta)^k}{k!}}\, \eta^{\hat n/2}\hat a^k.
\end{align*}
For an input $u_0$, the postselected output is
\begin{align*}
    \rho_{\rm out}(u_0) = \sum_k \widetilde{\mathcal K}_k u_0u_0^\dagger \widetilde{\mathcal K}_k^\dagger, \qquad \widetilde{\mathcal K}_k 
    = (\bra{\phi_r}\otimes\mathbb I_D) U_{N_{\rm Fock}}(T) (A_k\ket{\psi_N}\otimes\mathbb I_D).
\end{align*}
The branch vectors $A_k\ket{\psi_N}$ are unnormalized, so their Kraus weights remain in the mixture, and the discarded branch weight is below $10^{-12}$. The conditional fidelity extends to this mixed output as
\begin {align*}
F_{\rm cond}(u_0)=\frac{\bra{u_{\rm exact}}\rho_{\rm out}(u_0)\ket{u_{\rm exact}}}{\|u_{\rm exact}\|_2^2\operatorname{tr}\rho_{\rm out}(u_0)}, 
\end{align*}
and Fig.~\ref{fig:noise_loss}(a) reports \(1-F_{\rm cond}\) at the benchmark input. At $\eta=1$, the branch-zero map reproduces the clean exact truncated map to machine precision for every boundary condition. 
This calculation uses a lumped finite-dimensional Kraus channel.

For $0<\eta\leq1$, contraction of the postselection block gives
\begin{align}
    \|\widetilde{\mathcal K}_0-\widetilde{\mathcal K}\|
    \leq \|(\eta^{\hat n/2}-\mathbb I)\ket{\psi_N}\|_2
    \leq -\frac{1}{2}\ln(\eta) \sqrt{\langle\hat n^2\rangle_{\psi_N}}. \label{eq:loss-distortion-bound}
\end{align}
Because $-\frac12\ln\eta=\frac{1-\eta}2+\mathcal O[(1-\eta)^2]$, the plotted quantity $[(1-\eta)/2]\sqrt{\langle\hat n^2\rangle}$ in Fig.~\ref{fig:noise_loss}(b) is a first-order approximation to the bound in Eq.~\eqref{eq:loss-distortion-bound}, not a rigorous upper bound at finite loss. The total input-state weight transferred to incoherent Kraus branches satisfies
\begin{align*}
    1-\langle\eta^{\hat n}\rangle_{\psi_N} \leq (1-\eta)\langle\hat n\rangle_{\psi_N}.
\end{align*}
At the selected parameter set $\theta^\star$, direct evaluation of the finite kernel state gives mean photon number $\langle\hat n\rangle=1.61$ and root-mean-square photon number $\sqrt{\langle\hat n^2\rangle}=6.72$. The formula $\langle\hat n\rangle=\sinh^2r'$ applies to an infinite-dimensional squeezed vacuum. 

The first two photon-number moments of the finite kernel state \(\ket{\psi_N}\) enter complementary analytic bounds for the measured loss response in Fig.~\ref{fig:noise_loss}. At $1-\eta=1\%$, the second moment gives the first-order distortion constant $[(1-\eta)/2]\sqrt{\langle\hat n^2\rangle}=3.4\times10^{-2}$ shown in Fig.~\ref{fig:noise_loss}(b). The measured kernel-state deviation is $3.1\times10^{-2}$, within $9\%$ of this value, and the coherent-branch map distortion, which measures loss-induced deformation of the implemented recovery filter, is $4.6\times10^{-3}$--$1.2\times10^{-2}$ across the three boundary conditions. At $1-\eta=5\%$, the first moment bounds the input-state weight transferred to incoherent Kraus branches by $(1-\eta)\langle\hat n\rangle$, equal to $8.0\%$. At this loss, the postselection probability in Fig.~\ref{fig:noise_loss}(c) shifts by less than $10\%$, and the lossy branches carry below $1\%$ of the accepted probability mass. The benchmark-input conditional infidelity in Fig.~\ref{fig:noise_loss}(a) is between $5.94\times10^{-4}$ and $9.20\times10^{-4}$ at $1-\eta=1\%$. These moments are independent of the simulated system, so photon-lean kernel parameters suppress the accuracy and sampling effects of photon loss.

The numerical study places a single pure-loss channel before the $H=0$ joint evolution. Displacement covariance of ideal pure loss motivates future studies of other placements within these branches. This model excludes distributed continuous-time loss, a Lindblad master equation, endpoint compensation, the $[L,H]\neq0$ case, qubit noise, oscillator dephasing, thermal noise, finite measurement efficiency, leakage, and calibration errors in the JC, SNAP, $\Dcal$, and $c\Dcal$ primitives.

\begin{figure}[ht!]
\centering
\includegraphics[width=0.98\linewidth]{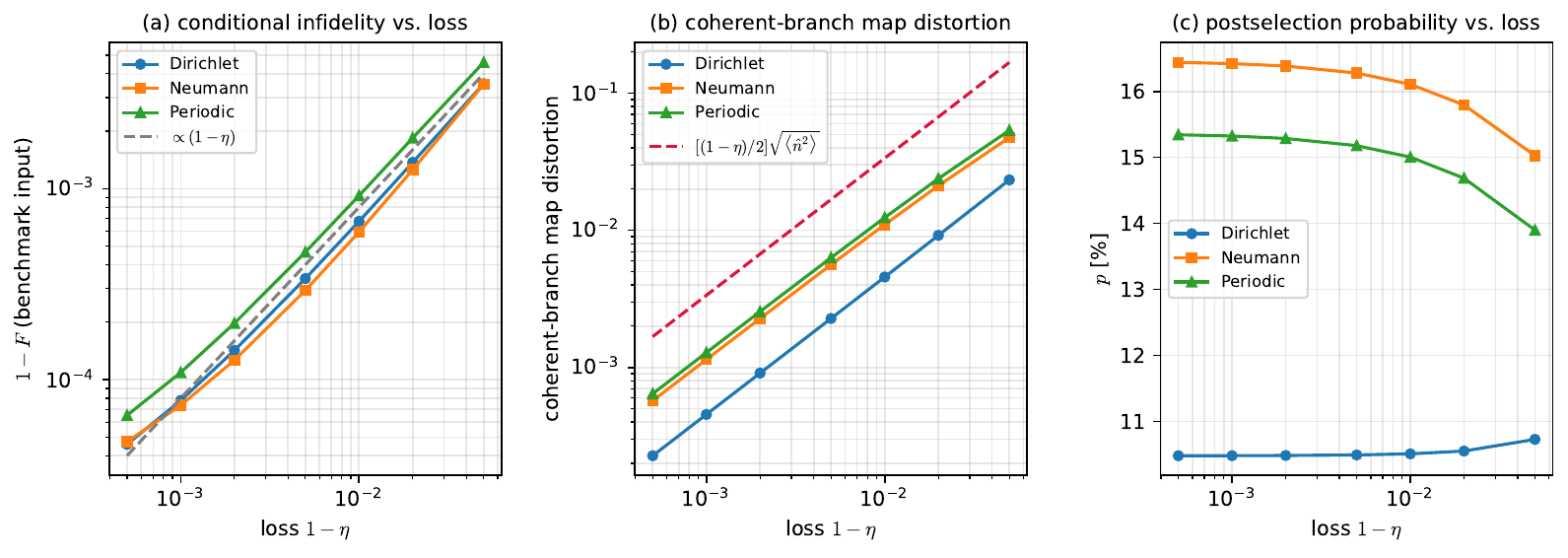}
\caption{Lumped pre-evolution oscillator photon loss with the parameter set $\theta^\star$. (a) Benchmark-input conditional infidelity \(1-F_{\rm cond}\) versus loss $1-\eta$, evaluated by exact Kraus summation. The dashed line indicates linear scaling in $1-\eta$. (b) Coherent-branch map distortion versus $1-\eta$, together with the first-order approximation $[(1-\eta)/2]\sqrt{\langle\hat n^2\rangle}$ to the distortion bound in Eq.~\eqref{eq:loss-distortion-bound}. (c) Physical postselection probability versus $1-\eta$.}
\label{fig:noise_loss}
\end{figure}

\subsection{Comparison with DV LCHS}
\label{ss:dv-compare}

For the DV LCHS baseline, the first question is the size of the quadrature discretization. In the homogeneous heat-equation benchmark considered here, $H=0$ and $L=A^{(bc)}$, where $bc$ corresponds to three boundary conditions, so the standard DV LCHS formulas in~\cite{LCHS1,LCHS2} give
\begin{align*}
    h_1&=\frac{1}{eT\|L\|_2},\,\
    K=\eta_{\rm DV}\left\lceil \frac{(\log(1/\epsilon))^{1/\beta}}{h_1}\right\rceil h_1,\,\
    Q= \left\lceil\frac{1}{\log 4}\log\!\left(\frac{8}{3C_\beta}\frac{K}{\epsilon}\right)\right\rceil, \\
    C_\beta &= 2\pi e^{-2^\beta}, \,\
    M_{\rm DV} = 2\left\lfloor \frac{K}{h_1}\right\rfloor Q,\,\
    m_c = \left\lceil \log_2 M_{\rm DV}\right\rceil.
\end{align*}
Here $h_1$ is the quadrature step size, $K$ is the truncation half-width of the $k$ interval, $Q$ is the Gauss--Legendre order on each subinterval, $M_{\rm DV}$ is the number of resulting LCU terms, $m_c$ is the control-register width, and $\eta_{\rm DV}$ is a safety factor on the truncation half-width. In Table~\ref{tab:dv_resource_compare}, we fix $\eta_{\rm DV}=1$ and $\epsilon=0.1$ and report parameter values. For each boundary condition, we then choose the value of $\beta$ from the scan $\beta\in[0.60,1)$ that maximizes the classical DV fidelity $F_{\rm DV} = F(u_{\rm DV},u_{\rm exact})$, where $u_{\rm DV}$ is the classical DV LCHS output for the same initial state. 
We compare auxiliary-register realizations that use the same \(g_\beta\) family with route-specific parameter selection. The DV route represents the continuous LCHS variable with a discretized quadrature register, and the CV--DV route uses an oscillator mode. A different kernel family for the DV baseline would also change the recovery weight. Optimization over DV kernel families lies outside this comparison.

Across the ten instances in Table~\ref{tab:dv_resource_compare}, the prescribed sizing $M_{\rm DV}$ requires between $192$ and $632$ quadrature terms on $8$ to $10$ ancillary qubits. The CV--DV oracle uses $32$ squeezed-Fock coefficients and $64$ Fock levels throughout, giving a compactness factor $M_{\rm DV}/N$ of $6.0$ to $19.8$. On the full $2^{9}$-node grid spanned by the Dirichlet ancilla register, the complex-kernel quadrature with exact branch propagators reproduces the Dirichlet benchmark solutions to infidelity at most $7.2\times10^{-7}$ at all four grid sizes ($M=4$, $8$, $16$, $32$). Because this calculation uses the full register grid, the remaining DV error at this sizing comes from the truncated quadrature.

At the prescribed $\epsilon=0.1$ truncation, with $\beta$ selected by the fidelity scan, the classical solution infidelity $1-F_{\rm DV}$ of Table~\ref{tab:dv_resource_compare} remains above the circuit-level conditional infidelity $1-F_{\rm LE}$ of the LE route in every instance. The fixed-scale metric of this section orders the two routes the same way in every row of Table~\ref{tab:dv_fixed_scale_compare}, where $\varepsilon_F^{\rm DV}$ reaches $16.48\%$ on the one-dimensional instances while the CV--DV route stays at or below $1.48\%$ there.

Per instance, the DV infidelity $1-F_{\rm DV}$ at this sizing exceeds the benchmark-input $1-F_{\rm LE}$ by a factor of $13$ to $163$ at the $M=4$ benchmarks, and the worst-input $1-F_{\rm cond}$ by a factor of $7.7$ to $47$ for the one-dimensional breadth instances and of $1.5$ for the two-dimensional case.
Reducing $\epsilon$ would lower these deviations at the cost of a larger $M_{\rm DV}$. The DV columns of Table~\ref{tab:dv_fixed_scale_compare} describe the prescribed sizing for the common \(g_\beta\) kernel family. The best attainable accuracy of the DV LCHS method, including optimization over DV kernels, remains undetermined.

The DV route has lower sampling overhead in every instance. At the one-dimensional benchmark inputs, the DV LCU block requires $3.8$ to $4.3$ expected attempts ($1/p_{\rm DV}$), compared with $6.07$ to $9.55$ ($1/p_{\rm succ}$) for the CV--DV route, a factor of $1.5$ to $2.4$. The difference arises because $\|c\|_1$ stays close to one and the CV--DV overhead carries the inverse-overlap scale $|\alpha_{N,r}|$ of the kernel encoding. In the two-dimensional case, the decayed solution norm limits both routes to $49.91$ and $106.89$ attempts, a factor of $2.1$.


\begin{table}[h]
\centering 
{\small
\setlength{\tabcolsep}{5pt}
\caption{Best DV LCHS quadrature parameters from the classical $\beta$ scan with $\epsilon=0.1$ and $\eta_{\rm DV}=1$. Here $\|c\|_1=\sum_j|c_j|$ is the $\ell^1$ norm of the complex quadrature coefficients, which sets the LCU normalization and postselection scale of the DV block, and the last two columns report the resulting classical solution infidelity $1-F_{\rm DV}$ and the fidelity gap $F_{\rm LE}-F_{\rm DV}$ relative to the Law--Eberly baseline. AD denotes advection--diffusion.}
\label{tab:dv_resource_compare}
\begin{tabular}{lccccccccc}
\toprule
Instance & $\beta_{\rm best}$ & $h_1$ & $K$ & $Q$ & $M_{\rm DV}$ & $m_c$ & $\|c\|_1$ & $1-F_{\rm DV}$ & $F_{\rm LE}-F_{\rm DV}$ \\
\midrule
1D heat, $M=4$, Dirichlet & 0.60 & 0.1017 & 4.067 & 4 & 320 & 9 & 0.9357 & $2.33\times10^{-3}$ & $2.32\times10^{-3}$ \\
1D heat, $M=4$, Neumann & 0.90 & 0.1077 & 2.586 & 4 & 192 & 8 & 1.2073 & $2.26\times10^{-3}$ & $2.25\times10^{-3}$ \\
1D heat, $M=4$, Periodic & 0.80 & 0.0920 & 2.851 & 4 & 248 & 8 & 1.0740 & $2.78\times10^{-4}$ & $2.56\times10^{-4}$ \\
\midrule
1D heat, $M=8$, Dirichlet               & 0.60 & 0.0948 & 4.078 & 4 & 344 & 9  & 0.9364 & $5.78\times10^{-3}$ & $5.75\times10^{-3}$ \\
1D heat, $M=16$, Dirichlet              & 0.60 & 0.0928 & 4.081 & 4 & 352 & 9  & 0.9367 & $4.94\times10^{-3}$ & $4.76\times10^{-3}$ \\
1D heat, $M=32$, Dirichlet & 0.60 & 0.0922 & 4.056 & 4 & 352 & 9 & 0.9350 & $5.00\times10^{-3}$ & $4.81\times10^{-3}$ \\
AD, $M=8$, Dirichlet  & 0.60 & 0.0948 & 4.078 & 4 & 344 & 9  & 0.9364 & $3.16\times10^{-3}$ & $3.14\times10^{-3}$ \\
AD, $M=16$, Dirichlet & 0.60 & 0.0928 & 4.081 & 4 & 352 & 9  & 0.9367 & $4.04\times10^{-3}$ & $3.91\times10^{-3}$ \\
AD, $M=32$, Dirichlet & 0.60 & 0.0922 & 4.056 & 4 & 352 & 9 & 0.9350 & $4.10\times10^{-3}$ & $3.91\times10^{-3}$ \\
2D heat, $4\times4$, Dirichlet          & 0.60 & 0.0508 & 4.016 & 4 & 632 & 10 & 0.9323 & $9.08\times10^{-3}$ & $3.58\times10^{-3}$ \\
\bottomrule
\end{tabular}}
\end{table}


\begin{table}[h]
\centering 
\caption{Fixed-scale map error and postselection probability of the DV LCHS baseline at the prescribed sizing of Table~\ref{tab:dv_resource_compare} (\(\epsilon=0.1\), \(\eta_{\rm DV}=1\), exact branch propagators), compared with the circuit-level CV--DV route. \(\varepsilon_F^{\rm DV}\) is the fixed-scale error of the unscaled quadrature map \(\sum_j c_je^{-iT(k_jL+H)}\) with no fitted rescaling, and \(p_{\rm DV}=\|\sum_j c_je^{-iT(k_jL+H)}\ket{u_0}\|_2^2/\|c\|_1^2\) is the acceptance probability of the LCU block. Probability columns are in percent and give the benchmark-input value with the range over all computational-basis inputs in parentheses. The CV--DV columns restate the fixed-scale errors of Tables~\ref{tab:clean_oracle_baseline} and~\ref{tab:clean_breadth} and the circuit-level probabilities of the same maps, and for the periodic instance both probabilities are input independent. AD denotes advection--diffusion.}
\label{tab:dv_fixed_scale_compare}
\begin{tabular}{lcccc}
\toprule
Instance & $\varepsilon_F^{\rm DV}$ & $\varepsilon_F$ (CV--DV) & $p_{\rm DV}$ [\%] & $p_{\rm succ}$ (CV--DV) [\%] \\
\midrule
1D heat, $M=4$, Dirichlet & $12.82\%$ & $0.70\%$ & $25.48$ ($11.71$--$25.48$) & $10.47$ ($5.28$--$10.47$) \\
1D heat, $M=4$, Neumann   & $16.48\%$ & $0.56\%$ & $24.70$ ($24.70$--$36.16$) & $16.48$ ($16.48$--$22.82$) \\
1D heat, $M=4$, Periodic  & $4.38\%$  & $0.47\%$ & $24.33$ & $15.36$ \\
1D heat, $M=8$, Dirichlet & $8.14\%$  & $1.04\%$ & $23.09$ ($11.00$--$26.14$) & $10.86$ ($5.36$--$12.46$) \\
1D heat, $M=16$, Dirichlet & $7.92\%$ & $1.46\%$ & $23.05$ ($10.98$--$26.16$) & $10.63$ ($5.26$--$12.48$) \\
1D heat, $M=32$, Dirichlet & $7.98\%$ & $1.46\%$ & $23.16$ ($11.03$--$26.24$) & $10.63$ ($5.27$--$12.47$) \\
AD, $M=8$, Dirichlet      & $8.10\%$  & $1.15\%$ & $26.09$ ($5.52$--$26.09$) & $12.10$ ($2.73$--$12.46$) \\
AD, $M=16$, Dirichlet     & $7.94\%$  & $1.48\%$ & $26.10$ ($5.54$--$26.10$) & $11.86$ ($2.68$--$12.48$) \\
AD, $M=32$, Dirichlet     & $7.99\%$  & $1.47\%$ & $26.21$ ($5.56$--$26.21$) & $11.86$ ($2.68$--$12.46$) \\
2D heat, $4\times4$, Dirichlet & $9.87\%$ & $7.40\%$ & $2.00$ ($1.00$--$3.82$) & $0.94$ ($0.48$--$1.85$) \\
\bottomrule
\end{tabular}
\end{table}


For circuit resource comparison, on the DV side, the ancilla amplitude vector can be compressed as a matrix product state (MPS), so ancilla preparation need not scale like generic amplitude loading. 
To realize the complex kernel in the DV circuits, the ancilla register is prepared with amplitudes proportional to $\sqrt{|g_\beta(k_j)|}$ over all binary-encoded quadrature nodes $k_j$, and a single diagonal layer $D_\Phi=\mathrm{diag}(e^{i\arg g_\beta(k_j)})$ is inserted before the inverse preparation to apply the kernel phases. The postselected block is proportional to the complex-kernel quadrature $\sum_j g_\beta(k_j)\,e^{-iTk_jL}$.

The two-layer MPS-to-circuit conversion used in the compiled circuits prepares the $2^{m_c}$-node amplitude profile approximately~\cite{ran2020mps}. For all Dirichlet experiments ($m_c = 9$), a $50$-draw reconstruction ensemble gives a state-preparation infidelity of $(1.51\pm0.13)\times10^{-3}$, reported as the mean and standard deviation over the draws, with extremes from $1.16\times10^{-3}$ to $1.93\times10^{-3}$. The DV loading counts describe an approximate preparation routine. Law--Eberly synthesis prepares the oscillator kernel state to numerical precision in the ideal circuit model. Exact isometry-based state preparation loads the same profile at a cost of $511$ one-qubit and $502$ CX gates in the same reporting basis~\cite{iten2016isometries}, with the CX count unchanged at transpiler optimization level $2$ (Table~\ref{tab:dv_circuit_compare}). Each cost is stated in its platform's native operations.

For the Dirichlet benchmark, the practical DV circuit considered here uses a nine-qubit control register and applies the ancilla preparation and unpreparation only once, while repeating the controlled LCHS evolution block:
\begin{align*}
    \langle 0_{\mathrm{anc}}| U_{\mathrm{prep}}^\dagger D_\Phi \bigl[U_{\mathrm{ctrl}}(T/n)\bigr]^n U_{\mathrm{prep}} |0_{\mathrm{anc}}\rangle, \qquad n=100, \qquad T=1.
\end{align*}
As summarized in Table~\ref{tab:dv_circuit_compare}, the one-time blocks are size independent because the $m_c=9$ quadrature register does not change with $M$, and the one-time overhead totals $772$ one-qubit and $606$ CX gates including the single input-loading $X$ gate. A single controlled-evolution slice instead grows from $56$ to $776$ CX gates between $M=4$ and $M=32$.

By comparison, the Dirichlet CV--DV implementation prepares the shared unsqueezed seed state $\ket{\chi_N}$ deterministically with $31$ JC pulses and $31$ qubit rotations independently of $M$, with the outer squeezing $\mathsf S_{N_{\rm Fock}}(r')$ that completes the finite kernel state excluded from these counts as in Table~\ref{tab:clean_resource_counts}, and its Trotterized middle block grows from $1400$ single-qubit gates, $400$ CXs, and $300$ $\rm c\Dcal$ at $M=4$ to $38600$, $19600$, and $3100$ at $M=32$, with $100$ $\Dcal$ throughout, shifting part of the oracle cost into oscillator operations.

\begin{table}[h]
\centering 
\caption{Operator counts for the Dirichlet DV LCHS circuits at heat-equation sizes $M=4,8,16,32$, each with $m_c=9$ control qubits, MPS-compressed ancilla PREP, the complex-kernel phase layer, and $n_t=100$ repeated controlled-evolution slices, in the \{1Q, CX\} reporting basis with each CRZ decomposed into two CX and two single-qubit rotations. The one-time rows are size independent, unPREP is the inverse of PREP at the same counts, and the exact isometry PREP rows are a numerically exact loading alternative~\cite{iten2016isometries} that is not part of the compiled circuits. Full-circuit depths are measured on the fully instantiated $100$-slice circuits, and the transpiled rows use Qiskit optimization level $2$ in a $\{U,\mathrm{CX}\}$ basis, which leaves every CX count unchanged.}
\label{tab:dv_circuit_compare}
\begin{tabular}{lcccc}
\toprule
& $M=4$ & $M=8$ & $M=16$ & $M=32$\\
\midrule
1Q gates, ancilla MPS~\cite{ran2020mps} PREP & \multicolumn{4}{c}{130} \\
CX gates, ancilla MPS~\cite{ran2020mps} PREP & \multicolumn{4}{c}{48} \\
\midrule
1Q gates, ancilla exact isometry~\cite{iten2016isometries} PREP & \multicolumn{4}{c}{511} \\
CX gates, ancilla exact isometry~\cite{iten2016isometries} PREP & \multicolumn{4}{c}{502} \\
\midrule
1Q gates, kernel phase layer & \multicolumn{4}{c}{511} \\
CX gates, kernel phase layer & \multicolumn{4}{c}{510} \\
\midrule
1Q gates per slice & 71 & 273 & 637 & 1145 \\
CX gates per slice & 56 & 186 & 408 & 776 \\
\midrule
1Q gates, full circuit & 7872 & 28072 & 64472 & 115272\\
CX gates, full circuit & 6206 & 19206 & 41406 & 78206 \\
Depth, full circuit & 12874 & 36975 & 77466 & 132964\\
\midrule
1Q gates, full circuit, transpiled & 7211 & 21612 & 46912 & 84814\\
CX gates, full circuit, transpiled & 6206 & 19206 & 41406 & 78206 \\
Depth, full circuit, transpiled & 12263 & 30567 & 60159 & 106049\\
\bottomrule
\end{tabular}
\end{table}

The DV slice cost grows with grid size because the binary shift operators that encode the discretized Laplacian use carry chains of multi-controlled NOT gates on the system register~\cite{douglas2009walks}. Decomposition into elementary gates introduces a factor of $\lceil\log_2 M\rceil$~\cite{barenco1995elementary}. The $k$-dependent part contributes a fixed count of $2m_c$ CX gates per slice through one controlled rotation per quadrature-register bit. The carry network accounts for the remaining grid-size dependence. The hybrid Trotter block directly compiles the Pauli decomposition of $L$, whose term count grows linearly with $M$. A ladder-string compilation of the controlled-displacement blocks could reduce this cost in future work. Across the studied instances, the register compactness persists and the Trotter-block CX ratio narrows from $14$ at $M=4$ to $4.0$ at $M=32$. The comparison covers the studied sizes. Asymptotic scaling remains open.

\section{Conclusion and Future Work}
\label{sec:conc}

We developed a finite-resource realization of the LCHS recovery filter on a hybrid oscillator-qubit architecture. The construction shares its one-mode unitary dilation with Schr\"odingerisation~\cite{Jin2024} and qumodisation~\cite{https://doi.org/10.1111/sapm.70047}. Its distinguishing element is the encoding of the LCHS kernel into physically prepared and postselected oscillator states. Theorem~\ref{thm:state-prep} and Corollary~\ref{cor:near-optimal-kernel} control the coefficient-projection error of the ideal kernel state. Theorem~\ref{thm:trotter} controls the product-formula cost through the independent oscillator cutoff \(N_{\rm Fock}\), and Theorem~\ref{thm:postselection-success} propagates the total scaled-map error $\epsilon_{\rm tot}$ to the postselection probability. A single error budget thus governs output accuracy and sampling overhead. We measure the remaining finite-model effects, and the measured benchmark error components are consistent with their corresponding bounds. For the Dirichlet benchmark, the exact-prefactor first-order certificate lies about three orders of magnitude above the measured product-formula error. Direct measurement supports the operating parameters where the worst-case bound is loose.

Across full circuit-level maps for the one-dimensional heat and non-normal advection--diffusion instances up to $D=32$, the fixed-scale map error is at most $1.48\%$ and the worst-input conditional infidelity is $4.68\times10^{-4}$. At the same shared parameter set, the $4\times4$ two-dimensional stress case reaches a fixed-scale error of $7.40\%$. Section~\ref{sec:clean_experiments} separates the contributions behind this growth. Against the DV LCHS baseline at its prescribed sizing, the hybrid oracle uses a single oscillator mode in place of the 8-to-10-qubit quadrature register and attains the smaller fixed-scale error in all ten instances. This compactness requires oscillator state preparation, $\Dcal$ and $c\Dcal$ operations, and postselection repetition, for which no CX-equivalent cost is assigned. The DV route also requires fewer expected attempts in every instance. In two dimensions, the decayed solution norm limits both routes.

The photon-loss study covers a single lumped pure-loss channel acting on the prepared kernel state before the $H=0$ evolution. Section~\ref{sec:clean_experiments} defines the effects outside this model. Device-calibrated noise simulation and logical-oscillator encodings~\cite{noh2020,guo2026} are next steps. Another direction is to map recent optimal and discrete LCHS constructions, including the Weyl-calculus formulas~\cite{ni2026quantumeigenvaluetransformation}, onto normalized oscillator states with a matched finite-resource analysis. Extensions to nonlinear PDEs through Carleman linearization and to inhomogeneous systems require noise, state-synthesis, and problem-specific oracle analyses before their practical advantage can be assessed~\cite{gan2025provably}. Within the present scope, the results support hybrid CV--DV LCHS as an alternative to qubit-only formulations when quadrature complexity dominates resource costs.


\section*{Acknowledgments}
ERD acknowledges fruitful discussions with Kaustubh Chandramouli. ERD, MZ, AL, TS, and YL acknowledge support by the U.S. Department of Energy, Office of Science, Advanced Scientific Computing Research, under contract number DE-SC0025384. MZ and RD acknowledge support from Pacific Northwest National Laboratory's Quantum Algorithms and Architecture for Domain Science (QuAADS) Laboratory Directed Research and Development (LDRD) Initiative. MZ and RD also acknowledge support from the U.S. Department of Energy, Office of Science, National Quantum Information Science Research Centers, Quantum Science Center (under FWP 76213) at Pacific Northwest National Laboratory. The Pacific Northwest National Laboratory is operated by Battelle for the U.S. Department of Energy under Contract No. DE-AC05-76RL01830. This research used resources of the National Energy Research Scientific Computing Center, a DOE Office of Science User Facility supported by the Office of Science of the U.S. Department of Energy under Contract No. DE-AC02-05CH11231 using NERSC award ASCR-ERCAP0037555.

\section*{Author Contributions}
ERD, MZ, RD, AL, and YL contributed to conceptualization. ERD, MZ, RD, and YL contributed to methodology. ERD, MZ, and TS contributed to software. ERD, MZ, and YL contributed to validation and formal analysis. ERD and MZ contributed to investigation and resources. MZ contributed to data curation and visualization. ERD, MZ, RD, AL, TS, and YL contributed to writing the original draft. ERD, MZ, RD, and YL contributed to writing--review and editing. TS and YL contributed to supervision and project administration. YL contributed to funding acquisition.

\section*{Competing Interests}
All authors declare no competing interests.

\section*{Data and Code Availability}
All data and scripts are available in the GitHub repository~\cite{cv-dv-repo}.

\bibliography{refs}
\bibliographystyle{unsrt}

\clearpage
\appendix
\section{Appendix}

\subsection{Notation used in the main text}
\label{app:notation}

\begin{table}[!htbp]
\centering 
\caption{Recurring notation in the main text. A relationship between an ideal quantity and its finite counterpart is stated in the finite quantity's row. Symbols local to one derivation or used only in the Appendix are omitted.}
\label{tab:notation}
\scriptsize
\setlength{\tabcolsep}{5pt}
\renewcommand{\arraystretch}{1.4}
\begin{tabular}{p{0.11\textwidth}p{0.66\textwidth}p{0.13\textwidth}} 
\toprule
Notation & Meaning and relation & Defined in \\
\midrule
\(A=L+iH,\newline D\) & Generator of the target nonunitary evolution, with dissipative Hermitian part \(L\), coherent Hermitian part \(H\), and system dimension \(D\). & Sec~\ref{sec:methodology}, Thm~\ref{thm:CV--DV-lchs} \newline Eq.~\eqref{eq:cart} \\ \hdashline
\(u(t),\ \ket{u_0},\ T\) & Statevector of the target ODE, the normalized initial state, and the total evolution time. \(\ket{u(t)}=e^{-At}\ket{u_0}\) (homogeneous time-independent). & Sec~\ref{sec:background}\newline Eq.~\eqref{eq:ode} \\ \hdashline
\(g(k),\ g_\beta(k),\ \beta\) & \(g\) is an admissible exact-LCHS kernel. The selected family \(g_\beta\) uses \(0<\beta<1\), which controls its tail and regularity. & Sec~\ref{sec:background}, Eq.~\eqref{eq:kernel_def} \newline Thm~\ref{thm:CV--DV-lchs} \\ \hdashline
\(\ket{\psi},\ \ket{\phi},\ \alpha_g\) & Normalized ideal oscillator preparation and postselection states, with \(g=\alpha_g\phi^*\psi\). The factor \(\alpha_g\) is the ideal normalization scale. & Thm~\ref{thm:CV--DV-lchs} \\ \hdashline
\(\mathcal B_g(t),\ \mathcal K_g(t)\) & Unscaled ideal postselection block and scaled exact map, with \(\mathcal K_g(t)=\alpha_g\mathcal B_g(t)=e^{-At}\). & Thm~\ref{thm:CV--DV-lchs}\newline Eq.~\eqref{eq:operator} \\ \hdashline
\addlinespace[2pt]
\(\hat x,\ \hat a,\ \hat a^\dagger,\ S(r)\) & Oscillator quadrature, ladder operators, and the single-mode squeezing operator, under the \(\hbar=2\) convention \([\hat x,\hat p]=2i\) and \(\hat x=\hat a+\hat a^\dagger\). The truncated-generator counterpart of \(S\) is \(\mathsf S_{N_{\rm Fock}}\). & Sec~\ref{sec:methodology},~\ref{ss:state-prep},~\ref{ss:error-bound-sp} \\ \hdashline
\(r,\ r'\), \(\sigma=e^r, \sigma'=e^{r'}\) & Postselection squeezing and preparation-basis squeezing, respectively, with \(r'<r\), and the corresponding Gaussian widths. & Sec~\ref{ss:state-prep} \\ \hdashline
\(\psi_\infty=\mathcal N g_\beta\) & Ideal normalized kernel state before coefficient or oscillator-space truncation. & Sec~\ref{ss:state-prep}, Thm~\ref{thm:state-prep} \\ \hdashline
\(N\) & Number of squeezed-Fock coefficients kept after truncation. & Sec~\ref{ss:state-prep}, Thm~\ref{thm:state-prep} \\ \hdashline
\(\psi_r^{\rm form}\),\newline \(\phi_{n,r'}(x)\),\newline \(\widetilde C_n,\ C_n\) & Formal finite-\(r\) coefficient-generation target, squeezed Fock basis functions, and the unnormalized and normalized expansion coefficients, with \(\widetilde C_n=\braket{\phi_{n,r'}|\psi^{\rm form}_r}\). The target \(\psi_r^{\rm form}=(2\pi\sigma^2)^{1/4}g(x)e^{x^2/(4\sigma^2)}\) is generally not square-integrable and enters only this pairing. & Sec~\ref{sec:methodology}, Sec~\ref{ss:state-prep}\newline Eq.~\eqref{eq:coefficients} \\ \hdashline
\(\Pi^F_{N,r'},\ \psi_{N,r'},\newline\widehat\psi_{N,r'}\) & Squeezed-Fock coefficient projector, the unnormalized ideal projection \(\psi_{N,r'}=\Pi^F_{N,r'}\psi_\infty\), and its normalized version. & Thm~\ref{thm:state-prep}\newline Cor~\ref{cor:normalized-projection} \\ \hdashline
\(N_{\rm Fock}\),\newline \(\Pi_{N_{\rm Fock}}\),\newline \(\hat x_{N_{\rm Fock}}\) & Ordinary-number-basis dimension, its projector, and the finite quadrature. This cutoff is independent of \(N\), although the implemented embedding requires \(N_{\rm Fock}\geq N\). & Sec~\ref{ss:error-bound-sp}\newline Eq.~\eqref{eq:x-norm-asymptotic} \\ \hdashline
\(\ket{\chi_N}\) & Normalized unsqueezed seed state \(\sum_{n=0}^{N-1}C_n\ket{n}\), the common target of both synthesis routes. Squeezing and renormalizing it gives \(\ket{\psi_N}\). & Eq.~\eqref{eq:fock}\newline Sec~\ref{ss:cv-prep} \\ \hdashline
\(\ket{\psi_N},\ \ket{\widetilde\psi_N}\),\newline \(\ket{\phi_r},\ \alpha_{N,r}\) & Finite kernel state serving as the numerical target, its synthesized approximation, the postselection state, and the inverse-overlap scale \(\alpha_{N,r}=\langle\phi_r|\psi_N\rangle^{-1}\). The full \(N_{\rm Fock}\) and \(r'\) dependence is defined before it is suppressed. & Sec~\ref{ss:error-bound-sp},~\ref{ss:postselection}\newline Eq.~\eqref{eq:finite-target-states},~\eqref{eq:evol-op} \\ \hdashline
\(\epsilon_{\rm tr}(N)\) & Ideal squeezed-Fock coefficient-projection error. It depends on \(N\), not \(N_{\rm Fock}\). & Cor~\ref{cor:normalized-projection}, Eq.~\eqref{eq:joint-budget-split} \\ \hdashline
\(r_\star,\ \delta_{\rm nG}\) & Stellar rank \(r_\star=\max\{n:C_n\neq0\}\leq N-1\) and quantum relative-entropy non-Gaussianity of the prepared state. & Sec~\ref{ss:non-gauss} \\ \hdashline
\addlinespace[2pt]
\(g_{N,r},\ \mathcal K_{N,r}\) & Implemented finite-\(r\) kernel and its ideal-oscillator postselected map. They are finite counterparts of \(g\) and the exact map and remain distinct from the fully finite map below. & Lem~\ref{lem:finite-r-map} \\ \hdashline
\(H_{{\rm hyb},N_{\rm Fock}}\) & Finite hybrid Hamiltonian obtained by replacing \(\hat x\) with \(\hat x_{N_{\rm Fock}}\). & Sec~\ref{sec:trotter}, Thm~\ref{thm:trotter} \\ \hdashline
\(p,\ n_t,\newline \Lambda_p(N_{\rm Fock})\) & Product-formula order, number of product-formula steps, and cutoff-dependent commutator factor governing the step bound. & Sec~\ref{sec:trotter}, Thm~\ref{thm:trotter}\newline Eq.~\eqref{eq:Lambda_p_definition} \\ \hdashline
\(U_{N_{\rm Fock}}(t)\),\newline \(U_{n_t}(t)\) & Exact finite hybrid evolution \(e^{-iH_{{\rm hyb},N_{\rm Fock}}t}\) and its \(n_t\)-step product-formula approximation. & Sec~\ref{ss:postselection} \\ \hdashline
\(\mathcal B^{\rm tar}_{N,N_{\rm Fock}}(t)\),\newline \(\mathcal B^{\rm prep}_{N,N_{\rm Fock}}(t)\),\newline \(\widetilde{\mathcal K}_{N,N_{\rm Fock},n_t}(t)\) & Unscaled finite maps obtained from the target state, the synthesized state, and the synthesized state with product-formula evolution, respectively. & Sec~\ref{ss:postselection}\newline Eq.~\eqref{eq:evol-op} \\ \hdashline
\(\epsilon_{\rm model},\ \epsilon_{\rm synth},\newline\epsilon_t,\ \epsilon_{\rm tot}\) & Finite-model, state-synthesis, product-formula, and total scaled-map errors. They obey \(\epsilon_{\rm tot}\leq\epsilon_{\rm model}+|\alpha_{N,r}|(\epsilon_{\rm synth}+\epsilon_t)\). & Sec~\ref{ss:postselection}\newline Eq.~\eqref{eq:total-scaled-map-error} \\ \hdashline
\(p_{\rm succ}^{(\infty)}\), \(p_{\rm ref}^{(N,r)}\),\newline \(p_{\rm succ}^{(N,N_{\rm Fock},n_t)}\) & Ideal physical success probability, comparison probability using the same finite overlap scale, and finite physical success probability, respectively. & Sec~\ref{ss:postselection}, Thm~\ref{thm:postselection-success}\newline Eqs.~\eqref{eq:psucc-perturbation},~\eqref{eq:psucc-time-bounds} \\ \hdashline
\(\varepsilon_F\) & Fixed-scale relative Frobenius-norm map error reported in the numerical study. It differs from the spectral-norm \(\epsilon_{\rm tot}\) and uses no per-input rescaling. & Sec~\ref{sec:clean_experiments} \\ \hdashline
\(\varepsilon_{\rm rel}(u_0),\newline F_{\rm cond}(u_0)\) & Per-input scaled relative error at the fixed scale \(\alpha_{N,r}\) and conditional fidelity of the postselected circuit-level map. Under photon loss, \(F_{\rm cond}\) is evaluated on the trace-normalized mixed output. & Sec~\ref{sec:clean_experiments} \\ \hdashline
\(F,\ F_{\rm LE},\ F_{\rm DV}\) & State fidelity \(F(\psi,\phi)=|\braket{\psi|\phi}|^2\), the conditional end-to-end fidelity of the LE route, and the classical DV LCHS fidelity, both against \(u_{\rm exact}\). & Sec~\ref{sec:clean_experiments},~\ref{ss:dv-compare} \\ \hdashline
\(\theta^\star\) & Kernel parameter set \((r,r',\beta,N)=(1.6,\,0.25,\,0.5,\,32)\) selected by the coarse grid search and used in all experiments except the scaling sweeps (\((r,r',\beta)\) fixed and \((N,N_{\rm Fock})\) varied), the SNAP+\(\Dcal\) study (also evaluates \(N=16\)), the common local \((r,r')\) scan of Appendix~\ref{app:ha-retune}, including the Huang--An evaluations at its grid-selected point, and the two-dimensional \((r,r')\) scan of Appendix~\ref{app:ha-retune-2d}. & Sec~\ref{sec:clean_experiments} \\
\bottomrule
\end{tabular}
\end{table}

\subsection{Moment-matching interpretation of the oscillator oracle}
\label{app:moment-matching}

The oscillator construction can be placed within the moment-matching dilation framework of~\cite{xm61-ytf7}. Consider
\begin{align*}
    \frac{d}{dt}\ket{u(t)} = -(L+iH)\ket{u(t)} = \bigl(-iH+K_{\rm mm}\bigr)\ket{u(t)},
    \qquad
    K_{\rm mm}:=-L.
\end{align*}
In the moment-matching framework, a skew-Hermitian ancillary operator \(F\), a right encoding vector \(\ket{r)}\), and a left evaluation functional \((l|\) obey the algebraic moment conditions
\begin{align}
    (l|F^m|r)=1, \qquad m=0,1,2,\ldots . \label{eq:moment-condition}
\end{align}
The exact dilation theorem in~\cite{xm61-ytf7} additionally requires its stated function-space, domain, and convergence hypotheses.
The corresponding dilated Hamiltonian is
\begin{align*}
    \mathcal H_{\rm mm} = \mathbb I_{\rm osc}\otimes H + iF\otimes K_{\rm mm}.
\end{align*}
For the oscillator realization, choose
\begin{align*}
    F=i\hat{x}, \qquad K_{\rm mm}=-L.
\end{align*}
Since \(\hat{x}\) is Hermitian, \(F=i\hat{x}\) is skew-Hermitian, and
\begin{align*}
    \mathcal H_{\rm mm} = \mathbb I_{\rm osc}\otimes H + i(i\hat{x})\otimes(-L)
    = \hat{x}\otimes L + \mathbb I_{\rm osc}\otimes H,
\end{align*}
which is precisely the hybrid oscillator-qubit Hamiltonian used in our construction.

\begin{proposition}[Physical oscillator realization of the LCHS moment triple]
Let \(g(x)\) be an LCHS kernel satisfying
\begin{align}
    \widehat g(y) := \int_{\mathbb R}g(x)e^{-ixy}\,\dd x = e^{-y}, \qquad y\geq 0. \label{eq:lchs-fourier-property}
\end{align}
Suppose normalized oscillator states \(\ket{\psi}\) and \(\ket{\phi}\), together with a known nonzero scalar \(\alpha_{\phi,\psi}\), satisfy
\begin{align}
    g(x) = \alpha_{\phi,\psi}\phi^*(x)\psi(x). \label{eq:physical-kernel-factorization}
\end{align}
Assume additionally that
\begin{align*}
    x^m g(x)\in L^1(\mathbb R), \qquad m=0,1,2,\ldots,
\end{align*}
and that the oscillator matrix elements \(\bra{\phi}\hat{x}^m\ket{\psi}\) are well defined. For example, \(\ket{\psi}\in\bigcap_{m\geq0}\operatorname{Dom}(\hat{x}^m)\). These assumptions justify differentiation of the Fourier identity under the integral sign.
Define
\(
    |r) = c_r\ket{\psi}, \,\ (l| = c_l\bra{\phi}, \,\ c_lc_r=\alpha_{\phi,\psi}.
\)
Then the triple
\(
\left( F=i\hat{x}, \ket{r)}, (l| \right)
\)
satisfies the algebraic moment conditions in Eq.~\eqref{eq:moment-condition}.
\end{proposition}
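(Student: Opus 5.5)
The plan is to reduce the moment conditions in Eq.~\eqref{eq:moment-condition} to the moments \(\int_{\mathbb R}x^m g(x)\,\dd x\), and then to read those moments off the Fourier identity~\eqref{eq:lchs-fourier-property} by differentiating at \(y=0\). First I will compute the left-hand side directly. Since \(F^m=(i\hat x)^m=i^m\hat x^m\) and \(c_lc_r=\alpha_{\phi,\psi}\),
\begin{align*}
    (l|F^m|r)=c_lc_r\,i^m\bra{\phi}\hat x^m\ket{\psi}=i^m\alpha_{\phi,\psi}\int_{\mathbb R}\phi^*(x)\,x^m\psi(x)\,\dd x=i^m\int_{\mathbb R}x^m g(x)\,\dd x .
\end{align*}
Here I use the assumption \(\ket{\psi}\in\operatorname{Dom}(\hat x^m)\): in the position representation, \(\hat x^m\) acts on \(\psi\) as multiplication by \(x^m\), so the matrix element is the \(L^2\) pairing of \(\phi\) with \(x^m\psi\). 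The factorization~\eqref{eq:physical-kernel-factorization} then converts the integrand into \(x^m g(x)\), which lies in \(L^1\) by hypothesis.

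Second, I will evaluate the moments. Because \(x^m g\in L^1(\mathbb R)\) for every \(m\), dominated convergence (with majorant \(|x|^m|g(x)|\)) shows that \(\widehat g\) is \(C^\infty\) on all of \(\mathbb R\). It also justifies differentiating under the integral sign, giving
\begin{align*}
    \widehat g^{(m)}(y)=\int_{\mathbb R}(-ix)^m g(x)e^{-ixy}\,\dd x .
\end{align*}
On \([0,\infty)\) the function \(\widehat g\) coincides with \(e^{-y}\). Hence the right derivatives at \(y=0\) equal \((-1)^m\). Since \(\widehat g\) is smooth, these right derivatives coincide with the two-sided derivatives \(\widehat g^{(m)}(0)\). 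This yields
\begin{align*}
    (-i)^m\int_{\mathbb R} x^m g(x)\,\dd x=(-1)^m,
    \qquad\text{so}\qquad
    \int_{\mathbb R} x^m g(x)\,\dd x=\frac{(-1)^m}{(-i)^m}=(-i)^m .
\end{align*}

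Third, substituting this into the first step gives \((l|F^m|r)=i^m(-i)^m=1\) for all \(m\geq0\). The case \(m=0\) is the normalization \(\int_{\mathbb R}g=1\), equivalently \(\alpha_{\phi,\psi}\braket{\phi|\psi}=1\).

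The main obstacle is the boundary issue at \(y=0\). The identity~\eqref{eq:lchs-fourier-property} holds only for \(y\geq0\), so a priori it determines only one-sided derivatives there. I will resolve this exactly as above: the moment hypothesis gives global smoothness of \(\widehat g\), and the one-sided derivatives of a smooth function agree with its ordinary derivatives. A secondary point is to state explicitly that the domain assumption makes \(\bra{\phi}\hat x^m\ket{\psi}\) coincide with the position-space integral. I do not need any convergence of the moment series \(\sum_m t^m(l|F^m|r)/m!\). That convergence belongs to the exact dilation theorem of~\cite{xm61-ytf7}, whereas the proposition asserts only the algebraic conditions.
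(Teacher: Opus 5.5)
Your proposal is correct and follows the paper's own proof. In both, the factorization reduces $(l|F^m|r)$ to $i^m\int_{\mathbb R}x^m g(x)\,\dd x$, and differentiating the Fourier identity $m$ times at $y=0$ gives $\int_{\mathbb R}(-ix)^m g(x)\,\dd x=(-1)^m$, hence $(l|F^m|r)=1$. Your added observation that the moment hypothesis makes $\widehat g$ smooth on all of $\mathbb R$ usefully tightens the paper's terse phrase ``differentiating from the right at $y=0$'': it means the one-sided derivatives there agree with the two-sided ones.
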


\begin{proof}
Using Eq.~\eqref{eq:physical-kernel-factorization},
\begin{align*}
    (l|F^m|r) = \alpha_{\phi,\psi} \bra{\phi}(i\hat{x})^m\ket{\psi} = \int_{\mathbb R} g(x)(ix)^m\,\dd x.
\end{align*}
Differentiating Eq.~\eqref{eq:lchs-fourier-property} \(m\) times from the right at \(y=0\) gives
\begin{align*}
    \int_{\mathbb R}g(x)(-ix)^m\,\dd x = (-1)^m.
\end{align*}
It follows that
\begin{align*}
    \int_{\mathbb R} g(x)(ix)^m\, \dd x = 1,
\end{align*}
and hence \((l|F^m|r)=1\) for every \(m\geq0\), which proves the stated algebraic moment conditions.
\end{proof}
A normalized physical factorization always exists when \(g\in L^1(\mathbb R)\). For example, setting
\begin{align*}
    G := \int_{\mathbb R}|g(x)|\,\dd x,\,\ \phi(x) := \sqrt{\frac{|g(x)|}{G}}, \,\
    \psi(x) := e^{i\arg g(x)} \sqrt{\frac{|g(x)|}{G}},
\end{align*}
gives normalized states satisfying
\(
    g(x)=G\phi^*(x)\psi(x),
\)
so that \(\alpha_{\phi,\psi}=G\).

The exact moment conditions apply to the ideal continuous-variable construction. In the finite-resource implementation, the squeezed-vacuum postselection state \(\ket{\phi_r}\), finite squeezed-Fock kernel state \(\ket{\psi_N}\), and truncated quadrature operator \(\hat{x}_{N_{\rm Fock}}\) generally satisfy the conditions only approximately. Defining
\(
    \alpha_{N,r} := \frac{1}{\braket{\phi_r|\psi_N}},
\)
ensures the zeroth moment and the identity at \(t=0\). The higher-order moment defects are
\begin{align*}
    \delta_m^{(N,r)} &:= \alpha_{N,r} \bra{\phi_r} (i\hat{x}_{N_{\rm Fock}})^m \ket{\psi_N} -1, \qquad m \geq 1.
\end{align*}
Equivalently, the finite-resource oracle may be viewed as an approximate physical realization of the ideal moment-matching triple. Our operator-norm error analysis in Section~\ref{ss:postselection} controls the resulting approximation without requiring exact matching of every finite-dimensional moment.

\subsection{Proof of Theorem~\ref{thm:CV--DV-lchs}}
\label{proof:CV--DV-lchs}

Since $A=L+iH$, the solution of the initial-value problem is
\begin{align*}
    \ket{u(t)} = e^{-At}\ket{u_0} = e^{-t(L+iH)}\ket{u_0}.
\end{align*}
By the exact-LCHS hypothesis in Eq.~\eqref{eq:assumed-exact-lchs-identity},
\begin{align*}
e^{-t(L+iH)} = \int_{\mathbb R} g(x)e^{-it(xL+H)} \,\dd x, \qquad t\geq0.
\end{align*}
Now expand the oscillator states in the position basis:
\begin{align*}
    \ket{\psi}_{\rm osc}
    = \int_{\mathbb R}\psi(x)\ket{x}\,\dd x,
    \qquad
    \bra{\phi}_{\rm osc}
    = \int_{\mathbb R}\phi^*(x)\bra{x}\,\dd x .
\end{align*}
The position operator has the spectral decomposition
\begin{align*}
    \hat{x} = \int_{\mathbb R} x\ket{x}\bra{x} \,\dd x .
\end{align*}
Therefore, for any qubit-register state $\ket{v}$,
\begin{align*}
    \left( \hat{x}\otimes L+\mathbb I_{\rm osc}\otimes H \right)
    \left( \ket{x}\otimes\ket{v} \right)
    = \ket{x}\otimes (xL+H)\ket{v}.
\end{align*}
Thus the joint Hamiltonian is diagonal with respect to the oscillator position basis. Hence, by functional calculus,
\begin{align*}
    e^{-it(\hat{x}\otimes L+\mathbb I_{\rm osc}\otimes H)} \left( \ket{x}\otimes\ket{v} \right)
    = \ket{x}\otimes e^{-it(xL+H)}\ket{v}.
\end{align*}
Applying this identity gives
\begin{align*}
    e^{-it(\hat{x}\otimes L+\mathbb I_{\rm osc}\otimes H)}
    \left( \ket{\psi}_{\rm osc}\otimes\mathbb{I}_q \right)
    = \int_{\mathbb R} \psi(x) \ket{x}\otimes e^{-it(xL+H)} \,\dd x .
\end{align*}
Projecting the oscillator register onto \(\bra{\phi}_{\rm osc}\) gives
\begin{align*}
    \mathcal B_g(t) 
    &= (\bra{\phi}_{\rm osc}\otimes\mathbb I_q) e^{-it(\hat{x}\otimes L+\mathbb I_{\rm osc}\otimes H)} (\ket{\psi}_{\rm osc}\otimes\mathbb I_q)\\
    &= \int_{\mathbb R}\phi^*(x)\psi(x)e^{-it(xL+H)}\,\dd x\\
    &= \frac{1}{\alpha_g}\int_{\mathbb R}g(x)e^{-it(xL+H)}\,\dd x\\
    &= \frac{e^{-At}}{\alpha_g}.
\end{align*}
Thus,
\begin{align*}
    \mathcal K_g(t) := \alpha_g\mathcal B_g(t) = e^{-At}.
\end{align*}
Applying this operator identity to an arbitrary initial state
\(\ket{u_0}\in\mathbb C^D\) gives
\begin{align*}
    \mathcal K_g(t)\ket{u_0} = e^{-At}\ket{u_0} = \ket{u(t)},
\end{align*}
which proves the claimed kernel-modular continuous--discrete variable LCHS representation.

\subsection{Analytical Values of the Unnormalized Expansion Coefficients \texorpdfstring{$\widetilde C_n$}{Cn} at Limit Values of \texorpdfstring{$\beta$}{beta}}
\label{ss:analytic}
The unnormalized coefficients are defined by the coefficient-generating pairing of the formal signal with the squeezed Hermite basis in the $\hbar=2$ convention:

\begin{align*}
    \widetilde C_n(\beta) = \mathcal{K}_n \int_{-\infty}^{\infty} \frac{H_n\!\left(x/(\sqrt{2}\sigma')\right)e^{-\gamma x^2}} {2\pi e^{-2^\beta}e^{(1+ix)^\beta}(1-ix)} \,\dd x .
\end{align*}

Here, $\mathcal{K}_n=\sqrt{\frac{\sigma}{\sigma'}}\frac{1}{\sqrt{2^n n!}}$ and $\gamma=\frac14(\sigma'^{-2}-\sigma^{-2})$. We denote the limit-point functions by $\widetilde C_n^{(0)}$ and $\widetilde C_n^{(1)}$, corresponding to $\beta=0$ and $\beta=1$, respectively. The normalized physical coefficients $C_n$ are obtained from $\widetilde C_n$ by the finite-$N$ normalization defined in Section~\ref{ss:state-prep}.

We employ an explicit polynomial decomposition of the Hermite--Lorentzian kernel. Define
\begin{align*}
    Q_{n-1}(x) =\frac{H_n\!\left(x/(\sqrt{2}\sigma')\right)-H_n\!\left(-i/(\sqrt{2}\sigma')\right)}{x+i}.
\end{align*}
This isolates the singular contribution of the complex pole at $x=-i$.

At $\beta=0$, the exact closed form is
\begin{align*}
    \widetilde C_n^{(0)}
    =\mathcal{K}_n\left[\frac12H_n\!\left(-\frac{i}{\sqrt{2}\sigma'}\right)e^\gamma \operatorname{erfc}(\sqrt\gamma) +\frac{i}{2\pi}\int_{-\infty}^{\infty}Q_{n-1}(x)e^{-\gamma x^2}\,\dd x
\right].
\end{align*}

For $\beta=1$, completing the square in $-\gamma x^2-ix$ gives
\begin{align*}
    \widetilde C_n^{(1)}
    = \mathcal{K}_n\left[ \frac12H_n\!\left(-\frac{i}{\sqrt{2}\sigma'}\right)e^\gamma \operatorname{erfc}\!\left(\sqrt\gamma-\frac{1}{2\sqrt\gamma}\right) 
    + \frac{ie}{2\pi}\int_{-\infty}^{\infty}Q_{n-1}(x)e^{-\gamma x^2-ix}\,\dd x \right].
\end{align*}
The transition from $\beta=0$ to $\beta=1$ shifts the error-function argument by $(2\sqrt\gamma)^{-1}$.

The analytical results were compared against direct numerical quadrature using a double-exponential integration scheme with a working precision of 30 decimal places. For the parameter set $\sigma' = 1.2$ and $\sigma = 2.0$ ($\gamma = 1/9$), the agreement is presented in the Table~\ref{tab:lower_bounds} and Table~\ref{tab:upper_bounds}.

\begin{table}[h!]
\centering
\caption{Comparison of Numerical and Analytical Coefficients for the Lower Bound ($\beta = 0, \gamma = 1/9$)}
\label{tab:lower_bounds}
\begin{tabular}{rll}
\hline
$n$ & Numerical Result & Analytical (Explicit) \\
\hline
0 & 0.4597572626 & 0.4597572626 \\
1 & $0.5273259733i$ & $0.5273259733i$ \\
2 & $-0.0143676684$ & $-0.0143676684$ \\
3 & $0.3661994074i$ & $0.3661994074i$ \\
\hline
\end{tabular}
\end{table}

\begin{table}[h!]
\centering
\caption{Comparison of Numerical and Analytical Coefficients for the Upper Bound ($\beta = 1, \gamma = 1/9$)}
\label{tab:upper_bounds}
\begin{tabular}{rll}
\hline
$n$ & Numerical Result & Analytical (Explicit) \\
\hline
0 & 1.3713254670 & 1.3713254670 \\
1 & $-0.8819209173i$ & $-0.8819209173i$ \\
2 & $-0.7976665975$ & $-0.7976665975$ \\
3 & $-0.1673816287i$ & $-0.1673816287i$ \\
\hline
\end{tabular}
\end{table}

As demonstrated in Tables~\ref{tab:lower_bounds} and~\ref{tab:upper_bounds}, the explicit decomposition agrees with direct high-precision quadrature to all ten displayed digits.

\subsection{Proof of Theorem~\ref{thm:state-prep}}
\label{proof:state-prep}

Since the ordinary Hermite functions form a complete orthonormal basis of $L^2(\mathbb R)$, and the dilation $x=\sqrt{2}\,\sigma' y$ is unitary on $L^2(\mathbb R)$ after the normalization factor $1/\sqrt{\sqrt2\,\sigma'}$, the squeezed Fock functions also form a complete orthonormal basis of $L^2(\mathbb R)$. Therefore,
\begin{align*}
    \lim_{N\rightarrow\infty} \|\psi_\infty-\Pi^F_{N,r'}\psi_\infty\|_2=0,
\end{align*}
proving the existence of a cutoff $N$ for any prescribed $\epsilon_s$.

Under the change of variables $x=\sqrt{2}\,\sigma' y$, projection onto $\mathbb H_{N,r'}$ is equivalent to ordinary Hermite-function projection of $F_{r'}(y) = \sqrt{\sqrt2\,\sigma'}\, \psi_\infty(\sqrt2\,\sigma' y)$ onto $\mathrm{span}\{h_0,\ldots,h_{N-1}\}$. If $F_{r'}\in\mathcal S(\mathbb R)$, the Hermite projection estimate of~\cite[Chapter~III]{lubich2008quantum}, also recalled in~\cite[Remark 3.10]{wang2025}, gives, for every $s\leq N$,
\begin{align*}
    \|F_{r'}-\Pi_N^H F_{r'}\|_2  \leq  \frac{\|\mathcal{A}^sF_{r'}\|_2} {\sqrt{N(N-1)\cdots(N-s+1)}},
\end{align*}
where $\Pi_N^H$ denotes projection onto $\mathrm{span}\{h_0,\ldots,h_{N-1}\}$. The Hermite projection error computed for the rescaled function $F_{r'}(y)$ has the same $L^2$ norm as the squeezed-Fock projection error in the original variable $x$, which proves part (i). For fixed $s$, the denominator is $\Theta(N^{s/2})$, giving the stated $\mathcal O(N^{-s/2})$ rate.

For part~(ii), let
\begin{align*}
    F_{r'}(y)  = \sum_{n=0}^{\infty}  d_n^{(r')}h_n(y),
    \qquad   d_n^{(r')} = \langle h_n,F_{r'}\rangle,
\end{align*}
be the ordinary Hermite-function expansion of the rescaled target. By the Hermite-coefficient characterization of the symmetric Gelfand--Shilov space \(S_\mu^\mu(\mathbb R)\)~\cite{vanEijndhoven1987GS,lozanovperisic2007hermite}, the assumption \(F_{r'}\in S_\mu^\mu(\mathbb R)\) implies that there exists a constant \(\tau_{\mu,r'}>0\) such that
\begin{align}
    M_{\mu,r'}
    := \sum_{n=0}^{\infty}  |d_n^{(r')}|^2  \exp\!\left[  2\tau_{\mu,r'}n^{1/(2\mu)}  \right] <\infty . \label{eq:GS-weighted-Hermite}
\end{align}
Since \(\Pi_N^H F_{r'}=\sum_{n=0}^{N-1}d_n^{(r')}h_n\), orthonormality gives
\begin{align*}
    \|F_{r'}-\Pi_N^H F_{r'}\|_2^2
    = \sum_{n=N}^{\infty}|d_n^{(r')}|^2
    = \sum_{n=N}^{\infty}  |d_n^{(r')}|^2  e^{2\tau_{\mu,r'}n^{1/(2\mu)}}  e^{-2\tau_{\mu,r'}n^{1/(2\mu)}}.
\end{align*}
Because \(n\mapsto n^{1/(2\mu)}\) is increasing, for every \(n\geq N\),
\(
    e^{-2\tau_{\mu,r'}n^{1/(2\mu)}}
    \leq
    e^{-2\tau_{\mu,r'}N^{1/(2\mu)}}.
\)
It follows from~\eqref{eq:GS-weighted-Hermite} that
\begin{align*}
    \|F_{r'}-\Pi_N^H F_{r'}\|_2^2
    \leq  e^{-2\tau_{\mu,r'}N^{1/(2\mu)}}
    \sum_{n=N}^{\infty} |d_n^{(r')}|^2  e^{2\tau_{\mu,r'}n^{1/(2\mu)}}
    \leq M_{\mu,r'} e^{-2\tau_{\mu,r'}N^{1/(2\mu)}}.
\end{align*}
Taking the square root yields
\begin{align*}
    \|F_{r'}-\Pi_N^H F_{r'}\|_2
    \leq C_{\mu,r'}\exp\!\left[  -\tau_{\mu,r'}N^{1/(2\mu)}  \right],
    \qquad C_{\mu,r'}:=M_{\mu,r'}^{1/2}.
\end{align*}
The unitary coordinate dilation relating \(F_{r'}\) to \(\psi_\infty\) preserves the \(L^2\) norm and maps ordinary Hermite projection to squeezed-Fock projection. So,
\begin{align*}
    \|\psi_\infty-\psi_{N,r'}\|_2
    \leq  C_{\mu,r'}  \exp\!\left[  -\tau_{\mu,r'}N^{1/(2\mu)}  \right].
\end{align*}
To ensure that this error is at most \(\epsilon_s\), it is sufficient to choose
\begin{align*}
    N \geq  \left[ \frac{1}{\tau_{\mu,r'}}\log\!\left(  \frac{C_{\mu,r'}}{\epsilon_s}   \right)  \right]^{2\mu}.
\end{align*}
Hence, for fixed \(r'\) and \(\mu\),
\begin{align*}
    N = \mathcal O\!\left( \log^{2\mu}\frac{1}{\epsilon_s}  \right).
\end{align*}
This proves part~(ii).

For part (iii), the stated weighted growth and boundary-integrability assumptions are exactly the hypotheses needed to apply Theorem 3.8(ii) in~\cite{wang2025} to the rescaled target $F_{r'}$ in the strip $|\operatorname{Im}(y)|<\frac{\rho}{\sqrt2\,\sigma'}$. This gives
\begin{align*}
    \|F_{r'}-\Pi_N^H F_{r'}\|_2
    \leq
    C_{r'} \exp\!\left[ -\frac{\rho}{\sqrt2\,\sigma'} \sqrt{2(N-1)} \right],
\end{align*}
where $C_{r'}$ is independent of $N$. Returning to the $x$ variable again preserves the $L^2$ norm, so the same bound holds for $\|\psi_\infty-\psi_{N,r'}\|_2$. Solving the right-hand side for $\epsilon_s$ gives
\begin{align*}
    N=\mathcal O\!\left(e^{2r'}\log^2\frac{1}{\epsilon_s}\right),
\end{align*}
completing the proof.

\begin{lemma}[Gelfand--Shilov regularity of the LCHS kernel]
\label{lem:kernel-GS}
Let $g_\beta$ be the LCHS kernel from Eq. \eqref{eq:kernel_def}. Then $g_\beta\in S_{1/\beta}^{1/\beta}(\mathbb R)$. Moreover, for every fixed squeezing parameter \(r'\), the normalized and rescaled target
\begin{align*}
    F_{r'}(y) = \sqrt{\sqrt{2}\sigma'}\, \psi_\infty(\sqrt{2}\sigma' y),  \qquad  \sigma'=e^{r'},
\end{align*}
also belongs to \(S_{1/\beta}^{1/\beta}(\mathbb R)\).
\end{lemma}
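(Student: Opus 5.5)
The plan is to verify membership in \(S_{1/\beta}^{1/\beta}(\mathbb R)\) through the standard Fourier-side characterization of the symmetric Gelfand--Shilov class, rather than through the derivative bounds of Theorem~\ref{thm:state-prep}(ii) directly. For \(\mu=\nu\ge \tfrac12\), a function \(f\) lies in \(S_\mu^\mu\) if and only if there are constants \(C,a,b>0\) with
\begin{align*}
|f(x)|\le Ce^{-a|x|^{1/\mu}}, \qquad |\widehat f(\xi)|\le Ce^{-b|\xi|^{1/\mu}}
\end{align*}
(Chung--Chung--Kim characterization). With \(\mu=1/\beta\), it therefore suffices to prove two bounds: stretched-exponential decay \(e^{-a|x|^\beta}\) of \(g_\beta\), and the corresponding decay of \(\widehat g_\beta\). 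The second statement of the lemma then follows because \(S_\mu^\mu\) is invariant under multiplication by constants (the factor \(\mathcal N\)) and under dilations \(y\mapsto\sqrt2\sigma' y\), which only rescale \(a\) and \(b\).

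\textbf{Step 1 (real-axis decay).} With the principal branch, write
\begin{align*}
\operatorname{Re}(1+ix)^\beta = |1+ix|^\beta\cos(\beta\arctan x).
\end{align*}
Since \(|\arctan x|<\pi/2\) and \(0<\beta<1\), this is at least \(\cos(\beta\pi/2)\,|x|^\beta\), and \(\cos(\beta\pi/2)>0\). Together with \(|1-ix|^{-1}\le1\), this gives
\begin{align*}
|g_\beta(x)|\le \frac{e^{2^\beta}}{2\pi}\, e^{-\cos(\beta\pi/2)|x|^\beta}.
\end{align*}

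\textbf{Step 2 (Fourier side, \(y\ge0\)).} By the scalar consistency relation of Theorem~\ref{thm:CV--DV-lchs}, \(\widehat g_\beta(y)=e^{-y}\) for \(y\ge0\). This is exponential decay, which is stronger than needed, since \(|y|^\beta\le 1+|y|\).

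\textbf{Step 3 (Fourier side, \(y<0\)).} This is the step I expect to be the main obstacle. The factor \(e^{-izy}\) decays in the upper half-plane when \(y<0\). In that half-plane \(g_\beta\) is analytic except on the branch cut \(\{z=i(1+t):t\ge0\}\), where \(1+iz\le0\); the pole at \(z=-i\) lies in the lower half-plane and plays no role. I would deform the real-line contour to the horizontal line \(\operatorname{Im}z=\eta\) together with a keyhole wrapped around the segment of the cut between \(i\) and \(i\eta\).

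\emph{Vertical sides at \(\operatorname{Re}z=\pm X\).} For fixed \(\eta\) and large \(|x|\), \(\arg(1+iz)\to\pm\pi/2\), so the Step 1 estimate persists uniformly in the strip and these contributions vanish as \(X\to\infty\).

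\emph{Horizontal line.} Using \(|e^{-(1+iz)^\beta}|\le e^{C(1+|z|)^\beta}\) near the imaginary axis and the Step 1 decay for large \(|x|\), its contribution is bounded by \(e^{\eta y+C\eta^\beta}\times(\text{polynomial in }\eta)\). This tends to \(0\) as \(\eta\to\infty\) because \(\beta<1\) and \(y<0\).

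\emph{Keyhole.} What remains is the integral of the jump of \(g_\beta\) across the cut against \(e^{(1+t)y}\), \(t\ge0\). The jump grows at most like \(e^{C(1+t)^\beta}\). Since \(y<0\) means \(e^{(1+t)y}=e^{-(1+t)|y|}\), for \(|y|\) large this integral is \(\le C'e^{-|y|/2}\). Combined with the continuity of \(\widehat g_\beta\) (from \(g_\beta\in L^1\)) to cover bounded \(|y|\), this yields the required bound \(|\widehat g_\beta(y)|\le Ce^{-b|y|^\beta}\).

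The delicate points are the uniform estimates of \(\operatorname{Re}(1+iz)^\beta\) near the cut, and checking that the growth \(e^{C|z|^\beta}\) is integrable on the keyhole and is beaten by \(e^{-|y|\operatorname{Im}z}\). If the contour argument becomes awkward, I would instead bound the derivatives directly, estimating \(\|x^pg_\beta^{(q)}\|_2\) via Cauchy estimates on discs of radius \(\asymp(1+|x|)\) inside the sector where \(\operatorname{Re}(1+iz)^\beta\gtrsim|z|^\beta\). This gives \(q!\,(1+|x|)^{-q}e^{-c|x|^\beta}\), and optimizing over \(|x|\) produces the \((p!)^{1/\beta}\) factor, with the \(q\) side handled the same way. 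Either route concludes \(g_\beta\in S_{1/\beta}^{1/\beta}\), and then \(F_{r'}\) follows by the dilation and scaling invariance noted above.
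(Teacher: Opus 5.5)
Your argument is correct, but it takes a different route from the paper's proof.

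\textbf{Your route.} You use the Chung--Chung--Kim Fourier characterization. This reduces the lemma to two pointwise estimates: the real-axis decay $|g_\beta(x)|\lesssim e^{-\cos(\beta\pi/2)|x|^\beta}$, and decay of $\widehat g_\beta$. You get the second from the LCHS identity for $y\ge0$, and from a keyhole around the cut $\{i(1+t):t\ge0\}$ for $y<0$.

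\textbf{The paper's route.} The paper never passes to the Fourier side. It pushes your Step~1 bound into the strip $|\operatorname{Im}z|\le\delta<1$, where $1+iz$ keeps positive real part. Cauchy estimates on circles of fixed radius $\delta/2$ then give $|g_\beta^{(q)}(k)|\le CB^q q!\,e^{-a|k|^\beta}$. A Gamma-function integral with Stirling turns the weight $|k|^{2p}e^{-2a|k|^\beta}$ into $(p!)^{1/\beta}$, and $q!$ is weakened to $(q!)^{1/\beta}$. The step to $F_{r'}$ is an explicit change of variables.

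\textbf{Trade-off.} The paper's argument is self-contained and checks the $L^2$ form of the defining estimate exactly as stated in Theorem~\ref{thm:state-prep}(ii). Yours imports the characterization theorem. It also implicitly uses the equivalence between that $L^2$ form and the sup-norm form the characterization is stated for. In return, your reduction makes invariance under constant multiples and dilations transparent.

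\textbf{Step~3 can be simplified.} The decay bound holds uniformly on the lines $\operatorname{Im}z=\pm\delta$. So you can shift the real contour to $\operatorname{Im}z=\delta$ for $y<0$, and to $\operatorname{Im}z=-\delta$ for $y>0$. This gives $|\widehat g_\beta(y)|\le C_\delta e^{-\delta|y|}$ for both signs at once, with no keyhole and no appeal to the LCHS identity. Your keyhole argument is still valid: $\beta<1$ makes the top-line term $e^{\eta y+C\eta^\beta}$ vanish, and the jump across the cut grows only like $e^{Ct^\beta}$.

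Both routes actually give Gevrey order~$1$ in the derivative index, which is stronger than the lemma needs.
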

\begin{proof}
Using the principal branch of \(z^\beta\), the function \(g_\beta\) is holomorphic in the open strip $|\operatorname{Im}z|<1$. Fix \(0<\delta<1\), and write \(z=x+iy\) with \(|y|\leq\delta\). Then
\begin{align*}
    1+iz=(1-y)+ix, \qquad 1-iz=(1+y)-ix.
\end{align*}
Since \(1-y>0\), one has $|\arg(1+iz)|<\pi/2$.
Therefore,
\begin{align*}
    \operatorname{Re}\!\left[(1+iz)^\beta\right] 
    = |1+iz|^\beta  \cos\!\left(\beta\arg(1+iz)\right) \geq \cos\!\left(\frac{\beta\pi}{2}\right)|x|^\beta .
\end{align*}
Moreover,
\begin{align*}
    |1-iz| =  |(1+y)-ix| \geq  1-\delta .
\end{align*}
Hence, after absorbing the fixed numerator and denominator factors into the constant, there exist
\(C_{\beta,\delta},a_{\beta,\delta}>0\) such that
\begin{align}
    |g_\beta(z)|
    \leq C_{\beta,\delta} \exp\!\left[ -a_{\beta,\delta}|x|^\beta \right],
    \qquad |\operatorname{Im}z|\leq\delta . \label{eq:kernel-strip-decay}
\end{align}

Set \(\rho=\delta/2\). For every real \(k\), Cauchy's derivative estimate on the circle \(|z-k|=\rho\) gives
\begin{align*}
    |g_\beta^{(q)}(k)|
    \leq q!\,\rho^{-q} \sup_{|z-k|=\rho}|g_\beta(z)|.
\end{align*}
On this circle,
\begin{align*}
    |\operatorname{Re}z| \geq (|k|-\rho)_+ .
\end{align*}
For \(|k|\geq2\rho\), this implies $ |\operatorname{Re}z|\geq|k|/2$. The bounded region \(|k|<2\rho\) can be absorbed into the
prefactor. Consequently, after decreasing \(a_{\beta,\delta}\) and increasing \(C_{\beta,\delta}\) if needed,
\begin{align}
    |g_\beta^{(q)}(k)|  \leq C_{\beta,\delta} B_\delta^q q!\,  \exp\!\left[ -a_{\beta,\delta}|k|^\beta \right],
    \qquad B_\delta:=\frac{2}{\delta}. \label{eq:kernel-derivative-decay}
\end{align}
Consequently,
\begin{align*}
    \left\|   k^p\frac{\dd^q}{\dd k^q}g_\beta  \right\|_2
    &\leq  C_{\beta,\delta}B_\delta^q q!
    \left(  \int_{\mathbb R}  |k|^{2p}e^{-2a_{\beta,\delta}|k|^\beta}  \,\dd k  \right)^{1/2}.
\end{align*}
The integral can be expressed using the Gamma function as
\begin{align*}
    \int_0^\infty
    k^{2p}e^{-2a_{\beta,\delta}k^\beta}\,\dd k
    = \frac{1}{\beta}   (2a_{\beta,\delta})^{-(2p+1)/\beta}  \Gamma\!\left(\frac{2p+1}{\beta}\right).
\end{align*}
Stirling's estimate then implies
\begin{align*}
    \left\|  k^p\frac{\dd^q}{\dd k^q}g_\beta  \right\|_2
    \leq  K_{\beta,\delta}  A_{\beta,\delta}^{p}  B_\delta^q  (p!)^{1/\beta}q!.
\end{align*}
Since \(0<\beta<1\), one has \(q!\leq(q!)^{1/\beta}\), and hence
\begin{align*}
    \left\|  k^p\frac{\dd^q}{\dd k^q}g_\beta  \right\|_2
    \leq  K_{\beta,\delta}  D_{\beta,\delta}^{p+q}  (p!)^{1/\beta}(q!)^{1/\beta}.
\end{align*}
This is the defining factorial estimate for \(g_\beta\in S_{1/\beta}^{1/\beta}(\mathbb R)\).

It remains to verify that multiplication by the normalization constant and an invertible coordinate dilation preserve the required Gelfand--Shilov estimate. Let
\(
    a_{r'}:=\sqrt{2}\sigma'>0.
\)
Since
\(
    F_{r'}(y) = \mathcal N a_{r'}^{1/2}g_\beta(a_{r'}y),
\)
a change of variables gives, for all \(p,q\in\mathbb N_0\),
\begin{align*}
    \left\|  y^p\frac{\dd^q}{\dd y^q}F_{r'}  \right\|_2
    &= |\mathcal N|a_{r'}^{q-p} \left\| k^p\frac{\dd^q}{\dd k^q}g_\beta \right\|_2 .
\end{align*}
Furthermore,
\begin{align*}
    a_{r'}^{q-p} \leq \max\{a_{r'},a_{r'}^{-1}\}^{p+q}.
\end{align*}
Combining this inequality with the preceding factorial estimate yields
\begin{align*}
    \left\| y^p\frac{\dd^q}{\dd y^q}F_{r'} \right\|_2
    &\leq |\mathcal N|K_{\beta,\delta} \left[ D_{\beta,\delta} \max\{a_{r'},a_{r'}^{-1}\}  \right]^{p+q}  (p!)^{1/\beta}(q!)^{1/\beta}.
\end{align*}
Thus, defining
\begin{align*}
    K_{\beta,r'}:=|\mathcal N|K_{\beta,\delta}, \qquad D_{\beta,r'} := D_{\beta,\delta}\max\{a_{r'},a_{r'}^{-1}\},
\end{align*}
we obtain
\begin{align*}
    \left\| y^p\frac{\dd^q}{\dd y^q}F_{r'}  \right\|_2
    \leq K_{\beta,r'} D_{\beta,r'}^{p+q} (p!)^{1/\beta}(q!)^{1/\beta}.
\end{align*}
Thus, \(F_{r'}\in S_{1/\beta}^{1/\beta}(\mathbb R)\) for every fixed \(r'\).
\end{proof}

\subsection{Proof of Corollary~\ref{cor:near-optimal-kernel}}
\label{proof:corollary}
\begin{proof}
By Lemma~\ref{lem:kernel-GS}, the rescaled target \(F_{r'}\) belongs to the symmetric Gelfand--Shilov class \(S_{1/\beta}^{1/\beta}(\mathbb R)\). Applying Theorem~\ref{thm:state-prep}(ii) with \(\mu=\frac{1}{\beta}>\frac12\) gives
\begin{align*}
    \left\|  \psi_\infty-\Pi^F_{N,r'}\psi_\infty  \right\|_2
    \leq  C_{\beta,r'}  \exp\!\left[   -c_{\beta,r'}N^{1/(2\mu)}  \right].
\end{align*}
Since $\frac{1}{2\mu}=\frac{\beta}{2}$, this proves Eq.~\eqref{eq:kernel-GS-projection-bound}. Solving the bound for \(N\) gives the stated cutoff scaling. 
\end{proof}

\subsection{Proof of Theorem~\ref{thm:trotter}}
\label{proof:trotter}
    Let $A_i \equiv \alpha_i\hat{x}_{N_{\rm Fock}}\otimes P_i,\,\ B_j\equiv \beta_j \mathbb I_{\rm osc}\otimes Q_j$. 
    Define the set of implemented Hamiltonian summands $\mathcal G \equiv \{A_i\}_{i=1}^{N_L}\cup\{B_j\}_{j=1}^{N_H}$.
    For a word $(G_1,\ldots,G_{p+1})$ with $G_k\in\mathcal G$, define
    \begin{align*}
    C(G_1,\ldots,G_{p+1})
    \equiv
    [G_{p+1},[G_p,\cdots,[G_2,G_1]\cdots]].
    \end{align*}
    By the commutator-scaling bound for a $p^{\rm th}$-order Suzuki--Trotter formula~\cite{Childs_2021}, the simulation error satisfies
    \begin{align*}
        \left\|
        e^{-iH_{{\rm hyb},N_{\rm Fock}}t} -
        \left[S_p\!\left(\frac{t}{n_t}\right)\right]^{n_t}
        \right\| =
        \mathcal O\left(
        \frac{t^{p+1}}{n_t^p}
        \sum_{G_1,\ldots,G_{p+1}\in\mathcal G}
        \left\|
        [G_{p+1},[G_p,\cdots,[G_2,G_1]\cdots]]
        \right\|
        \right).
    \end{align*}
    Partition the nested-commutator sum into three disjoint classes:
    \begin{enumerate}
        \item mixed commutators containing at least one $A_i$ and at least one $B_j$;
        \begin{align*}
            \Gamma_p^{\rm outer}
            &\equiv
            \sum_{\substack{
            G_1,\ldots,G_{p+1}\in\mathcal G\\
            \text{at least one }G_k\in\{A_i\}_i\\
            \text{and at least one }G_k\in\{B_j\}_j
            }}
            \|C(G_1,\ldots,G_{p+1})\|
        \end{align*}
        \item pure $A_i$ commutators;
        \begin{align*}
            \Gamma_p^L
            &\equiv
            \sum_{i_1,\ldots,i_{p+1}}
            \left\|
            C(A_{i_1},\ldots,A_{i_{p+1}})
            \right\|
        \end{align*}
        \item pure $B_j$ commutators.
        \begin{align*}
            \Gamma_p^H
            &\equiv
            \sum_{j_1,\ldots,j_{p+1}}
            \left\|
            C(B_{j_1},\ldots,B_{j_{p+1}})
            \right\|.
        \end{align*}
    \end{enumerate}
    For a mixed nested commutator containing exactly $a$ terms of type $A_i$ and $p+1-a$ terms of type $B_j$, submultiplicativity of the operator norm implies
    \begin{align*}
        \|C(G_1,\ldots,G_{p+1})\|
        \leq \|\hat x_{N_{\rm Fock}}\|^a
        \|C(R_1,\ldots,R_{p+1})\|,
    \end{align*}
    where each $R_k$ is the corresponding Pauli-level operator drawn from $\{\alpha_iP_i\}_i\cup\{\beta_jQ_j\}_j$. Summing over all such mixed words gives
        $\Gamma_p^{\rm outer}
        \leq \sum_{a=1}^p
        \|\hat x_{N_{\rm Fock}}\|^a
        \Gamma_{p,a}^{(L,H)}$. For the pure $L$ sector, every summand is of type $A_i=\alpha_i\hat x_{N_{\rm Fock}}\otimes P_i$, so every $(p+1)$-fold nested commutator contributes at most $p+1$ factors of $\hat x$, yielding $\Gamma_p^L \leq \|\hat x_{N_{\rm Fock}}\|^{p+1} \Gamma_p^{(L)}$.

    For the pure $H$ sector, each summand is $B_j=\beta_j \mathbb I_{\rm osc}\otimes Q_j$, so no oscillator factor appears, and $\Gamma_p^H = \Gamma_p^{(H)}$. Combining these contributions gives
    \begin{align*}
        \left\|
        e^{-iH_{{\rm hyb},N_{\rm Fock}}t} -
        \left[S_p\!\left(\frac{t}{n_t}\right)\right]^{n_t}
        \right\| =
        \mathcal O\left[
        \frac{t^{p+1}}{{n_t}^p}
        \left( \sum_{a=1}^{p}
        \|\hat x_{N_{\rm Fock}}\|^a\Gamma_{p,a}^{(L,H)} +
        \|\hat x_{N_{\rm Fock}}\|^{p+1}\Gamma_p^{(L)} +
        \Gamma_p^{(H)}
        \right)\right].
    \end{align*}
    To ensure the simulation error is at most $\epsilon_t$, it is sufficient that
    \begin{align*}
        \frac{t^{p+1}}{{n_t}^p}
        \left( \sum_{a=1}^{p}
        \|\hat x_{N_{\rm Fock}}\|^a\Gamma_{p,a}^{(L,H)} +
        \|\hat x_{N_{\rm Fock}}\|^{p+1}\Gamma_p^{(L)} +
        \Gamma_p^{(H)}
        \right) \leq \epsilon_t.
    \end{align*}
    Solving for ${n_t}$ and using Eq.~\eqref{eq:Lambda_p_definition} gives
    \begin{align*}
        {n_t} = \mathcal O\left[ t^{1+1/p}
        \left( \frac{ \displaystyle
        \sum_{a=1}^{p}2^a
        (N_{\rm Fock}-1)^{a/2}\Gamma_{p,a}^{(L,H)} +
        2^{p+1}(N_{\rm Fock}-1)^{(p+1)/2}\Gamma_p^{(L)} +
        \Gamma_p^{(H)} }{ \epsilon_t } \right)^{1/p} \right],
    \end{align*}
    which proves the Trotter-step scaling.
    
    For gate complexity, each Pauli exponential of weight $w(P_i)$ or $w(Q_j)$ requires
        $\mathcal O\!\left(2\max(w-1,0)\right)$
    entangling gates. Since each Trotter step contains $m_p$ Suzuki layers, the CX count scales as
    \begin{align*}
        N_{\rm CX} = \mathcal O\left( 2{n_t}m_p
        \left[ \sum_{i=1}^{N_L}\max(w(P_i)-1,0) + \sum_{j=1}^{N_H}\max(w(Q_j)-1,0) \right] \right).
    \end{align*}
    
    Finally, each $L$-sector term requires one hybrid oscillator-qubit interaction per Suzuki layer, implying
    $N_{\rm hyb} = \mathcal O(n_tm_pN_L)$. Thus the theorem follows.

\subsection{Proof of Theorem~\ref{thm:postselection-success}}
\label{proof:postselection-success}

\begin{proof}
Let
\begin{align*}
    E(t) = \alpha_{N,r} \widetilde{\mathcal K}_{N,N_{\rm Fock},n_t}(t) - e^{-At}.
\end{align*}
Then
\begin{align*}
    p_{\rm succ}^{(N,N_{\rm Fock},n_t)} - p_{\rm ref}^{(N,r)}
    =\frac{\left\|e^{-At}\ket{u_0}+E(t)\ket{u_0}\right\|_2^2 - \left\|e^{-At}\ket{u_0}\right\|_2^2}{|\alpha_{N,r}|^2}.
\end{align*}
The Cauchy--Schwarz inequality, \(\|\ket{u_0}\|_2=1\), and \(\|E(t)\|_2=\epsilon_{\rm tot}\) give
\begin{align*}
    \left|p_{\rm succ}^{(N,N_{\rm Fock},n_t)}- p_{\rm ref}^{(N,r)}\right|
    \leq \frac{2\left\|e^{-At}\ket{u_0}\right\|_2\epsilon_{\rm tot}+\epsilon_{\rm tot}^2}{|\alpha_{N,r}|^2},
\end{align*}
which proves Eq.~\eqref{eq:psucc-perturbation}.
\end{proof}

\subsection{Formalism for Inhomogeneous Time-Independent Solution}
\label{ss:inh}

We give a formal operator-level construction for the time-independent inhomogeneous setting. Finite-resource encoding, oscillator-state synthesis, success-probability analysis, and numerical validation remain future work. Let $A(t):=A=L+iH$, take a constant source $b(t):=b\neq0$ with normalized $\ket{b}$, and define the elapsed time $\tau:=t-s$. Equation~\eqref{eq:LCHS-sol-inhomo} then gives
\begin{align}
    u_{\rm inh}(t)
    := \int_0^t e^{-A\tau} b \, \dd \tau
    = \int_\Rbb \int_\Rbb g(k)\, \mathbf{1}_{[0,t]}(\tau)\, e^{-i \tau (kL+H)} b \, \dd k \, \dd \tau,
    \label{eq:LCHS-inhomo-cv}
\end{align}
where $\mathbf{1}_{[0,t]}$ is the indicator function of the interval $[0,t]$. To embed Eq.~\eqref{eq:LCHS-inhomo-cv} into a hybrid oscillator-qubit system, introduce a second oscillator encoding the elapsed time, with generalized position eigenbases $\{\ket{k}\}_{k\in\Rbb}$ and $\{\ket{\tau}\}_{\tau\in\Rbb}$ and position operators $\hat k$ and $\hat\tau$, and let the qubit register carry the source state $\ket{b}$. Choose normalized oscillator states $\ket{\psi_K},\ket{\phi_K}$ and $\ket{\psi_\tau^{(t)}},\ket{\phi_\tau^{(t)}}$, together with known scalars $\alpha_K$ and $\alpha_\tau(t)$, such that
\begin{align}
    \alpha_K\,\phi_K^*(k)\psi_K(k) = g(k), \qquad
    \alpha_\tau(t)\,\phi_\tau^{(t)*}(\tau)\psi_\tau^{(t)}(\tau) = \mathbf{1}_{[0,t]}(\tau). \label{eq:inhomo-kernel}
\end{align}
The scales $\alpha_K$ and $\alpha_\tau(t)$ play the role of $\alpha_g$ in Theorem~\ref{thm:CV--DV-lchs}. For normalized states, the Cauchy--Schwarz inequality gives $|\int\phi^*\psi|\leq1$. Together with $\int_\Rbb\mathbf{1}_{[0,t]}(\tau)\,\dd\tau=t$ and $\int_\Rbb g(k)\,\dd k=1$, this implies $|\alpha_\tau(t)|\geq t$ and $|\alpha_K|\geq1$. The normalization constraints generally preclude setting both scales to one, and both must be tracked classically.
Define the joint preparation state and postselection map
\begin{align*}
    \ket{\Psi_{\rm inh}(t)} := \ket{\psi_K}\otimes \ket{\psi_\tau^{(t)}} \otimes \ket{b}, \qquad
    \bra{\Phi_{\rm inh}(t)} := \bra{\phi_K}\otimes \bra{\phi_\tau^{(t)}} \otimes \mathbb I_q,
\end{align*}
and the hybrid unitary
\begin{align*}
    W_{\rm inh} := \exp\!\left[-i\left(\hat{k}\otimes \hat{\tau}\otimes L + I_K \otimes \hat{\tau}\otimes H\right)\right],
\end{align*}
which becomes a well-defined finite-dimensional unitary on truncated oscillator spaces. Its generator is diagonal in the two oscillator position variables and acts as $\tau(kL+H)$ on the qubit register, so the functional-calculus argument of Appendix~\ref{proof:CV--DV-lchs} yields
\begin{align*}
    \alpha_K\,\alpha_\tau(t)\,\bra{\Phi_{\rm inh}(t)} W_{\rm inh}\ket{\Psi_{\rm inh}(t)}
    = \int_0^t \int_{\Rbb} g(k)\,e^{-i\tau(kL+H)}\ket{b} \, \dd k \, \dd \tau
    = \ket{u_{\rm inh}(t)}.
\end{align*}
A postselected implementation would recover $\ket{u_{\rm inh}(t)}$ up to the known scale $[\alpha_K\alpha_\tau(t)]^{-1}$, whose magnitude enters the success probability in the same way as the inverse-overlap scale of Section~\ref{ss:postselection}.

Since $\mathbf{1}_{[0,t]}$ is not analytic, a smooth alternative is its Gaussian mollification. For $\varrho>0$ and $h_\varrho(x):=\frac{1}{\sqrt{\pi}\,\varrho}e^{-x^2/\varrho^2}$, the convolution $w_{\varrho,t}:=\mathbf{1}_{[0,t]}*h_\varrho$ evaluates to
\begin{align*}
    w_{\varrho,t}(\tau) = \frac12\left[\mathrm{erf}\!\left(\frac{\tau}{\varrho}\right)-\mathrm{erf}\!\left(\frac{\tau-t}{\varrho}\right)\right], \qquad
    \|w_{\varrho,t}-\mathbf{1}_{[0,t]}\|_{L^1(\mathbb R)} \leq \frac{2\varrho}{\sqrt{\pi}},
\end{align*}
where $\mathrm{erf}(x)=\frac{2}{\sqrt{\pi}}\int_0^x e^{-y^2}\,\dd y$ and the $L^1$ bound follows from translating $\mathbf{1}_{[0,t]}$ under the unit-mass mollifier $h_\varrho$. Replacing the indicator by $w_{\varrho,t}$ in the second factorization of Eq.~\eqref{eq:inhomo-kernel} and using the unitarity of each $e^{-i\tau(kL+H)}$ gives
\begin{align*}
    \|u_{\rm inh,\varrho}(t)-u_{\rm inh}(t)\|
    \leq \|b\|\,\|g\|_{L^1(\mathbb R)}\,\|w_{\varrho,t}-\mathbf{1}_{[0,t]}\|_{L^1(\mathbb R)}
    \leq \frac{2\varrho}{\sqrt{\pi}}\, \|b\|\,\|g\|_{L^1(\mathbb R)}.
\end{align*}

Two oscillators realize the formal continuous linear combination of the unitaries $e^{-i\tau(kL+H)}$ over the LCHS variable $k$ and elapsed time $\tau$. The first oscillator encodes $k$. Relative to the homogeneous construction, the second oscillator encodes the time-window weight $\mathbf{1}_{[0,t]}$ or its mollification $w_{\varrho,t}$. For a time-dependent source $b(s)$, a source-time entangled state replaces the fixed qubit input $\ket{b}$ and introduces an additional source-state preparation oracle in the qubit space.

\subsection{Gate Counts for \texorpdfstring{$n_t$}{nt} First-order Trotterization Blocks in \texorpdfstring{$d$}{d}-dimensional Heat Equation}
\label{ss:heat-count}
Consider the Pauli decomposition of $T^{(bc)}_m$ for the 1-dimensional heat equation
\begin{align*}
    T^{(bc)}_m = \sum_{\ell=1}^{N_L^{(bc)}(m)} c_\ell^{(bc)} P_\ell^{(bc)}.
\end{align*}
For one first-order factor $e^{-i\theta\hat x\otimes P_\ell^{(bc)}}$, an identity string contributes one hybrid gate, a non-identity string contributes one hybrid gate together with a parity ladder of $2(w(P_\ell^{(bc)})-1)$ CXs, and the basis changes contribute $2$ one-qubit gates for each $X$ and $4$ for each $Y$. Thus, we have
\begin{align*}
    N_{\rm hyb,1}^{(bc)}(m)
    &= N_L^{(bc)}(m), \\
    N_{\rm CX,1}^{(bc)}(m)
    &= 2\sum_{\ell=1}^{N_L^{(bc)}(m)} \max\!\bigl(w(P_\ell^{(bc)})-1,0\bigr), \\
    N_{1q,1}^{(bc)}(m)
    &= \sum_{\ell=1}^{N_L^{(bc)}(m)}\left[2\,n_X(P_\ell^{(bc)})+4\,n_Y(P_\ell^{(bc)})\right],
\end{align*}
where $n_X(P)$ and $n_Y(P)$ denote the numbers of $X$ and $Y$ in Pauli string $P$. Using the explicit decompositions in Section~\ref{1d-heat}, the Dirichlet decomposition contains one identity term and, for each $c$, exactly $2^c$ Pauli strings of weight $c+1$, while for the one-qubit gates the $c=0$ term contributes $2$ basis-change gates and each block $R_c^{(m)}$ with $c\geq 1$ contributes $3(c+1)2^c$. Evaluating the finite sums gives
\begin{align*}
    N_{\rm hyb,1}^{(D)}(m) &= 2^m = M, & N_{\rm CX,1}^{(D)}(m) &= 2\left[(m-2)2^m + 2\right], & N_{1q,1}^{(D)}(m) &= 3(m-1)2^m+2.
\end{align*}
For the Neumann boundary condition, the correction of Section~\ref{1d-heat-bc} adds $2^{m-1}-1$ non-identity $Z$-type strings and no basis-change gates, so
\begin{align*}
    N_{\rm hyb,1}^{(\rm N)}(m) &= 3\cdot 2^{m-1}-1, & N_{\rm CX,1}^{(\rm N)}(m) &= 2^{m-1}(5m-10)+6, & N_{1q,1}^{(\rm N)}(m) &= 3(m-1)2^m+2.
\end{align*}
For the periodic boundary condition, the simplified decomposition keeps the Dirichlet blocks $R_0^{(m)},\dots,R_{m-2}^{(m)}$ and replaces the last block by $2^{m-2}$ weight-$m$ strings, so
\begin{align*}
    N_{\rm hyb,1}^{(P)}(m) &= 3\cdot 2^{m-2}, & N_{\rm CX,1}^{(P)}(m) &= 2^{m-1}(3m-7)+4,\\
    N_{1q,1}^{(P)}(m) &= \begin{cases} 6, & m=2,\\ 2+(9m-13)2^{m-2}, & m\geq 3. \end{cases}
\end{align*}

For $d$ dimensions and $n_t$ product-formula steps, the coordinate factors commute, so the total counts are sums of the one-dimensional costs,
\begin{align*}
    N_{\rm hyb}^{(d)} = n_t\sum_{i=1}^d N_{\rm hyb,1}^{(bc_i)}(m_i), \qquad
    N_{\rm CX}^{(d)} = n_t\sum_{i=1}^d N_{\rm CX,1}^{(bc_i)}(m_i), \qquad
    N_{1q}^{(d)} = n_t\sum_{i=1}^d N_{1q,1}^{(bc_i)}(m_i),
\end{align*}
as summarized in Table~\ref{tab:heat-gate-scaling}. 

Exact commutation of the coordinate factors on disjoint qubit subregisters makes the $d$-dimensional split error-free. The approximation comes from the one-dimensional product formula within each factor. A telescoping argument over the commuting coordinate factors gives the additive budget \(\epsilon_{\rm heat}^{(d)} \leq \sum_{i=1}^d \epsilon_{\rm heat}^{(i)}\), where each \(\epsilon_{\rm heat}^{(i)}\) is estimated by Eq.~\eqref{eq:heat-trotter-error} with the grid spacing \(h_i\) and register size \(m_i\) of axis \(i\).

\subsection{The Huang--An kernel at a grid-selected operating point}\label{app:ha-retune}

We use a two-stage adaptive scan to quantify how much of the Huang--An kernel's sampling penalty arises from the point selected for \(g_\beta\) in the shared-point comparison of Table~\ref{tab:ha-comparison}. The coarse stage uses spacing \(0.1\) over \(r\in[1.0,2.0]\) and \(r'\in[0.1,1.0]\) to locate the region of interest. The fine stage evaluates both kernels on the common grid \(r\in[0.60,1.90]\), \(r'\in[0,0.35]\) with spacing \(0.05\) under the constraint \(r>r'\), at fixed kernel shapes, \(N=32\), \(N_{\rm Fock}=64\), and \(T=1\). Among completed grid points whose worst-boundary exact-map error is at most \(0.7\%\), the selection rule used for \(\theta^\star\) chooses the largest worst-boundary benchmark success probability. Ties favor the smaller error. 
Adaptive coefficient quadrature failed to reach its tolerance at \(28\) of the \(216\) fine-grid points for \(g_\beta\) and at \(14\) points for Huang--An. These points are excluded. The selected points are the best observed among completed evaluations. Global optimality remains undetermined.
For \(g_\beta\) the fine grid selects \((1.55,\,0.25)\), adjacent to \(\theta^\star\) and within a few percent of it in both selection quantities, so the main-text operating point is kept unchanged.

For the Huang--An kernel the rule selects \((r,r')=(0.95,\,0)\), interior in \(r\) and at the lower edge \(r'=0\) of the scanned grid. Without a preparation-basis squeeze the kernel is carried entirely by the Fock coefficients, the prepared state has exact support on \(n\leq N-1=31\), and the ordinary-Fock tail \(\tau_{\psi,64}\) vanishes identically, so the embedding bound of Eq.~\eqref{eq:fock-embedding-bound} is exhausted by the postselection state alone. Table~\ref{tab:ha-retune-point} compares the shared and the selected operating points.

\begin{table}[ht!]
\centering
\caption{The Huang--An kernel at the shared operating point of Table~\ref{tab:ha-comparison} and at its grid-selected point \((r,r')=(0.95,\,0)\), same conventions. The tail \(\tau_{\psi,64}=0\) at the selected point is exact, following from the finite Fock support at \(r'=0\).}
\label{tab:ha-retune-point}
{\small
\setlength{\tabcolsep}{5pt}
\begin{tabular}{lcc}
\toprule
 & Shared \((1.60,\,0.25)\) & Selected \((0.95,\,0)\) \\
\midrule
Coefficient-projection tail \(\epsilon_{\rm tr}(32)\) & \(5.11\times10^{-3}\) & \(2.44\times10^{-3}\) \\
Ordinary-Fock tail \(\tau_{\psi,64}\) & \(2.43\times10^{-6}\) & \(0\) \\
Non-Gaussianity \(\delta_{\rm nG}\) & \(0.63\) & \(1.42\) \\
Mean photon number \(\langle\hat n\rangle\) & \(0.49\) & \(1.85\) \\
Overlap scale \(|\alpha_{N,r}|\) & \(2.476\) & \(1.953\) \\
\(\varepsilon_F\) (D/N/P) & \(0.280\%\,/\,0.247\%\,/\,0.094\%\) & \(0.357\%\,/\,0.241\%\,/\,0.182\%\) \\
Benchmark \(p_{\rm succ}\) (D/N/P) & \(2.90\%\,/\,4.54\%\,/\,4.23\%\) & \(4.66\%\,/\,7.29\%\,/\,6.80\%\) \\
\bottomrule
\end{tabular}}
\end{table}

The common local \((r,r')\) scan raises every Huang--An benchmark probability by a factor of about \(1.6\) at the selected point, consistent with the overlap scale falling from \(2.476\) to \(1.953\). The largest of the three fixed-scale errors is \(0.357\%\). The non-Gaussianity and photon number rise to values comparable with those of \(g_\beta\), so the smaller \(\delta_{\rm nG}\) at the shared point is specific to that operating point. At the grid-selected Huang--An point, each benchmark probability remains below half of the corresponding shared-point \(g_\beta\) value in Table~\ref{tab:ha-comparison}, so the accuracy--sampling trade-off identified in Section~\ref{ss:kernel} persists after Huang--An-specific retuning on this grid.

The shared-point and selected-point quantities use a coefficient-integration window that increases with Hermite order and are stable under a further enlargement of the window and tighter quadrature tolerances.

Table~\ref{tab:ha-retune-breadth} reruns the seven breadth instances of Table~\ref{tab:clean_breadth} at the selected point, with generators, inputs, and conventions unchanged. The fixed-scale errors remain below the \(g_\beta\) values of Table~\ref{tab:clean_breadth} in every instance, by factors between \(2.9\) and \(4.5\) on the one-dimensional cases, while the benchmark probabilities are lower by approximately the ratio of the squared overlap scales. The two-dimensional stress case reaches \(1.33\%\) fixed-scale error, against \(7.40\%\) at \(\theta^\star\), at a benchmark probability of \(0.42\%\). These calculations use the Huang--An shape \((m,\delta)=(1,2)\) fixed from the source's total-complexity study and the \((r,r')\) point selected on the one-dimensional \(M=4\) benchmarks. The interpretation of the \(4\times4\) instance in Section~\ref{sec:clean_experiments} as a stress test of a shared parameter set is unchanged.

\begin{table}[ht!]
\centering
\caption{Breadth instances of Table~\ref{tab:clean_breadth} evaluated with the Huang--An kernel at its selected point \((r,r')=(0.95,\,0)\), full circuit-level maps with Law--Eberly preparation and \(n_t=100\). Columns follow Table~\ref{tab:clean_breadth}, and \(p_{\rm succ}\) gives the benchmark-input value with the range over computational-basis inputs in parentheses.}
\label{tab:ha-retune-breadth}
{\small
\setlength{\tabcolsep}{5pt}
\begin{tabular}{lccccc}
\toprule
Instance & \(D\) & \(\varepsilon_F\) & worst \(1-F_{\rm cond}\) & \(p_{\rm succ}\) [\%] & \(\epsilon_t\) \\
\midrule
1D heat, \(M=8\), Dirichlet & 8 & \(0.32\%\) & \(1.02\times10^{-5}\) & \(4.74\) (\(2.35\)--\(5.43\)) & \(8.34\times10^{-4}\) \\
1D heat, \(M=16\), Dirichlet & 16 & \(0.33\%\) & \(1.27\times10^{-5}\) & \(4.74\) (\(2.35\)--\(5.44\)) & \(9.56\times10^{-4}\) \\
1D heat, \(M=32\), Dirichlet & 32 & \(0.33\%\) & \(1.25\times10^{-5}\) & \(4.74\) (\(2.35\)--\(5.44\)) & \(9.83\times10^{-4}\) \\
AD, \(M=8\), Dirichlet & 8 & \(0.40\%\) & \(1.79\times10^{-5}\) & \(5.28\) (\(1.20\)--\(5.44\)) & \(1.16\times10^{-3}\) \\
AD, \(M=16\), Dirichlet & 16 & \(0.39\%\) & \(3.04\times10^{-5}\) & \(5.28\) (\(1.20\)--\(5.44\)) & \(1.21\times10^{-3}\) \\
AD, \(M=32\), Dirichlet & 32 & \(0.39\%\) & \(3.22\times10^{-5}\) & \(5.28\) (\(1.20\)--\(5.44\)) & \(1.21\times10^{-3}\) \\
2D heat, \(4\times4\), Dirichlet & 16 & \(1.33\%\) & \(3.38\times10^{-4}\) & \(0.42\) (\(0.21\)--\(0.83\)) & \(6.93\times10^{-4}\) \\
\bottomrule
\end{tabular}}
\end{table}

\subsection{A common \texorpdfstring{$(r,r')$}{(r,r')} scan on the two-dimensional generator}\label{app:ha-retune-2d}

We repeat the sweep of Appendix~\ref{app:ha-retune} using the \(4\times4\) two-dimensional generator as the selection instance. The first stage uses spacing \(0.1\) over \(r\in[0.6,2.2]\) and \(r'\in[0,1.0]\). The second stage scans the box \(r\in[0.9,1.6]\), \(r'\in[0,0.15]\) at spacing \(0.05\), all at the exact-truncated-map level. Under the error cut of the one-dimensional rule, one Huang--An point is admissible. The lowest observed \(g_\beta\) error on the box is \(1.537\%\), leaving its admissible set empty. The rule selects the sole admissible Huang--An point \((1.15,\,0)\) and no \(g_\beta\) point. Table~\ref{tab:ha-retune-2d-front} reports the error--probability Pareto fronts.

\begin{table}[ht!]
\centering
\caption{Pareto fronts of the fixed-scale error against the benchmark-input success probability for the \(4\times4\) two-dimensional heat instance over the fine box \(r\in[0.9,1.6]\), \(r'\in[0,0.15]\), exact-truncated-map level, \(N=32\), \(N_{\rm Fock}=64\), \(T=1\). The rule selects the sole admissible Huang--An point but no \(g_\beta\) point, so no pair of operating points is adopted.}
\label{tab:ha-retune-2d-front}
{\small
\setlength{\tabcolsep}{5pt}
\begin{tabular}{lccc}
\toprule
Kernel & \((r,\,r')\) & \(\varepsilon_F^{\rm 2D}\) & Benchmark \(p_{\rm succ}\) \\
\midrule
\(g_\beta\) & \((1.30,\,0.05)\) & \(1.537\%\) & \(1.123\%\) \\
\(g_\beta\) & \((1.40,\,0)\) & \(3.103\%\) & \(1.128\%\) \\
\(g_\beta\) & \((1.35,\,0)\) & \(3.135\%\) & \(1.158\%\) \\
\(g_\beta\) & \((1.30,\,0)\) & \(3.324\%\) & \(1.179\%\) \\
\(g_\beta\) & \((1.25,\,0)\) & \(3.649\%\) & \(1.184\%\) \\
\midrule
Huang--An & \((1.15,\,0)\) & \(0.669\%\) & \(0.381\%\) \\
Huang--An & \((1.10,\,0)\) & \(0.703\%\) & \(0.394\%\) \\
Huang--An & \((1.05,\,0)\) & \(0.801\%\) & \(0.406\%\) \\
Huang--An & \((1.00,\,0)\) & \(0.975\%\) & \(0.415\%\) \\
Huang--An & \((0.95,\,0)\) & \(1.263\%\) & \(0.418\%\) \\
\bottomrule
\end{tabular}}
\end{table}

Along each front, the benchmark probability varies by less than ten percent. The error spans factors of \(2.37\) for \(g_\beta\) and \(1.89\) for Huang--An. Across the two probability ranges, the \(g_\beta\) values exceed the Huang--An values by factors from \(2.7\) to \(3.1\). Every Huang--An point and four of the five \(g_\beta\) points lie at \(r'=0\). The minimum-error \(g_\beta\) endpoint uses \(r'=0.05\). Relative to the one-dimensional selections, the \(g_\beta\) front shifts to lower \(r\). The Huang--An front spans \(r=0.95\)--\(1.15\), beginning at its one-dimensional selected value. The two-dimensional generator has a wider spectral range than the one-dimensional selection instances, and the observed retuning shifts are kernel dependent.

\end{document}